\documentclass[aps,pra,twocolumn,superscriptaddress]{revtex4-1}
\usepackage[colorlinks=true, allcolors=blue]{hyperref}
\usepackage{tikz}
\usepackage{amsmath,nccmath}
\usepackage{amssymb}
\usepackage{graphicx}
\usepackage{physics}
\usepackage{makecell}
\usepackage{dsfont}
\usepackage{float}
\usepackage[T1]{fontenc}
\usepackage{booktabs}
\usepackage{amsthm}
\usepackage{mathrsfs}
\usepackage[title]{appendix}

\newcommand{\idd}
{\mathds{1}}
\setlength{\marginparwidth}{1.5cm}

\usepackage{lipsum, babel}
\usetikzlibrary{positioning}
\usetikzlibrary{calc}
\usepackage[normalem]{ulem}
\usepackage{lipsum}
\newtheorem{thm}{Theorem}
\newtheorem{cnj}{Conjecture}
\newtheorem{lem}{Lemma}

\newtheorem{defi}{Definition}
\usepackage{nicefrac}

\usepackage{enumitem}
\setlist{nolistsep}

\begin{document}
\title{Translating Bell Non-Locality to Prepare-and-Measure Scenarios\\ under Dimensional Constraints}
\author{Matilde Baroni}
\email{matilde.baroni@lip6.fr}
\affiliation{Sorbonne Université, CNRS, LIP6, F-75005 Paris, France}
\author{Eleni Diamanti}
\affiliation{Sorbonne Université, CNRS, LIP6, F-75005 Paris, France}
\author{Damian Markham}
\affiliation{Sorbonne Université, CNRS, LIP6, F-75005 Paris, France}
\author{Ivan \v{S}upi\'{c}}
\affiliation{Sorbonne Université, CNRS, LIP6, F-75005 Paris, France}
\affiliation{ Universit\'e Grenoble Alpes, CNRS, Grenoble INP, LIG, 38000 Grenoble, France}

\begin{abstract}
    Understanding the connections between different quantum information protocols has been proven fruitful for both theoretical insights and experimental applications.
    In this work, we explore the relationship between non-local and prepare-and-measure scenarios, proposing a systematic way to translate bipartite Bell inequalities into dimensionally-bounded prepare-and-measure tasks.
    We identify sufficient conditions under which the translation preserves the quantum bound and self-testing properties, enabling a wide range of certification protocols originally developed for the non-local setting to be adapted to the sequential framework of prepare-and-measure with a dimensional bound. While the dimensionality bound is not device-independent, it still is a practical and experimentally reasonable assumption in many cases of interest.
    In some instances, we find new experimentally-friendly certification protocols. In others, we demonstrate equivalences with already known prepare-and-measure protocols, where self-testing results were previously established using alternative mathematical methods.
    Our results unify different quantum correlation frameworks, and contribute to the ongoing research effort of studying the interplay between parallel and sequential protocols.
\end{abstract}

\maketitle

\section{Introduction}

Quantum systems harbour the potential to significantly enhance various facets of information processing. This is evident in quantum key distribution (QKD), where the security of the protocol can be information-theoretic, even in the absence of trust in the devices employed for key generation~\cite{Vazirani_2014}. In fact, the main directions for the development of quantum communication technologies were laid down in two early seminal contributions to QKD, the BB84 protocol~\cite{bb84-a} and the Ekert protocol~\cite{Ekert}. 

Ekert's protocol explicitly capitalizes on Bell non-locality, highlighting the impossibility of simulating certain measurement correlations through locally causal models. Operationally, Bell non-locality, as articulated by Bell in 1964 ~\cite{Bell64}, is evidenced by the violation of Bell inequalities. Since non-locality is manifested in the correlations between measurement outcomes, the observed non-classicality is independent of the specific characteristics of the physical systems involved. This conceptual framework gave rise to the device-independent (DI) paradigm within quantum information processing~\cite{MayersYao,colbeck,pironio2010random}. In the context of DI QKD, the security of the generated key can be assured even in the presence of noisy devices or, remarkably, when there is no trust in the specifications of the devices used.
The DI paradigm has profoundly influenced quantum information processing, while simultaneously presenting challenges to the philosophical implications of quantum theory~\cite{Grinbaum_2017}. Many DI protocols are structured to incorporate the violation of a Bell inequality, necessitating two spatially separated and non-communicating players to share an entangled quantum state for successful protocol implementation. Ensuring these conditions makes the practical realisation of DI protocols a formidable challenge.

While a necessary ingredient for DI quantum protocols is entanglement, the BB84 QKD protocol does not use this resource. The core of the BB84 protocol lies in encoding and exchanging qubits in different bases to enable secure key distribution through quantum channels while detecting potential eavesdropping attempts. Hence, while the players do not need to share any entanglement, they need a quantum channel, and should be able to prepare certain states and to measure them. Motivated by the BB84 protocol, the prepare-and-measure scenario was developed as an alternative allowing for entanglement-free protocols~\cite{WP15,rusca2019quantum,brask2017megahertz}. The prepare-and-measure scenario does not allow for DI conclusions, as it exposes non-classicality of quantum system only if certain assumptions are made about it. The most common assumption is to bound the dimension of the underlying system. The scenario in which the dimension of the underlying Hilbert space is bounded, but preparations and measurements are not trusted, is known as a semi-device-independent scenario (SDI) \cite{Gallego_2010, Paw_owski_2011}.

Despite being relatively simple to implement, prepare-and-measure protocols have received significantly less attention than their entanglement-based counterparts, particularly outside the context of QKD. While numerous tools have been developed to scrutinize different correlation sets in the Bell scenario, along with the significance of extremal correlations and their self-testing properties, a comparable emphasis on correlation sets in prepare-and-measure scenarios is lacking. The primary objective of this work is to highlight the connection between the power of quantum correlations in the device-independent (DI) non-local and the semi-device-independent (SDI) prepare-and-measure scenarios.

The exploration of the equivalence between these two scenarios, particularly in the context of QKD protocols, has been undertaken in prior works such as~\cite{pawlowski2011semi} and~\cite{WP15}. While a plethora of tools and methods have been developed to explore the DI scenario, yielding a wealth of theoretical results, understanding when and how these results can be mapped onto the prepare-and-measure scenario remains a compelling question. In this context, we present a  procedure for translating a Bell inequality into a prepare-and-measure scenario. We discuss the properties that can be translated using such a protocol and the conditions under which such translations hold. This endeavour aims to contribute to a more comprehensive understanding of the relationship between quantum correlations in different bipartite scenarios.

On another level, the temporal structure serves as a distinguishing factor between non-locality-based and prepare-and-measure scenarios. Bell tests can be formulated to enable simultaneous operations by all players, representing a parallel execution. In contrast, prepare-and-measure scenarios exemplify time-ordered or sequential protocols. It is noteworthy that non-classicality can manifest in both parallel and sequential scenarios. For instance, quantum computational advantage has been associated with Kochen-Specker contextuality as a parallel phenomenon~\cite{Howard_2014, Delfosse, Juani}, and also linked to the transformation contextuality as a sequential phenomenon~\cite{mansfield2018quantum}. Despite these insights, a comprehensive understanding of the relationships between these diverse results is yet to be established. Our perspective suggests that a fruitful avenue towards this enhanced comprehension involves identifying rigorous connections between parallel and sequential quantum phenomena. Our work contributes to the exploration of this problem by focusing on non-locality as a representative of parallel quantum phenomena and prepare-and-measure non-classicality as an example of sequential phenomena. 

The paper is organized as follows. In Sec.~\ref{sec:Bell}, we review the key concepts related to Bell scenarios. In Sec.~\ref{sec:PM} we introduce the prepare-and-measure scenario we consider and define theoretical objects, which are used in Sec.~\ref{sec:translating} to establish a certain form of mapping between the two scenarios. Sec.~\ref{sec:examples} presents several examples illustrating how well-known Bell inequalities can be reformulated within the prepare-and-measure framework. These examples demonstrate the versatility and applicability of our translation approach. In Sec.~\ref{sec:app}, we apply our translation method to achieve several results: we demonstrate self-testing of any qubit measurement in the prepare-and-measure scenario, define a symmetric prepare-and-measure scenario by translating the bilocality inequality, and explore how entanglement can be exchanged for communication in various network-device-independent certification protocols. Finally, in Sec.~\ref{sec:concl}, we provide concluding remarks.

\section{Bell scenario}\label{sec:Bell}

In the bipartite Bell scenario two spatially separated observers, denoted as Alice and Bob, share a bipartite quantum state $\ket{\psi}$. Each observer receives a classical input, $x\in\mathcal{X}$ for Alice and $y\in\mathcal{Y}$ for Bob, specifying the measurement choice, and yields a classical output, $a\in\mathcal{A}$ for Alice and $b\in\mathcal{B}$ for Bob, representing the measurement result. The cardinality of all sets is arbitrary but finite. The quantum state and the associated measurements are neither characterized, nor trusted. The measurement output correlations derived from this setup can be expressed through the correlations in joint conditional probabilities
\begin{equation}
    p(a,b|x,y) = \bra{\psi}M_x^a\otimes N_y^b\ket{\psi},
\end{equation}
where $M_x^a$ is Alice's measurement operator corresponding to input $x$ and output $a$, while $N_y^b$ is Bob's measurement corresponding to input $y$ and output $b$. These measurements are modelled in the most general way as positive-operator-valued measures (POVMs): $M_x^a,N_y^b \geq 0$, $\sum_aM_x^a = \idd$, $\sum_bN_y^b = \idd$. The set of conditional probability vectors $\vec{p} = \{p(a,b|x,y)\}_{a,b,x,y}$, denoted by $\mathcal{Q}$, is convex.
Noticeably, this set contains correlations that can be replicated through purely classical means. These correlations are characterised as accommodating a local hidden variable (LHV) model and are commonly referred to as local or classical correlations.
However, quantum theory also presents correlations that defy explanation through LHV models~\cite{Bell64}.  The boundary between quantum and classical correlations is characterized by Bell inequalities.

\begin{defi}\label{BF}
    A Bell functional is any linear combination of the observed probabilities,
    \begin{equation}\label{bi}
    I^{\text{NL}} = \sum_{abxy}k_{xy}^{ab}p(a,b|x,y),
\end{equation} where $k_{xy}^{ab}$ are real coefficients.
\end{defi}
For correlations that can be reproduced by LHV models, the Bell functional cannot be higher than a certain value, known as the local bound. In other words, LHV models satisfy Bell inequalities.

\begin{defi}\label{BellBound}
    The quantum bound of a Bell inequality is the maximal value of the Bell functional over all quantum correlations in $\mathcal{Q}$
    \begin{equation}\label{qb}
    b_q^{\text{NL}} = \max_{\vec{p}\in\mathcal{Q}}I^{\text{NL}}.
\end{equation}
\end{defi}
To analyze these quantum bounds, it is useful to introduce the associated Bell operator.
\begin{defi}\label{BO}
    Given a Bell functional as above, the corresponding Bell operator is defined as
    \begin{equation}\label{eq:BellOperator}
    \mathcal{B}^{\text{NL}} = \sum_{abxy}k_{xy}^{ab}M_{x}^a\otimes N_y^b.
\end{equation}
\end{defi}
The quantum bound $b_q^{\text{NL}}$ is equal to the maximum expectation value of this operator over all normalized bipartite quantum states.

Alice's $d$-output measurements $\{M_{x}^a\}_{a=0}^{d-1}$, usually described as a POVM,  can be equivalently specified by a set of unitary measurement observables $\{A_x^{(0)}=\mathds{1}, A_x^{(1)}, \dots, A_x^{(d-1)}\}$, called observables and defined as 
\begin{equation}\label{eq:observable_d}
    A_x^{(k)} = \sum_{a=0}^{d-1} \omega^{ak} M_{x}^a =  \sum_{a=0}^{d-1} e^{2i \pi \frac{ak}{d}} M_{x}^a, 
\end{equation}
for $k=0,\cdots,d-1$.
Analogously we can define Bob's measurement $\{N_{y}^b\}_{b=0}^{d-1}$ through a similar $d$-tuple:
\begin{equation}
    \label{eq:observable_dBob}
    B_y^{(l)} = \sum_{b=0}^{d-1} \omega^{bl} N_{y}^b =  \sum_{b=0}^{d-1} e^{2i \pi \frac{bl}{d}} N_{y}^b.
\end{equation}
Alice's unitary observables satisfy the following two properties
\begin{subequations}
  \begin{gather}\label{cond1}\left[A_x^{(k)}\right]^\dagger = A_x^{(d-k)}, \\ \label{cond2}\left[A_x^{(k)}\right]^\dagger A_x^{(k)} = \idd,\end{gather}\end{subequations}
for all $k$ and $x$, and equivalently for Bob. 
    
Instead of using the probability vector $\{p(a,b|x,y)\}$, the correlations in the Bell scenario can also be characterised through the so-called correlators.

\begin{defi}
 The quantum correlator is defined as
 \begin{align}\nonumber
      \mathrm{corr^{\text{NL}}}(x,y,k,l) &\equiv \bra{\psi} A_x^{(k)} \otimes  B_y^{(l)} \ket{\psi} \\ \nonumber       &=\sum_{ab} \omega^{ak+bl} \bra{\psi} M_{x}^a \otimes  N_{y}^b \ket{\psi} \\ \label{correlatorprob} &=\sum_{ab} \omega^{ak+bl} p(ab|xy).
\end{align}
\end{defi}

A general Bell operator~\eqref{eq:BellOperator} can be written as a linear combination of unitary operators corresponding to measurement observables: 
\begin{align}\label{eq:NL-BellD}
     \mathcal{B}^{\mathrm{NL}}(A_x^{(k)},B_y^{(l)})  = \sum_{xykl} c_{xy}^{kl} A_x^{(k)} \otimes B_y^{(l)},
\end{align}
with $c_{xy}^{kl}$ complex coefficients.

\textbf{Self-testing. }In some scenarios, the quantum bound of a Bell inequality is achieved by a single quantum strategy, defined by a specific shared state and set of measurements, unique up to local isometries and complex conjugation. When this is the case, the quantum bound is said to self-test the corresponding quantum strategy~\cite{MayersYao,vsupic2020self}. Self-testing serves as a critical primitive for device-independent certification of quantum states and measurements~\cite{govcanin2022sample,dos2024experimental}. This bound is considered a robust self-test if any quantum strategy achieving a violation that is $\epsilon$-far from it must necessarily be $f(\epsilon)$-close to the strategy it self-tests.   

A powerful algebraic tool in the analysis of quantum bounds and self-testing is the sum-of-squares (SOS) decomposition. 

\begin{defi}\label{SOS}
    Given a Bell operator $\mathcal{B}^{\text{NL}}$ with quantum bound $b_q^{\text{NL}}$, a sum-of-squares (SOS) decomposition of the shifted Bell operator is an identity of the form
    \begin{multline}\label{eq:Bell-sos}
    {b}_q\idd - \mathcal{B}^{\mathrm{NL}}(A_x^{(k)},B_y^{(l)}) =  \\ = \sum_m \left[f_m(A_x^{(k)},B_y^{(l)})\right]^\dagger\left[f_m(A_x^{(k)},B_y^{(l)})\right],
\end{multline}
where $f_m$ are non-commutative polynomials of the measurement observables observables.
\end{defi}
Any quantum strategy $\{\ket{\psi},A_x^{(k)},B_y^{(l)}\}$ reaching the quantum bound $b_q$ satisfies
\begin{equation}\label{eq:sosterms}
f_m(A_x^{(k)},B_y^{(l)})\ket{\psi} = 0
\end{equation}
for every $m$. These expressions are commonly used to put constraints on the strategy achieving the maximal violation and reaching self-testing statements.

SOS decompositions are useful for obtaining robust self-testing statements as well. Consider a violation $b_q-\epsilon$; from the SOS decomposition above we derive the following constraints  
\begin{equation}\label{eq:sosterms_robust}
\| f_m(A_x^{(k)},B_y^{(l)})\ket{\psi} \| \leq \epsilon.
\end{equation}
This set of equations characterises the strategies which lead to a violation $\epsilon$-close to the optimal one. 

\section{Prepare-and-measure scenario}\label{sec:PM}
Let us now consider the standard prepare-and-measure scenario shown in Fig. \ref{fig:NLPM}, which also exposes a classical-quantum separation when communication is allowed under some assumptions.

 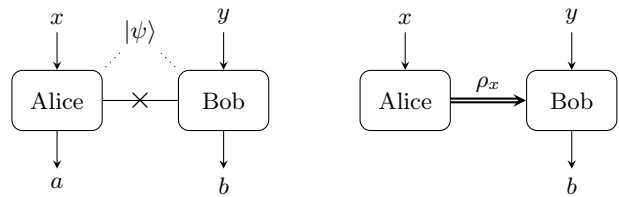
\begin{figure}[htbp]
      \centering
        \begin{center}
\begin{tikzpicture}[
    every node/.style={align=center},    >=stealth
]
    \node[draw, rectangle, rounded corners, minimum width=1.2cm, minimum height=0.8cm] (Alice_NL) {Alice};
    \node[draw, rectangle, rounded corners, minimum width=1.2cm, minimum height=0.8cm, right=1cm of Alice_NL] (Bob_NL) {Bob};
    
    \node[draw, rectangle, rounded corners, minimum width=1.2cm, minimum height=0.8cm, right=1.2cm of Bob_NL] (Alice_PM) {Alice};
    \node[draw, rectangle, rounded corners, minimum width=1.2cm, minimum height=0.8cm, right=1cm of Alice_PM] (Bob_PM) {Bob};

    \draw[->] (Alice_NL.north) ++(0,0.5cm) node[above] {$x$} -- (Alice_NL.north);
    \draw[->] (Bob_NL.north) ++(0,0.5cm) node[above] {$y$} -- (Bob_NL.north);
    
    \draw[->] (Alice_PM.north) ++(0,0.5cm) node[above] {$x$} -- (Alice_PM.north);
    \draw[->] (Bob_PM.north) ++(0,0.5cm) node[above] {$y$} -- (Bob_PM.north);

    \draw[->] (Alice_NL.south) -- ++(0,-0.5cm) node[below] {$a$};
    \draw[->] (Bob_NL.south) -- ++(0,-0.5cm) node[below] {$b$};
    
    \draw[->] (Bob_PM.south) -- ++(0,-0.5cm) node[below] {$b$};

    \draw (Alice_NL.east) -- (Bob_NL.west) node[midway] (X1) {\large $\times$};
    \draw[double,thick, ->] (Alice_PM.east) -- (Bob_PM.west) node[midway, above] {$\rho_x$};

    \node at ($(Alice_NL.north east)!0.5!(Bob_NL.north west)+(0,0.5cm)$) (*) {$\ket{\psi}$};
    \draw[dotted] (*) -- (Alice_NL.north east);
    \draw[dotted] (*) -- (Bob_NL.north west);

\end{tikzpicture}
\end{center}
        \caption{Two standard bipartite correlations structures: the Bell on the left and the prepare-and-measure scenario on the right side. 
        }
        \label{fig:NLPM}
    \end{figure}
    
In this case, Alice receives a classical input $x$, prepares a message and sends it to Bob. This message can be quantum, and is thus described with a density matrix $\rho_x$. After the communication stage has ended, Bob receives an input $y$ and produces an output $b$, which is usually interpreted as a guess of Alice's original input. Bob obtains this output by applying a measurement $N_y^b$. Thus, the scenario is characterised by the following probability distribution
\begin{equation}
    P(b|x,y) = \Tr\left[N_{y}^b\rho_x\right].
\end{equation}
In the remainder of this paper probabilities written with an upper-case letter $P$ will indicate that they are obtained in the prepare-and-measure scenario, while probabilities with a lowercase letter $p$ are those native to the Bell scenario.

If there is no restriction on the communication channel used by Alice and Bob, they can reproduce any desired probability distribution using only classical resources.
A non-trivial separation between classical and quantum correlations can be made only if the communication between Alice and Bob is restricted in some way.
A standard constraint is to upper bound the dimension of the communication channel, the most relevant example being the study of quantum random access codes~\cite{Wiesner,Ambainis}. In this case, it is valid to ask what are the limits of quantum strategies, and whether certain quantum strategies can be self-tested. There has been a significant amount of literature devoted exactly to this question~\cite{TKV+18,FK19,TSV+20,Tav20,MO21,DPV24,DPTV24,NPVA23}. 

However, bounding the dimension of the communication channel is not the only way to introduce communication restrictions in the prepare-and-measure scenario; for other possibilities see~\cite{VanHimbeeck2017semidevice, PPT24}.
Similarly to the nature of communication restrictions, there exist more degrees of flexibility in defining the prepare-and-measure scenario. For instance, Alice's inputs can be divided into two distinct sets, with inputs in each set sampled according to two mutually independent probability distributions, and designated as $x$ and $a$ respectively. In this setup, where Alice prepares states denoted as $\rho_{a,x}$, the examination of preparation contextuality becomes relevant~\cite{Spekkens}. In preparation contextuality, communication remains unbounded, yet the states Alice prepares must conform to the preparation equivalence condition: $\sum_ap(a|x)\rho_{a,x} = \sum_ap(a|x')\rho_{a,x'}$, for every pair of distinct inputs $x$ and $x'$. An alternative variation of the prepare-and-measure scenario grants Alice additional autonomy: upon receiving $x$, she independently selects $a$, without a need to adhere to a specific probability distribution, outputs it, and subsequently transmits the state $\rho_{a,x}$.

Numerous efforts have been made to establish an equivalence between non-locality based and prepare-and-measure protocols. 
For instance, in \cite{WF23} the authors present a one-to-one equivalence, when the states that Alice prepares satisfy specific preparation equivalences.
Furthermore, Catani \emph{et al.} demonstrated in ~\cite{CFE+22} that any non-local XOR game (corresponding to correlation Bell inequalities) can be transformed into a prepare-and-measure game with the same classical and quantum bounds, under the constraint that Alice and Bob exclusively utilise reversible operations.

Our approach builds on this foundation: for a fixed Bell framework, we design a corresponding prepare-and-measure scenario such that the quantum bound of the associated functional and its self-testing properties are preserved.
As we discussed earlier, there is some freedom in defining the prepare-and-measure scenario, which is reflected in different assumptions and properties of the mapping.
Unlike \cite{WF23} and \cite{CFE+22}, our mapping is not bijective but it is one-way, starting with a Bell test and constructing a prepare-and-measure game with desirable properties as the outcome.
This is the interesting direction for us, since our original motivation is to systematically translate self-testing protocols from the well-studied non-local framework to prepare-and-measure experiments.

\begin{figure}[h!]
    \centering
    \begin{center}
\begin{tikzpicture}[
    every node/.style={align=center},    >=stealth
]
    \node[draw, rectangle, rounded corners, minimum width=1.2cm, minimum height=0.8cm] (Alice_NL) {Alice};
    \node[draw, rectangle, rounded corners, minimum width=1.2cm, minimum height=0.8cm, right=1cm of Alice_NL] (Bob_NL) {Bob};
    
    \node[draw, rectangle, rounded corners, minimum width=1.2cm, minimum height=0.8cm, right=1.2cm of Bob_NL] (Alice_PM) {Alice};
    \node[draw, rectangle, rounded corners, minimum width=1.2cm, minimum height=0.8cm, right=1cm of Alice_PM] (Bob_PM) {Bob};

    \draw[->] (Alice_NL.north) ++(0,0.5cm) node[above] {$x$} -- (Alice_NL.north);
    \draw[->] (Bob_NL.north) ++(0,0.5cm) node[above] {$y$} -- (Bob_NL.north);
    
    \draw[->] (Alice_PM.125) ++(0,0.5cm) node[above] {$x$} -- (Alice_PM.125);
    \draw[->] (Alice_PM.50) ++(0,0.5cm) node[above] {$a$} -- (Alice_PM.50);
    \draw[->] (Bob_PM.north) ++(0,0.5cm) node[above] {$y$} -- (Bob_PM.north);

    \draw[->] (Alice_NL.south) -- ++(0,-0.5cm) node[below] {$a$};
    \draw[->] (Bob_NL.south) -- ++(0,-0.5cm) node[below] {$b$};
    
    \draw[->] (Bob_PM.south) -- ++(0,-0.5cm) node[below] {$b$};

    \draw (Alice_NL.east) -- (Bob_NL.west) node[midway] (X1) {\large $\times$};
    \draw[double,thick, ->] (Alice_PM.east) -- (Bob_PM.west) node[midway, above] {dim $d$};

    \draw[ultra thick, ->] (3.2,0) -- (3.7,0);

\end{tikzpicture}
\end{center}
    \caption{On the left, the standard non-local scenario, maximised by a maximally entangled state of dimension $d$. On the right, its translated prepare-and-measure version, where Alice is now receiving two inputs and communicates with Bob through a dimensionally bounded quantum channel.}
    \label{fig:ourmapping}
\end{figure}
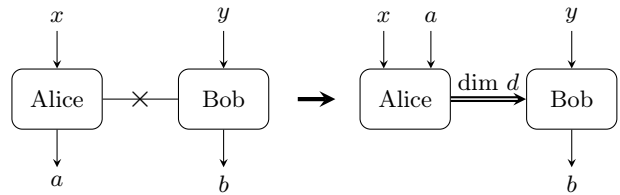

\textbf{The prepare-and-measure scenario in focus. } We now precisely define the prepare-and-measure (PM) scenario we investigate and outline explicitly the mapping procedure that translates certain Bell inequalities into corresponding PM protocols.

Consider a bipartite Bell inequality maximally violated by a maximally entangled state of local dimension $d$ and projective measurements. The scenario originally involves four classical labels: two inputs $x, y$, and two outputs $a, b$. Our PM setup retains exactly these labels and their cardinalities, but adjusts their interpretation: labels $x$ and $a$ now represent input choices for Alice, while Bob maintains his input $y$ and output $b$ (see Fig~\ref{fig:ourmapping} and Table~\ref{tab:notation}). We impose the constraint that Alice’s preparations and Bob’s measurements occur within Hilbert spaces of exactly dimension $d$, corresponding to the minimal dimension necessary to achieve the maximal quantum violation in the original Bell scenario.

To ensure a direct and consistent comparison between the Bell and PM scenarios, we adopt a fixed marginal probability distribution ${P(a|x)}$.  This marginal is chosen to match exactly the distribution $p(a|x)$ arising from the optimal quantum strategy achieving the maximal Bell violation. Since we focus on inequalities maximally violated by maximally entangled states and sharp (projective) measurements, the optimal marginal distributions are uniform. Hence, we fix $P(a|x) = 1/|\mathcal{A}|=1/d$. This standardization ensures a consistent framework for the comparative analysis of non-local and PM scenarios. 

We restrict our analysis to full-correlation Bell inequalities, which involve only joint correlators and exclude marginal terms. This choice is essential for the mapping to PM scenarios, where Alice prepares quantum states, mimicking the effect of her local measurements on an entangled state, and Bob performs measurements on these states. Marginal terms in Bell inequalities correspond to situations where only one party acts (e.g., Bob measuring without an associated preparation from Alice), which has no meaningful counterpart in the PM framework.

\begin{table}[htbp]
\centering
\begin{tabular}{ccc}
\toprule
 & non-local & prepare-and-measure \\
\toprule
Alice's input & $x\in\mathcal{X}$ & $(x,a)\in\mathcal{X} \cross \mathcal{A}$\\
\midrule
Alice's output & $a\in\mathcal{A}$ & $\rho_{a,x}$  \\
\midrule
Bob's input & $y\in\mathcal{Y}$ & $y\in\mathcal{Y}$,  $\rho_{a,x}$  \\
\midrule
Bob's output & $b\in\mathcal{B}$ & $b\in\mathcal{B}$\\
\midrule
shared resource & $\ket{\psi} \in \mathcal{H}^A \otimes \mathcal{H}^B$ & $\{\rho_{a,x}\} \in \mathbb{C}^d$\\
\midrule
Alice's measurements & $\{M_x^a\}$ & -\\
\midrule
Bob's measurements & $\{N_y^b\}$ & $\{N_y^b\}$\\
\bottomrule
\end{tabular}
\caption{Notation for the Bell and prepare-and-measure scenarios. Using the same labels by changing their meaning allows us to compare systematically the two structures.}
\label{tab:notation}
\end{table}

\textbf{Defining the prepare-and-measure functional. } Using Bayes' rule, we define conditional probabilities analogous to those in the Bell scenario:
\begin{align}\label{PMprob}
    P(a,b|x,y) &= {P(b|a,x,y)}{P(a|x)}\\
    &= \frac{1}{|\mathcal{A}|}\tr\left[N_y^b\rho_{a,x}\right],
\end{align}
where $\rho_{a,x}$ denotes the state prepared by Alice given inputs $a$ and $x$, and $N_y^b$ is Bob’s POVM element corresponding to input $y$ and outcome $b$. We define $\mathcal{Q}_{seq}$ to be the set of all quantum correlations that can be decomposed in this way.
In direct analogy to the nonlocal (Bell) scenario, we define PM correlators as follows.
\begin{defi}\label{PMcorr}
    The prepare-and-measure correlators are defined as
    \begin{equation}
    \mathrm{corr^{\text{PM}}}(x,y,k,l) 
= \sum_{ab} \omega^{ak + bl} P(ab|xy),
\end{equation}
where $\omega = \exp{\frac{2i\pi}{d}}$.
\end{defi}

These correlators allow for a direct parallel definition of a PM analogue to the Bell functional.

\begin{defi}\label{pmBF}
    Given a Bell functional $I^{NL}$ defined by coefficients $k_{xy}^{ab}$, the corresponding prepare-and-measure functional is defined as
    \begin{equation}\label{eq:PMfunctional}
    I^{\text{PM}} = \sum_{abxy}k_{xy}^{ab}P(a,b|x,y),
\end{equation}
where $P(a,b|x,y)$ are prepare-and-measure correlation probabilities defined in Eq.~\eqref{PMprob}. Equivalently, expressed via PM correlators (Def.~\ref{PMcorr}) it has the form 
\begin{equation}
     I^{\text{PM}} = \sum_{xykl}c_{xy}^{kl}\mathrm{corr^{\text{PM}}}(x,y,k,l),
\end{equation}
with appropriate coefficients $c_{xy}^{kl}$.
\end{defi}

Analogously to the quantum bound of Bell functionals we can define the quantum bound of associated PM functional:

\begin{defi}\label{boundPM}
   The quantum bound of the PM functional is the maximal value of the PM functional over all quantum correlations $\mathcal{Q}_{seq}$
\begin{align*}
    b_q^{\text{PM}}= \max_{\Vec{P} \in \mathcal{Q}_{seq}} \mathcal{I}^{\text{PM}}.
\end{align*} 
\end{defi}

This quantum bound will be at least as big as the quantum bound of the original Bell inequality. To see that, it suffices to consider Alice preparing the following set of states:
\begin{equation}\label{eq:steered-states}
    \rho_{a,x} = \frac{\text{tr}\left[ (M_x^a \otimes \mathds{1}) \ket{\psi}\bra{\psi}\right]}{P(a|x)}.
\end{equation}
These states trivially satisfy the condition
\begin{equation}\label{eq:no-signalling-set}
    \sum_a P(a|x) \rho_{a,x} = \sigma \qquad \forall x
\end{equation}
which we will refer to as the non-signalling condition; it also corresponds to the preparation equivalence constraint in \cite{WF23}. Sets of states that violate the non-signalling condition lack a Bell ancestor, raising the prospect that the quantum bound of the functional in the prepare-and-measure context might exceed the quantum bound of the corresponding Bell inequality. 

\textbf{Quasi-observables. } To systematically characterize the states Alice prepares, we introduce the concept of quasi-observables.
\begin{defi}\label{def:QO}
    In the context of prepare-and-measure scenarios, quasi-observables are defined by the discrete Fourier transform of Alice’s preparations:
    \begin{equation}
  \tilde{\rho}_x^{(k)} = \sum_{a=0}^{d-1}\omega^{ak}{\rho}_{a,x}.
\end{equation} 
\end{defi}
These quasi-observables relate directly to the PM correlators through the identity
\begin{align}\notag
\mathrm{corr^{\text{PM}}}(x,y,k,l) 
&= \sum_{ab} \omega^{ak + bl} P(ab|xy) \\ \notag
&= \sum_{ab} \omega^{ak + bl} P(a|x)P(b|axy)\\ \notag
&= \sum_{ab} \omega^{ak + bl} P(a|x)\Tr{N_{y}^b \rho_{a,x}}\\
&=   \frac{1}{d}\Tr{B_y^{(l)}\tilde{\rho}_x^{(k)}}.\label{eq:PM-correlator}
\end{align}
This explains their naming convention, highlighting their structural analogy to observables in Bell scenarios. Quasi-observables satisfy a hermiticity condition analogous to Eq.~\eqref{cond1}:
\begin{equation*}
    \left[\tilde{\rho}_x^{(k)}\right]^\dagger = \tilde{\rho}_x^{(d-k)}.
\end{equation*}
However, they do not in general satisfy a full analogue of the condition for unitary observables. Instead, they obey the weaker constraint summarized in the following lemma:
\begin{lem}\label{lemdits}
    For every input $x$, Alice's quasi-observables satisfy:
    \begin{align}\label{eq:weaker-condition-first}
    \text{tr} \left( \sum_{n=1}^{d-1} \rho^{(n)} \rho^{(-n)}\right) \leq d (d-1). 
\end{align}
\end{lem}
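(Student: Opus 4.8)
The plan is to expand the trace $\operatorname{tr}\!\big(\sum_{n=1}^{d-1}\tilde\rho_x^{(n)}\tilde\rho_x^{(-n)}\big)$ in terms of the preparations $\rho_{a,x}$ using the definition of the quasi-observables, and then to exploit two facts: the non-negativity and unit-trace of each $\rho_{a,x}$, and the constraint $P(a|x)=1/d$, i.e. $\operatorname{tr}\rho_{a,x}=1$ for every $a$. First I would write
\begin{equation*}
\sum_{n=1}^{d-1}\tilde\rho_x^{(n)}\tilde\rho_x^{(-n)}
=\sum_{n=1}^{d-1}\sum_{a,a'}\omega^{n(a-a')}\rho_{a,x}\rho_{a',x},
\end{equation*}
and then complete the sum over $n$ to run from $0$ to $d-1$ by adding and subtracting the $n=0$ term. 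The geometric-sum identity $\sum_{n=0}^{d-1}\omega^{n(a-a')}=d\,\delta_{a,a'}$ collapses the full sum to $d\sum_a\rho_{a,x}^2$, while the subtracted $n=0$ term is $\big(\sum_a\rho_{a,x}\big)^2$. Taking the trace gives
\begin{equation*}
\operatorname{tr}\!\Big(\sum_{n=1}^{d-1}\tilde\rho_x^{(n)}\tilde\rho_x^{(-n)}\Big)
=d\sum_a\operatorname{tr}\rho_{a,x}^2-\operatorname{tr}\Big(\sum_a\rho_{a,x}\Big)^2.
\end{equation*}

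The second step is to bound each piece. For the first term, each $\rho_{a,x}$ is a density matrix, so $\operatorname{tr}\rho_{a,x}^2\le (\operatorname{tr}\rho_{a,x})^2=1$, hence $d\sum_a\operatorname{tr}\rho_{a,x}^2\le d\cdot d=d^2$. For the second term, $\sum_a\rho_{a,x}$ is a positive semidefinite operator with trace $\operatorname{tr}\sum_a\rho_{a,x}=\sum_a 1=d$ (here I use $P(a|x)=1/d$ together with normalization of each $\rho_{a,x}$ — note these are the "un-normalized-looking" states of Eq.~\eqref{eq:steered-states} rescaled so each has unit trace), and therefore $\operatorname{tr}\big(\sum_a\rho_{a,x}\big)^2\ge \tfrac1d\big(\operatorname{tr}\sum_a\rho_{a,x}\big)^2=\tfrac1d d^2=d$ by the Cauchy--Schwarz / power-mean inequality $\operatorname{tr}(X^2)\ge (\operatorname{tr}X)^2/\operatorname{rank}(X)\ge (\operatorname{tr}X)^2/d$ for $X\succeq 0$ acting on $\mathbb{C}^d$. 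Combining, the expression is at most $d^2-d=d(d-1)$, which is exactly the claimed bound.

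The main subtlety — and the step I would be most careful about — is the second lower bound: one must be sure the operator $\sum_a\rho_{a,x}$ really lives on a $d$-dimensional space so that $\operatorname{tr}(X^2)\ge(\operatorname{tr}X)^2/d$ applies with the right constant; this is precisely where the dimensional assumption of the PM scenario ($\rho_{a,x}\in\mathbb{C}^d$) is used, and it is also what makes the bound tight. I would also double-check the bookkeeping of the $n=0$ term and the conventions on $\tilde\rho_x^{(-n)}=\tilde\rho_x^{(d-n)}$, since an off-by-one there would change $d(d-1)$ into something else. Finally I would remark that equality is achieved when all $\rho_{a,x}$ are mutually orthogonal pure states (e.g.\ $\rho_{a,x}=\ketbra{a}{a}$ in some basis), which both saturates $\operatorname{tr}\rho_{a,x}^2=1$ and makes $\sum_a\rho_{a,x}=\idd$ so that $\operatorname{tr}(\idd^2)=d$ — consistent with the quasi-observables then coinciding with genuine unitary (clock) observables.
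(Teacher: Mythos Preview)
Your argument is correct and follows the same overall strategy as the paper—expand the quasi-observables via their definition and bound using $\operatorname{tr}\rho_{a,x}^2\le 1$. In fact your treatment is \emph{more} careful than the paper's: the paper applies the identity $\sum_{k=1}^{D}\omega^{kn}=D\,\delta_{n,0}$ directly to the partial sum $\sum_{n=1}^{d-1}\omega^{n(a-a')}$ and writes the result as $(d-1)\,\delta_{a,a'}$, which is not right for $a\neq a'$ (the correct value is $d\,\delta_{a,a'}-1$). You instead complete the sum and subtract the $n=0$ term, obtaining the exact expression $d\sum_a\operatorname{tr}\rho_{a,x}^2-\operatorname{tr}\big(\sum_a\rho_{a,x}\big)^2$, and then use the dimension bound $\operatorname{tr}(X^2)\ge(\operatorname{tr}X)^2/d$ to control the second piece. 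Both routes land on $d(d-1)$, but yours is the one that actually justifies it; the extra Cauchy--Schwarz step is also where the dimension assumption genuinely enters, and your remark on the equality case (orthonormal pure preparations) is a nice bonus the paper does not spell out.
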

The proof is provided in App.~\ref{app:qudits-weakercondition}.

For qubits, quasi-observables simplify significantly. Recall that in the Bloch representation a density matrix can be written as $\rho = (\idd + \Vec{m}\cdot\Vec{\sigma})/2$, where $\Vec{\sigma} = (\sigma_z,\sigma_x,\sigma_y)$ is a vector of Pauli matrices and $\Vec{m}$ is a real vector with norm at most equal to $1$. Using this representation for quasi-observables, by taking $\rho_{a,x} = (\idd + \vec{m}_{a,x}\cdot\vec{\sigma})/2$ we get
\begin{align}\label{eq:fake-observable-2}
    \tilde{\rho}_x =   \frac{\Vec{m}_{0,x}-\Vec{m}_{1,x}}{2}\cdot \Vec{\sigma} = \Vec{v}_x \cdot \Vec{\sigma},
\end{align}
Hence, qubit quasi-observables are Hermitian, and satisfy
\begin{equation}
 \tilde{\rho}_x^\dagger\tilde{\rho}_x = \|\vec{v}_x\|^2\idd  \leq \idd  
\end{equation}
 This characterization is crucial for identifying classes of Bell inequalities whose quantum bounds remain unchanged upon translation to the PM scenario.

Finally, we introduce the PM equivalent of the Bell operator.

\begin{defi}\label{pmBO}
    The PM Bell-like operator corresponding to the PM functional (Def.~\ref{pmBF}) is
    \begin{align}\label{eq:PM-Bell}
    \mathcal{B}^{\text{PM}}\left(\tilde{\rho}_x^{(k)},B_y^{(l)}\right) =\frac{1}{d} \sum_{x,y,k,l} c_{xy}^{kl} \tilde{\rho}_{x}^{(k)} B_y^{(l)}.
\end{align}
\end{defi}

In terms of the PM Bell-like operator the quantum bound of the PM functional is
\begin{align*}
    b_q^{\text{PM}} = \max_{\Tilde{\rho},B} \text{tr} \left[ \mathcal{B}^{\text{PM}}\left(\tilde{\rho}_x^{(k)},B_y^{(l)}\right) \right].
\end{align*} 

The task of comparing the quantum bound of a Bell inequality with that of its prepare-and-measure counterpart is fundamentally translated into assessing the operator norm of the Bell operator~\eqref{eq:NL-BellD} against the trace of its prepare-and-measure analogue~\eqref{eq:PM-Bell}. This entails a comparison between the dominant eigenvalue associated with a tensor product structure and the trace of a dot product of two operators.

\section{Quantum bound of prepare-and-measure functionals}\label{sec:translating}  We now discuss how to translate Bell inequalities maximally violated by maximally entangled states and projective measurements into the prepare-and-measure (PM) scenario. Showing that a translated functional preserves the quantum bound can be significantly facilitated whenever the quantum bound of the original Bell functional can be proven by using the SOS decomposition. Following its definition (Def.~\ref{SOS}), we describe how the same algebraic structure can be transferred to the PM scenario by rearranging the terms in the PM functional to mimic the Bell scenario explicitly.

To relate the quantum bounds in the Bell scenario (Def.~\ref{BellBound}) and the PM scenario (Def.~\ref{boundPM}), we employ the following identity known as the "swap trick", which holds for maximally entangled state $\ket{\phi_+}$
\begin{align}\label{eq:swap-trick}
    \bra{\phi^+} A \otimes B^T \ket{\phi^+}= \frac{1}{d} \text{tr}(A \cdot B)
\end{align}
where $d$ is the Hilbert space dimension. This identity has previously been employed to validate the security of certain prepare-and-measure Quantum Key Distribution (QKD) protocols~\cite{WP15}. 

Using the swap trick, we directly express the trace of the PM Bell operator as the expectation value of a corresponding Bell operator evaluated on the maximally entangled state:

\begin{align}\label{eq:tr-and-maxent}
\tr\left[\mathcal{B}^{\text{PM}}\left(\tilde{\rho}_x^{(k)},B_y^{(l)}\right)\right] = 
     \bra{\phi^+} \mathcal{B}^{\text{NL}}\left(\tilde{\rho}_x^{(k)},B_y^{(l)}\right) \ket{\phi^+}.
\end{align}
Note that the definition of the PM operator (Eq.~\eqref{eq:PM-Bell}) explicitly includes the factor $1/d$, matching exactly the factor arising from the swap trick. Eq.~\eqref{eq:tr-and-maxent} reduces the problem of determining the quantum bound of a PM functional to the simpler task of evaluating the expectation value of a Bell-type operator on a maximally entangled state. However, this PM-related Bell-type operator differs from the original Bell operator, as it uses the quasi-observables $\tilde{\rho}_x^{(k)}$ rather than the unitary observables $A_x^{(k)}$. Thus, calculating the quantum bound requires accounting for differences between these two sets of operators, as discussed following Definition~\ref{def:QO}.

For qubit quasi-observables, due to their particularly simple structure, we can rigorously establish that the quantum bound remains unchanged, as formalized by the following theorem:

\begin{thm}\label{thm}
     Let $I^{NL}$ be a full-correlation Bell functional whose quantum bound $b_q^{NL}$
  is achieved by projective measurements performed on a maximally entangled pair of qubits. Then, the associated prepare-and-measure functional $I^{PM}$ has exactly the same quantum bound, that is, $b_q^{PM} = b_q^{NL}$.
\end{thm}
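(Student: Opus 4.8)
The plan is to reduce the PM quantum bound to a Bell-type optimization over qubit quasi-observables, and then show that the optimum is attained precisely when those quasi-observables become genuine unitary (Pauli-type) observables, at which point the value collapses to $b_q^{NL}$. Two inequalities must be established: $b_q^{PM}\ge b_q^{NL}$ and $b_q^{PM}\le b_q^{NL}$. The first is immediate: the steered states of Eq.~\eqref{eq:steered-states} give a PM strategy whose correlators coincide with those of the optimal Bell strategy, hence $I^{PM}$ attains $b_q^{NL}$. So the work is entirely in the upper bound.

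For the upper bound, I would start from Eq.~\eqref{eq:tr-and-maxent}, which writes $\tr[\mathcal{B}^{PM}(\tilde\rho_x^{(k)},B_y^{(l)})]=\bra{\phi^+}\mathcal{B}^{NL}(\tilde\rho_x^{(k)},B_y^{(l)})\ket{\phi^+}$. Because the inequality is full-correlation and $d=2$, each party has a single nontrivial observable per input: on Bob's side $B_y^{(1)}$ is a genuine unitary $\pm1$-eigenvalue observable, i.e. $B_y=\hat b_y\cdot\vec\sigma$ with $\|\hat b_y\|=1$; on Alice's side the quasi-observable is $\tilde\rho_x=\vec v_x\cdot\vec\sigma$ with $\|\vec v_x\|\le 1$ by Eq.~\eqref{eq:fake-observable-2} and the line following it. The key structural point is that the right-hand side of Eq.~\eqref{eq:tr-and-maxent} is a real-linear functional of the vectors $\vec v_x$ (the coefficients $c_{xy}^{kl}$ and the fixed $B_y$ determine a fixed linear map $\vec v_x \mapsto \bra{\phi^+}\mathcal{B}^{NL}\ket{\phi^+}$). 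A linear functional on the product of unit balls $\{\|\vec v_x\|\le 1\}$ is maximized at an extreme point, i.e. with every $\|\vec v_x\|=1$. But when $\|\vec v_x\|=1$ the quasi-observable $\tilde\rho_x=\vec v_x\cdot\vec\sigma$ is exactly a unitary Hermitian observable with $\tilde\rho_x^\dagger\tilde\rho_x=\idd$, i.e. a legitimate Alice observable $A_x$ in the Bell scenario. Therefore the maximum of the PM expression over all qubit quasi-observables and Bob measurements equals $\max_{A_x,B_y}\bra{\phi^+}\mathcal{B}^{NL}(A_x,B_y)\ket{\phi^+}$, which is at most $b_q^{NL}$ by Definition~\ref{BellBound} (the maximally entangled state is one admissible state). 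Combining with the lower bound gives $b_q^{PM}=b_q^{NL}$.

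I should be careful about two points. First, the extreme-point argument must be applied jointly over all $x$ simultaneously: fixing the $B_y$'s, the functional $(\vec v_{x_1},\dots,\vec v_{x_{|\mathcal X|}})\mapsto \bra{\phi^+}\mathcal{B}^{NL}\ket{\phi^+}$ is jointly real-linear in the block vector, so its maximum over the product of balls is attained at a vertex of that product, namely all $\|\vec v_x\|=1$; this is fine since we are not imposing the non-signalling constraint Eq.~\eqref{eq:no-signalling-set}, which is exactly why one might have feared $b_q^{PM}>b_q^{NL}$. Second, one must check that relaxing Bob's measurements to general POVMs and allowing the underlying Bob Hilbert space to be exactly $\mathbb{C}^2$ does not help: on a qubit, a full-correlation functional only ever sees the single observable $B_y^{(1)}=\sum_b\omega^b N_y^b$, which satisfies $\|B_y^{(1)}\|\le 1$, and absorbing any sub-unit norm into the (already free) $\vec v_x$ scaling shows nothing is lost by taking $B_y$ projective of unit Bloch norm. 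The main obstacle, and the step deserving the most care in the writeup, is precisely this reduction: making rigorous that for $d=2$ and full-correlation inequalities the PM optimization genuinely linearizes in $(\vec v_x)$ with the constraint set being a product of Euclidean balls, so that no extra power is gained relative to the Bell scenario — for $d>2$ the weaker constraint of Lemma~\ref{lemdits} is not an extreme-point/product-of-balls statement, which is exactly why the theorem is restricted to qubits.
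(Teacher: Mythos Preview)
Your argument is correct and takes a genuinely different route from the paper's. The paper invokes the known structure of SOS decompositions for XOR games (citing external work): the shifted Bell operator admits an SOS whose terms depend on Alice's observables only through $A_x$ and on additional positive terms $\idd - A_x^2$ and $\idd - B_y^2$; since the quasi-observables satisfy $\tilde\rho_x^2 \le \idd$, the same decomposition survives as an inequality, yielding $b_q^{NL} - \bra{\phi^+}\mathcal{B}^{NL}(\tilde\rho_x,B_y)\ket{\phi^+} \ge 0$. You instead bypass the SOS machinery entirely: you observe that the PM value is a \emph{linear} functional of the Bloch vectors $(\vec v_x)_x$ (and, after your remark on POVMs, also of the Bloch parts $(\vec b_y)_y$), so its maximum over the product of unit balls is attained at unit vectors, where the quasi-observables become bona fide qubit observables and the value is bounded by $b_q^{NL}$.

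Your approach is more elementary and self-contained (no need to cite the XOR-game SOS structure), and makes transparent exactly why $d=2$ is special: the constraint set on Alice's side is a product of Euclidean balls, whose extreme points coincide with genuine observables. What the paper's SOS route buys, and yours does not directly, is the immediate transfer of self-testing and robustness statements: the vanishing of each $f_m(\tilde\rho_x,B_y)\ket{\phi^+}$ at the optimum (and its $\epsilon$-bound near the optimum) is the engine for the subsequent self-testing examples in the paper. Your extreme-point argument proves the bound cleanly but would need a separate step to recover those constraints. One small wording issue: the phrase ``absorbing any sub-unit norm into the $\vec v_x$ scaling'' is not quite the right picture (different $x$'s would need different rescalings for a fixed $y$); the cleaner statement, which you essentially have, is that the functional is also linear in $\vec b_y$, so the same extreme-point reasoning applies on Bob's side.
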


\begin{proof}
For a full-correlation Bell inequality maximally violated by qubits, it is known that the inequality can be cast as an XOR game. For XOR games, the SOS decomposition has a particular structure~\cite{cui2024computational,baroni2024quantum}: the polynomials \(f_m\) split naturally into two categories. The first set of polynomials, indexed by Alice’s measurements \(x\), take the form \(f_x(A_x,B_y) = A_x - f'_x(B_y)\), where \(f'_x(B_y)\) depend solely on Bob’s observables. The second set involves polynomials only in Bob’s observables. When observables correspond to projective measurements, we have \(A_x^2 = \mathds{1}\) and \(B_y^2 = \mathds{1}\). If the measurements were not projective, the SOS decomposition would have additional positive-semidefinite terms proportional to \(\mathds{1} - A_x^2\) and \(\mathds{1}-B_y^2\). Thus, the original SOS decomposition remains valid, as an inequality, when replacing Alice's projective observables \(A_x\) by quasi-observables \(\tilde{\rho}_x\). Hence, we obtain the inequality:

\begin{equation}\label{dif1}
    b_q^{NL} - \bra{\phi^+} \mathcal{B}^{\text{NL}}\left(\tilde{\rho}_x,B_y\right) \ket{\phi^+} \geq \sum_m\bra{\phi^+}\left[f_m(\tilde{\rho}_x,B_y)\right]^2\ket{\phi^+}.
\end{equation}

Since the right-hand side is non-negative, Eq.~\eqref{eq:tr-and-maxent} implies:
\begin{equation}\label{dif}
    b_q^{NL} -  b_q^{PM} \geq 0.
\end{equation}
On the other hand, by construction, the quantum bound of the PM functional is always at least as large as the quantum bound of the corresponding Bell functional, thus:
\[
    b_q^{PM} \geq b_q^{NL}.
\]

Combining these two inequalities yields the desired equality:
\[
    b_q^{PM} = b_q^{NL},
\]
completing the proof.
\end{proof}

The translation of SOS decompositions from Bell to PM scenarios allows for a direct transfer of self-testing results. Saturation of inequality~\eqref{dif} implies that each squared term in the SOS decomposition vanishes, yielding PM analogues of the conditions given by Eq.~\eqref{eq:sosterms}:
\begin{equation*}
    f_m(\rho_x,B_y)\ket{\phi_+} = 0.
\end{equation*}
These conditions provide a foundation for self-testing statements in the PM scenario. In the next section, we will give concrete examples of such relations inherited from SOS decompositions, and show how they facilitate self-testing in the PM scenario. Furthermore, in all of our examples robustness is also preserved.

In higher dimensions (qudits), no similarly simple characterization of SOS decompositions currently exists. While quasi-observables satisfy the weaker condition established in Lemma~\ref{lemdits}, measurement observables satisfy significantly stronger unitary conditions. Nonetheless, in the next section, we demonstrate through explicit examples of important classes of full-correlation Bell inequalities that the translated SOS decomposition, together with Lemma~\ref{lemdits}, provides enough structure to guarantee preservation of the quantum bound. Given these considerations, we formulate the following conjecture as a natural generalization of Theorem~\ref{thm}:

\begin{cnj}
Consider a full-correlation Bell functional $I^{NL}$ whose quantum bound $b_q^{NL}$ is attained by projective measurements on a maximally entangled pair of qudits. Then, the associated prepare-and-measure functional $I^{PM}$ possesses the same quantum bound $b_q^{PM} = b_q^{NL}$.
\end{cnj}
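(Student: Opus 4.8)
A natural route toward this conjecture would mirror the proof of Theorem~\ref{thm}, replacing at the decisive step the qubit identity $\tilde\rho_x^\dagger\tilde\rho_x\le\idd$ by the weaker --- but, as it turns out, just sufficient --- budget supplied by Lemma~\ref{lemdits}. As in the qubit case, the swap trick~\eqref{eq:tr-and-maxent} turns the computation of $b_q^{\text{PM}}$ into that of $\bra{\phi^+}\mathcal{B}^{\text{NL}}(\tilde\rho_x^{(k)},B_y^{(l)})\ket{\phi^+}$ maximized over admissible quasi-observables and Bob observables, and the reverse inequality $b_q^{\text{PM}}\ge b_q^{\text{NL}}$ is already granted by the steered states of Eq.~\eqref{eq:steered-states}; so it suffices to show $\bra{\phi^+}\mathcal{B}^{\text{NL}}(\tilde\rho_x^{(k)},B_y^{(l)})\ket{\phi^+}\le b_q^{\text{NL}}$ for all admissible data.

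The structural input I would aim to establish is a qudit analogue of the XOR-game SOS of Theorem~\ref{thm}, extending the descriptions in~\cite{cui2024computational,baroni2024quantum}: that the shifted Bell operator of a full-correlation functional saturated by projective measurements on a maximally entangled pair of qudits admits an SOS decomposition whose polynomials are either (i) linear in a single Alice observable, $f_x^{(k)}(A,B)=A_x^{(k)}-g_x^{(k)}(B)$, or (ii) polynomials in Bob's observables only, and --- crucially --- such that, after using $[A_x^{(k)}]^\dagger A_x^{(k)}=\idd$ to collect terms, the bare quadratic $[A_x^{(k)}]^\dagger A_x^{(k)}$ enters with a weight $\lambda_x\ge 0$ that does not depend on $k$. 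For the symmetric families in Sec.~\ref{sec:examples} (and more generally for inequalities tailored to maximally entangled states through mutually unbiased bases) this $k$-uniformity is automatic; it is also the natural place to restrict should the fully general statement fail. Granting it, I would rewrite the decomposition in the form valid for \emph{arbitrary} operators by reinstating the non-negative corrections $\lambda_x\sum_{k=1}^{d-1}\big(\idd-[A_x^{(k)}]^\dagger A_x^{(k)}\big)$ and $\sum_{y,l}\mu_y^{(l)}\big(\idd-[B_y^{(l)}]^\dagger B_y^{(l)}\big)$ with $\mu_y^{(l)}\ge 0$, so that it becomes a genuine polynomial identity.

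Substituting $A_x^{(k)}\mapsto\tilde\rho_x^{(k)}$, taking $\bra{\phi^+}\cdot\ket{\phi^+}$ and applying the swap trick then gives
\begin{multline*}
b_q^{\text{NL}}-\bra{\phi^+}\mathcal{B}^{\text{NL}}(\tilde\rho_x^{(k)},B_y^{(l)})\ket{\phi^+}
=\sum_m\big\|f_m(\tilde\rho_x^{(k)},B_y^{(l)})\ket{\phi^+}\big\|^2\\
+\sum_x\lambda_x\Big(d-1-\tfrac{1}{d}\sum_{k=1}^{d-1}\tr(\tilde\rho_x^{(d-k)}\tilde\rho_x^{(k)})\Big)
+\sum_{y,l}\mu_y^{(l)}\Big(1-\tfrac{1}{d}\tr([B_y^{(l)}]^\dagger B_y^{(l)})\Big).
\end{multline*}
The first sum is manifestly non-negative; the Alice sum equals $\sum_x\lambda_x\big(d-1-\tfrac{1}{d}\tr(\sum_{n=1}^{d-1}\tilde\rho_x^{(n)}\tilde\rho_x^{(-n)})\big)$, which is non-negative by Lemma~\ref{lemdits}; and the Bob sum is non-negative because $\tr([B_y^{(l)}]^\dagger B_y^{(l)})\le d$ for any POVM (since $\sum_{b,b'}\tr(N_y^bN_y^{b'})=d$ with all terms non-negative, so the phased sum is bounded in modulus by $d$). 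Hence $\bra{\phi^+}\mathcal{B}^{\text{NL}}(\tilde\rho_x^{(k)},B_y^{(l)})\ket{\phi^+}\le b_q^{\text{NL}}$, which together with the reverse inequality yields $b_q^{\text{PM}}=b_q^{\text{NL}}$.

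The main obstacle is the structural claim of the second paragraph: for $d\ge 3$ there is no known normal form for SOS certificates of full-correlation Bell inequalities, the Alice-side polynomials produced by the standard SOS hierarchy may contain products of several observables and $k$-dependent weights, and one cannot currently rule out that for some pathological inequality the quasi-observable relaxation strictly increases the bound --- which is precisely why the statement is posed as a conjecture rather than a theorem. I expect the argument above to go through unchanged for every symmetric family of practical interest (generalized and tilted CHSH, CGLMP-type inequalities, the Bell inequalities tailored to maximally entangled states, and the examples of Sec.~\ref{sec:examples}), so that a proof of the full conjecture would either proceed family by family or, more ambitiously, would be accompanied by a general theorem on the shape of SOS decompositions of full-correlation qudit Bell inequalities saturated by maximally entangled states.
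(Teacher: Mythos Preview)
The statement is left as a conjecture in the paper and is \emph{not} proved there; the authors explicitly note that ``in higher dimensions (qudits), no similarly simple characterization of SOS decompositions currently exists,'' and instead verify the claim only for specific families (the generalized CHSH of~\cite{KST+19} and the generalized chained inequality of~\cite{SAT+17}) by checking case by case that the known SOS decomposition, combined with Lemma~\ref{lemdits}, gives the bound.

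Your proposal is therefore not to be compared against a proof but assessed as a proof strategy, and as such it is accurate and well aligned with the paper. The mechanism you isolate --- swap trick plus an SOS whose Alice-side quadratic terms carry a $k$-independent weight, so that after substitution they collapse to $\sum_{n=1}^{d-1}\tilde\rho_x^{(n)}\tilde\rho_x^{(-n)}$ and are controlled by Lemma~\ref{lemdits} --- is exactly what the paper exploits in each worked example (see App.~\ref{app:qudit-genCHSH} and~\ref{app:qudit-genChained}, where the uniform sum over $n$ is what makes Lemma~\ref{lemdits} applicable). Your Bob-side bound $\tr\big([B_y^{(l)}]^\dagger B_y^{(l)}\big)\le d$ is correct for general POVMs. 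Your honest identification of the obstacle --- that for $d\ge 3$ there is no known normal form for SOS certificates guaranteeing the $k$-uniform Alice weights you need --- is precisely the reason the paper states a conjecture rather than a theorem. In short: you have reconstructed the paper's viewpoint and its limitation, your conditional argument is sound where it applies, but the structural hypothesis of your second paragraph is unproven in general, so the gap you name is genuine and currently open.
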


\section{Examples}\label{sec:examples}
In this section we consider bipartite Bell inequalities maximised by the maximally entangled state and their translated prepare-and-measure version. Case by case, we see that the SOS decomposition can also be translated, meaning that the bound and the optimal strategy are preserved.

\subsection{Qubit strategies}

In the qubit scenario, for each input $x$, there exists a unique quasi-observable $\tilde{\rho}_x^{(1)}$, which we shall denote simply by $\tilde{\rho}_x$. As previously outlined, these operators are Hermitian and satisfy the operator inequality: $\tilde{\rho}_x^\dagger\tilde{\rho}_x \leq \idd$.

Before proceeding to explicit examples of self-testing within the prepare-and-measure (PM) framework, we clarify how specific properties of the quasi-observables enable self-testing of the prepared states. Notably, local unitary transformations performed independently by Alice and Bob cannot be detected solely from the observed PM correlations. Consequently, our characterization of Alice's states is inherently limited to equivalence classes under global rotations.

The purity of the states prepared by Alice can be guaranteed by demonstrating that the quasi-observables satisfy the stronger condition $\tilde{\rho}_x^2 = \idd$. If this identity can be established through an SOS decomposition, the purity of the states is confirmed, giving rise to a clear geometric interpretation of the operators $\tilde{\rho}_x$.

Specifically, the operators $\tilde{\rho}_x$ define the geometry of the $|\mathcal{A}| \cdot |\mathcal{X}| = 2n$ states prepared by Alice (modulo global rotations), which can be parameterized by $n(n+1)/2$ angles:

\begin{itemize}
    \item $n$ angles called $\omega_x$, used to describe the non-orthogonality between each pair labelled by $x$
    \begin{equation}
        \norm{\Vec{v}_x} = \cos(\omega_x),
    \end{equation}
    \item $\binom{n}{2}$ angles $\phi_{xx'}$, describing the angles between the different bases
    \begin{equation}
        \frac{\Vec{v}_x\cdot \Vec{v}_{x'}}{\norm{\Vec{v}_x} \norm{\Vec{v}_{x'}}} = \cos(\phi_{xx'}).
    \end{equation}
\end{itemize}

Determining quasi-observables $\tilde{\rho}_x$ satisfying the equality $\tilde{\rho}_x^\dagger \tilde{\rho}_x = \mathbb{I}$ is thus equivalent to specifying these $n(n+1)/2$ angles, which, in turn, completely characterizes the $2n$ states up to a global unitary rotation.

\textbf{A family of Bell inequalities with binary inputs and outputs}--- 
Consider the simplest bipartite scenario, in which $\left|\mathcal{X}\right| = \left|\mathcal{Y}\right| = \left|\mathcal{A}\right| = \left|\mathcal{B}\right|=2$, \emph{i.e.} all inputs and outputs are binary. A relevant family of Bell inequalities in this scenario is the one introduced in~\cite{Le_2023}, and further discussed in~\cite{barizien2023custom}. The family of Bell inequalities is parametrized with three parameters $\alpha$, $\beta$ and $\gamma$, and the Bell operator is
\begin{multline}\label{param}
    \mathcal{B}_{\alpha\beta\gamma}^{\text{NL}}(A_x,B_y) = \\
    = \cos(\alpha+\beta)\cos(\alpha + \gamma)A_0\otimes (\cos(\gamma)B_0-\cos(\beta)B_1) 
    \\ + \cos(\beta)\cos(\gamma)A_1\otimes (-\cos(\alpha+\gamma)B_0 +\cos(\alpha+\beta)B_1).
\end{multline}
This family includes the CHSH inequality for $\alpha=\gamma=0$ and $\beta=\pi$. A Bell inequality belonging to this family allows for  quantum violation when $\cos(\alpha+\gamma)\cos(\alpha+\beta)\cos(\beta)\cos(\gamma) < 0$. Whenever this is the case reaching the quantum bound
\begin{equation*}
    b_q = \pm \sin(\alpha)\sin(\gamma-\beta)\sin(\alpha+\beta+\gamma)
\end{equation*}
self-tests the maximally entangled pair of qubits~\cite{wooltorton2023deviceindependent}. This property makes it a good candidate for our translation procedure, from which we obtain the following prepare-and-measure Bell-like operator 
\begin{multline*}
    \mathcal{B}_{\alpha\beta\gamma}^{\text{PM}}(\tilde{\rho}_x,B_y) = \\ 
    = \frac{1}{2}\Big[\cos(\alpha+\beta)\cos(\alpha + \gamma)\tilde{\rho}_0(\cos(\gamma)B_0-\cos(\beta)B_1) \\
    + \cos(\beta)\cos(\gamma)\tilde{\rho}_1(-\cos(\alpha+\gamma)B_0 +\cos(\alpha+\beta)B_1)\Big].
\end{multline*}
Adapting the SOS decomposition of the Bell inequality (shown in App. \ref{app:family_binary}), we can prove that the quantum bound is preserved.
The inequality is saturated when $\|\tilde{\rho}_0\| = \|\tilde{\rho}_1\| = 1$, implying the purity of $\rho_{a,x}$ for all $a$ and $x$, and the strategy is given by the one self-testing the Bell inequality, and using the construction from~\eqref{eq:steered-states}. Given that the maximal violation of the Bell inequality is a self-test, when Alice and Bob are restricted to qubit strategies the only strategy, up to a global rotation, is the one that is self-tested in~\cite{wooltorton2023deviceindependent}. This implies that there exists a unitary $U$ such that
\begin{align}
    U\tilde{\rho}_0U^\dagger &= \sigma_x\\
    U\tilde{\rho}_1U^\dagger &= \cos(\alpha)\sigma_x + \sin(\alpha)\sigma_z\\
    UB_0U^\dagger &= \sin(\beta)\sigma_x + \cos(\beta)\sigma_z\\
    UB_1U^\dagger &= \sin(\gamma)\sigma_x + \cos(\gamma)\sigma_z
\end{align}
Notably, for a fixed $\alpha$ there exist values of $\beta$ and $\gamma$ such that the quantum bound of the parametrized Bell inequality is higher than classical, implying that the PM version can self-test any preparation strategy of Alice in which she prepares orthogonal pairs of states for both inputs $x$. Similar results for self-testing in the prepare-and-measure scenario can be found in the literature \cite{MO21}; this result reproduces them in a simpler and more intuitive way.

\textbf{Elegant Bell inequality}----
The Elegant Bell inequality, first introduced in~\cite{Gis07}, considers two parties with dichotomic observables, and asymmetrical input sets, namely $|\mathcal{X}|= 3$ and $|\mathcal{Y}|= 4$. The corresponding Bell operator has the form
\begin{align*}
    \mathcal{B}_{el}^{\text{NL}}\left(A_x,B_y\right) =& A_1 \otimes (B_1 + B_2 - B_3 - B_4) \\
    +& A_2 \otimes (B_1 - B_2 + B_3 - B_4) \\
    +& A_3 \otimes (B_1 - B_2 - B_3 + B_4).
\end{align*}

The quantum bound of the elegant Bell inequality $b_q^{\text{NL}}= 4 \sqrt{3}$ self-tests a maximally entangled pair of qubits~\cite{Acin_2016}. The elegance of this inequality lies in the structure of the self-tested observables.
Namely, Alice's measurements form a set of mutually unbiased bases resembling an octahedron in the Bloch sphere, while Bob's measurements precisely shape the dual Platonic solid, a cube.

When translating the inequality to the prepare-and-measure scenario, it is essential to handle more than two quasi-observables for Alice.
The optimal arrangement is not confined to a single plane on the Bloch sphere, unlike the scenario described above.
The prepare-and-measure operator takes the form
\begin{align}\nonumber
    \mathcal{B}_{el}^{\text{PM}}\left(\tilde{\rho}_x,B_y\right) =\frac{1}{2}\Big[& \tilde{\rho}_1 (B_1 + B_2 - B_3 - B_4) \\ \label{ElBellPM}
    +& \tilde{\rho}_2(B_1 - B_2 + B_3 - B_4) \\
    +& \tilde{\rho}_3(B_1 - B_2 - B_3 + B_4) \Big] \nonumber.
\end{align}
For the reasons explained above, the trace of this operator is equivalent to the expectation value over the maximally entangled state of $\mathcal{B}_{el}^{\text{NL}}\left(\tilde{\rho}_x,B_y\right)$.
Adapting the SOS decomposition of the shifted Bell operator, we retrieve a self-testing statement for the prepare-and-measure version of the task.
More specifically, reaching the quantum bound self-tests anticommutation relations for Bob's operators
\begin{align*}
    \{B_y,B_{y'}\}  = \frac{1}{3} \idd  \qquad \forall y \neq y' \in \mathcal{Y} ,
\end{align*}
and it imposes the following constraints on Alice's quasi-observables
\begin{align*}
    &\tilde{\rho}_x^2 = \idd &\forall x \in \mathcal{X}, \\
    &\{\tilde{\rho}_x,\tilde{\rho}_{x'}\}\ket{\phi_+} = 0 &\forall x \neq x' \in \mathcal{X} .
\end{align*}
This is equivalent to fixing the set of states, parametrised with the following angles
\begin{equation*}
    \omega_x = \pi \qquad \forall x\in \mathcal{X} , \qquad
    \psi_{xx'} = \frac{\pi}{2} \qquad \forall x \neq x'\in\mathcal{X}.
\end{equation*}
The three quasi-observables have exactly the properties of three mutually anti-commuting qubit observables. Since, the states Alice prepares were characterized up to global rotation, this implies that there exists a unitary matrix $U$ such that
\begin{equation}
    U\tilde{\rho}_1U^\dagger = \sigma_z, \quad U\tilde{\rho}_2U^\dagger = \sigma_x, \quad U\tilde{\rho}_3U^\dagger = \pm\sigma_y.
\end{equation}

More detailed calculations and the robust case can be found in App.~\ref{app:EBI}.

\textbf{Chained Bell inequality}---
The chained Bell inequalities~\cite{Pearle,BC} are a generalization of the CHSH inequality with two parties and dichotomic observables, but a larger number of measurements per party, with the Bell operator being
\begin{align*}
    \mathcal{B}^{\text{NL}}_{ch(n)} = \sum_{i=1}^n (A_i \otimes B_i + A_{i+1}\otimes B_i), \qquad  A_{n+1} \equiv - A_1.
\end{align*}
The quantum bound found in~\cite{wehner} is $b_q^{\text{NL}}=2n \cos(\pi/2n)$, and in \cite{SASA16} it was proven that the optimal quantum strategy is realized by a maximally entangled bipartite qubit pair and $n$ measurements per party, maximally spread in one single plane of the Bloch sphere.
The prepare-and-measure functional is
\begin{align*}
    \mathcal{B}^{\text{PM}}_{ch(n)} = \sum_{i=1}^n (\Tilde{\rho}_i \otimes B_i + \Tilde{\rho}_{i+1}\otimes B_i), \qquad  \Tilde{\rho}_{n+1} \equiv - \Tilde{\rho}_1.
\end{align*}
Also in this case, adapting the SOS decomposition of the shifted Bell operator, taken from \cite{SASA16}, one can self-test Bob's observables and Alice's quasi-observables, hence the set of states she prepares.
More precisely, adapting the first order SOS decomposition we prove that, in the optimal configuration, $\rho_i^2 = B_j^2 = \mathds{1}$
in \cite{SASA16}, which implies that the states Alice prepares for each given $x$ are orthogonal. The SOS decomposition of the second order can be used to show that there is a unitary $U$ such that
\begin{align}
    U\tilde{\rho}_xU^\dagger &= \sin\phi_x\sigma_x + \cos\phi_x\sigma_z,\\
    UB_yU^\dagger &= \sin\varphi_y\sigma_x + \cos\varphi_y\sigma_z,
\end{align}
where $\phi_x = (x-1)\pi/n$ and $\varphi_y = (2y-1)\pi/2n$.
More details can be found in App.~\ref{app:chained}.

\subsection{Qudit strategies: Generalised CHSH inequality}

A representative class of the generalization of the CHSH inequality to the scenario with many outputs was introduced in~\cite{KST+19}, by modifying the inequalities presented in~\cite{BuhrmanMasar}. The corresponding Bell functional has the form
\begin{equation*}
    \mathcal{B}_d^{\text{NL}}= \frac{1}{d^3} \sum_n \lambda_n \sum_{x,y} w^{nxy} A_x^{(n)} \otimes  B_y^{(n)},
\end{equation*}
where $\lambda_n$ are complex, finely tuned so that the quantum bound of this Bell inequality is analytically computable and achieved by the maximally entangled bipartite state and two mutually unbiased basis measurements (MUB) for both parties.

The translation of this Bell inequality to the prepare-and-measure scenario preserves the certification properties of the inequality. To be more precise, in~\cite{KST+19} the authors analytically find the quantum bound for every prime dimension $d\geq3$. Except for the case $d=3$ the maximal violation does not self-test the measurements but certifies only that they are mutually unbiased.
For dimension $d=3$ the authors reach a complete self-testing statement, meaning that we can translate it to a complete self-testing statement for its prepare-and-measure version.
The certified resources are MUB observables on Bob's side and a specific set of states obtained by steering the $d$-dimensional maximally entangled bipartite state for Alice's side. 
We refer to~\cite{KST+19} for a more complete characterisation of their results, and to App.~\ref{app:qudit-genCHSH} for technical details on our prepare-and-measure version of the inequality.

Similar techniques can be applied to the inequality presented in \cite{SAT+17}, a generalisation of the chained Bell inequality for higher dimensions.
More details can be found in App.~\ref{app:qudit-genChained}.

\subsection{Limitations of the translation procedure: the case of the Tilted CHSH Inequality} \label{sec:counterexample}

Let us now consider a variation of CHSH, the tilted CHSH inequality introduced in~\cite{AMP12}, corresponding to the Bell operator :
\begin{equation}
    \mathcal{B}^{\text{NL}}_{\alpha \beta} = \alpha A_0 \otimes \mathds{1} + \beta (A_0 + A_1) \otimes B_0 + (A_0 - A_1) \otimes B_1
\end{equation}
with $\alpha \in [0,2[$ and $\beta \geq 1$.
For $\alpha=0$ and $\beta=1$ the CHSH inequality is recovered.
The parameter $\beta >1$ can be interpreted as a non-uniform sampling probability of the variable $y$.
For example, for $\alpha=0$ and $\beta >1$, the inequality is still maximised by a maximally entangled state and our translation protocol can be applied. 

The parameter $\alpha > 0$ has more drastic consequences.
Indeed, in \cite{BP15} the authors provide an SOS decomposition for a general $\alpha$ and $\beta=1$, and they prove that the quantum bound of $\sqrt{8 + 2 \alpha^2}$ is achieved with a partially entangled state
\begin{equation}
    \ket{\phi_\theta} = \cos(\theta) \ket{00} + \sin(\theta) \ket{11},
\end{equation}
where the angle $\theta$ is parametrised by $\alpha$ in the following way
\begin{equation*}
    \alpha \equiv \alpha(\theta) = \frac{2}{\sqrt{1 + 2 \tan^2(2\theta)}}.
\end{equation*}

Note that in this case matching the marginal probabilities optimal for maximally violating the tilted CHSH inequality, $P(a|x)$ is not uniform anymore:
\begin{align*}
    P(a=0|x=0)= \cos^2(\theta), \qquad P(a=0|x=1)= \frac{1}{2}.
\end{align*}
\begin{figure}[h]
    \centering
\begin{tikzpicture}
    \draw (0,0) circle (1cm);
    
    \draw (4,0) circle (1cm);

    \draw[->, line width=0.2mm] (0,0) -- ({1*cos(47)}, {1*sin(47)});
    \draw[->, line width=0.2mm] (0,0) -- ({-1*cos(47)}, {1*sin(47)});
    \draw[->, line width=0.5mm] (0,0) -- (0,1);
    \draw[->, line width=0.1mm] (0,0) -- (0,-1);

    \draw[->, line width=0.2mm] (4,0) -- (5,0);
    \draw[->, line width=0.2mm] (4,0) -- (3,0);
    \draw[->, line width=0.5mm] (4,0) -- (4,1);
    \draw[->, line width=0.1mm] (4,0) -- (4,-1);
\end{tikzpicture}
\caption{Set of states for the prepare-and-measure tilted CHSH inequality, represented on a disc of the Bloch sphere. On the left, the set corresponding to the states steerred from Alice to Bob, in the maximally winning configuration of the non-local game. On the right, an allowed set of states in the prepare-and-measure game, that scores a higher value.
Different thicknesses of the lines represent different sampling probabilities.}
    \label{fig:tiltedCHSH}
\end{figure}
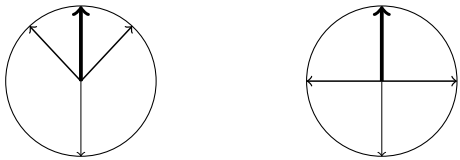

Our proof does not apply to this case, however we can still think of the prepare-and-measure counterpart as in Fig. \ref{fig:ourmapping} by still fixing the dimension to $2$, and enforcing the correct probabilities $p(a|x)$.
It is interesting to see that indeed the set of states that Alice steers to Bob in the Bell scenario will not achieve the highest score in the prepare-and-measure that we constructed.
Consider the simple set of states
\begin{equation*}
\Bigg\{
\begin{array}{l l}
\ket{+}  \text{ for }& (x,a)=(0,1),\\
\ket{-}              & (x,a)=(1,1),\\
\ket{0}              & (x,a)=(0,0),\\
\ket{1}              & (x,a)=(1,0),
\end{array}
\end{equation*}
and observables
\begin{equation*}
    B_0 = \frac{1}{\sqrt{2}} (\sigma_z + \sigma_x) \qquad 
    B_1 = \frac{1}{\sqrt{2}} (\sigma_z - \sigma_x);
\end{equation*}
this configuration reaches a violation of $\alpha \cos(2\theta) + 3 \sqrt{2}$, which is always larger than the quantum bound of the original Bell inequality.
This is corroborating the idea that the condition of maximally entangled states is necessary for this specific mapping to work. In the future it would be interesting to investigate which dimensional bound would be equivalent to partially entangled states.

\subsection{Equivalence between Bell inequalities and Quantum random access codes}

The prepare-and-measure scenario provides a natural framework for investigating quantum random access codes (QRAC)~\cite{QRAC}, a very relevant class of games in the prepare-and-measure scenario, similarly to the role of the CHSH inequality in Bell tests. Within this setting, various self-testing methods have been developed under dimension constraints to identify optimal quantum strategies for successfully implementing these codes~\cite{TKV+18, FK19,farkas2023simple}.

In a general $\smash{n^m \overset{d}{\mapsto} 1}$ quantum random access code (QRAC), $n$ classical messages, each of size $m$ bits, are encoded into a quantum system of dimension $d$. The objective is to retrieve one of these messages, chosen uniformly at random, with a high success probability.
A frequently studied case arises when $m = d$, which we simply denote as $n^m \mapsto 1$ QRAC. It is well-established that encoding information in a quantum system offers advantages over classical systems of the same dimension. This phenomenon is used in various quantum information protocols~\cite{kerenidis2004quantum, Wiesner}.

When applying our translation procedure to some well-known Bell inequalities, we retrieve prepare-and-measure tasks that closely resemble QRACs, but also the Odd Cycle game~\cite{CHTW04}.
In many cases these tasks are, in fact, equivalent up to a relabelling, as summarized in Table \ref{tab:PMversion_game}. Consequently, our procedure plays a role in formally establishing an equivalence between certain well-known Bell inequalities and prepare-and-measure protocols. This unification helps integrating different proof techniques and may offer valuable insights into prepare-and-measure tasks where existing methods fall short.

\begin{table}[ht!]
\centering
\setlength{\tabcolsep}{12pt}
\begin{tabular}{cc}
\toprule
 NL game & PM game \\
  & (up to relabelling) \\
\midrule
CHSH  & QRAC $2^2\to1$ \\
EBI & QRAC $3^2\to1$ \\
Chained  & OddCycle \\
tilted-CHSH with $\alpha=0$ & biased-QRAC $2^2\to1$\\
\bottomrule
\end{tabular}
\caption{Parallelism between Bell inequalities, their prepare-and-measure version and well known prepare-and-measure games.}
\label{tab:PMversion_game}
\end{table}

As an explicit example, we consider the simplest correspondence: the prepare-and-measure CHSH and the QRAC $2^2\to1$.
In the QRAC $2^2\to1$ Alice has to condense the information of two binary inputs, $\{x_0,x_1\}$, in a bi-dimensional resource that she sends to Bob; Bob is asked to correctly retrieve only one of the two bits.
\begin{figure}[h]
    \centering
    \begin{center}
\begin{tikzpicture}[
    every node/.style={align=center},    >=stealth
]
    \node[draw, rectangle, rounded corners, minimum width=1.2cm, minimum height=0.8cm] (Alice_NL) {Alice};
    \node[draw, rectangle, rounded corners, minimum width=1.2cm, minimum height=0.8cm, right=1cm of Alice_NL] (Bob_NL) {Bob};
    
    \node[draw, rectangle, rounded corners, minimum width=1.2cm, minimum height=0.8cm, right=1.2cm of Bob_NL] (Alice_PM) {Alice};
    \node[draw, rectangle, rounded corners, minimum width=1.2cm, minimum height=0.8cm, right=1cm of Alice_PM] (Bob_PM) {Bob};


    \draw[->] (Alice_NL.125) ++(0,0.5cm) node[above] {$x$} -- (Alice_NL.125);
    \draw[->] (Alice_NL.50) ++(0,0.5cm) node[above] {$a$} -- (Alice_NL.50);
    \draw[->] (Bob_NL.north) ++(0,0.5cm) node[above] {$y$} -- (Bob_NL.north);
    
    \draw[->] (Alice_PM.125) ++(0,0.5cm) node[above] {$x_0$} -- (Alice_PM.125);
    \draw[->] (Alice_PM.50) ++(0,0.5cm) node[above] {$x_1$} -- (Alice_PM.50);
    \draw[->] (Bob_PM.north) ++(0,0.5cm) node[above] {$y$} -- (Bob_PM.north);

    \draw[->] (Bob_NL.south) -- ++(0,-0.5cm) node[below] {$b$};
    
    \draw[->] (Bob_PM.south) -- ++(0,-0.5cm) node[below] {$g$};

    \draw[double,thick, ->] (Alice_NL.east) -- (Bob_NL.west) 
    node[midway, above] {dim $2$}
    node[midway, below] {$\rho_{xa}$}
    node[midway, below=1.4cm]{$x \cdot y = a \oplus b$};
    \draw[double,thick, ->] (Alice_PM.east) -- (Bob_PM.west)
    node[midway, above] {dim $2$}
    node[midway, below] {$\rho_{x_0 x_1}$}
    node[midway, below=1.4cm]{$g = x_y$};

\end{tikzpicture}
\end{center}
        \caption{Our prepare-and-measure CHSH on the left and the QRAC $2\to1$ on the right. All labels are binary strings. The structure of the games is identical, but the winning conditions (below the figures) are not equivalent.}
        \label{fig:CHSHQRAC}
\end{figure}
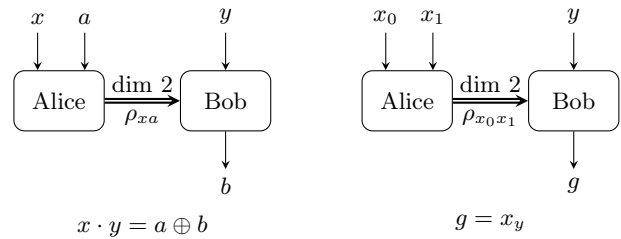
The prepare-and-measure version of CHSH and QRAC $2^2\to1$ share the same structure, but the winning conditions are different, as shown in Fig.~\ref{fig:CHSHQRAC}. Nevertheless we know from our result and the literature on QRACs \cite{TKV+18}, that the classical and quantum bounds are exactly the same; furthermore, the saturation of the quantum bound self-tests the same structures.
This is not a coincidence: an elegant way to explain this is that the two winning conditions, hence the games, become formally the same under a change of variables of the labels of Alice's states:
\begin{center}
\begin{tikzpicture}[scale=0.6]
    \node[] (A) at (0,0.86) {$\rho_{00}$};
    
    \node[] (B) at (3,0) {$\rho_{01}$};
    \node[] (C) at (4,1.7) {$\rho_{10}$};
    \node[] (D) at (5,0) {$\rho_{11}$};

    \path [->] (A) edge [loop right] node {} ();
    \path [->](B) edge node[left] {} (C);
    \path [->](C) edge node[left] {} (D);
    \path [->](D) edge node[left] {} (B);
\end{tikzpicture}    
\end{center}

A similar relabelling can be performed for the other correspondences in Tab.~\ref{tab:PMversion_game}, proving a fundamental equivalence between these games. More details can be found in App.~\ref{app:games}. A more in-depth comparison between Bell inequalities and QRACs can be found in~\cite{TMPB16}.

\section{Applications}\label{sec:app}

We provide now several applications of our method to characterize quantum correlations in the prepare-and-measure scenario.

\subsection{Self-testing any qubit measurement in the prepare-and-measure scenario}\label{sec:allqubitmeasurements}

The prepare-and-measure formulation of the elegant Bell inequality, which is equivalent (up to relabeling) to the $3\to1$ QRAC, can serve as a basis for self-testing any qubit measurement performed by Bob. Specifically, achieving the optimal score in the $3\to1$ QRAC guarantees that, up to a unitary transformation, Alice's preparations correspond to the eigenstates of three Pauli operators. That is, there exists a unitary matrix 
$U$ such that

\begin{equation}\label{tomcom} U\rho_{a,x}U^\dagger = \frac{\mathbb{I} + (-1)^a \sigma_x}{2}. \end{equation}

Self-testing of an arbitrary measurement performed by Bob can be achieved by introducing an additional input to the $3\to1$ QRAC protocol corresponding to the measurement in question. This input, denoted by $y=4$, represents the unknown measurement. The measurement effect associated with output $b$ and input $y=4$ can be expressed as $N_4^b = \xi_b(\idd + \vec{n}_b\vec{\sigma})$, where $\xi_b \geq 0$, and $\sum_b\xi_b = 1$. Since Alice's preparations form a tomographically complete set of states, the correlation probabilities $p(b|x,a,y=4)$ allow the process tomography of the effect $U^\dagger N_4^bU$. These correlations can be expressed as:
\begin{align*}
    p(b|x,a,y=4) &=\tr\left[\rho_{a,x} N_4^b\right] \\&=\tr\left[\frac{\idd + (-1)^a\sigma_x}{2}\left(U^\dagger N_4^bU\right)\right],
\end{align*}
where Eq.~\eqref{tomcom} was used to derive the second line. By using the standard procedures this self-testing statement can be made robust. For an equivalent result, obtained by considering the prepare-and-measure scenario without ancestral non-locality parent, see~\cite{TSV+20,DPV24,DPTV24}.

\subsection{Self-testing in a symmetric prepare-and-measure scenario}

Having established a correspondence between Bell and prepare-and-measure scenarios, it is natural to ask how a generalization of this map transforms another well-studied variation of the Bell scenario: a quantum network. Quantum networks extend the Bell scenario by involving at least three observers who receive classical inputs and produce classical outputs. A key feature distinguishing quantum networks is the presence of at least two mutually independent sources. The simplest example is the bilocality network, which comprises three parties connected by two independent sources as depicted in Fig.~\ref{fig:bilocal-network}; this is also a fundamental model for entanglement swapping.

 \begin{figure}[htbp]
    \centering

\begin{center}
\begin{tikzpicture}[
    every node/.style={align=center},    >=stealth
]
    \node[draw, rectangle, rounded corners, minimum width=1.5cm, minimum height=1cm] (Alice) {Alice};
    \node[draw, rectangle, rounded corners, minimum width=1.5cm, minimum height=1cm, right=1.5cm of Alice] (Bob) {Bob};
    \node[draw, rectangle, rounded corners, minimum width=1.5cm, minimum height=1cm, right=1.5cm of Bob] (Charlie) {Charlie};

    \draw[->] (Alice.north) ++(0,0.5cm) node[above] {$x$} -- (Alice.north);
    \draw[->] (Bob.north) ++(0,0.5cm) node[above] {$y$} -- (Bob.north);
    \draw[->] (Charlie.north) ++(0,0.5cm) node[above] {$z$} -- (Charlie.north);

    \draw[->] (Alice.south) -- ++(0,-0.5cm) node[below] {$a$};
    \draw[->] (Bob.south) -- ++(0,-0.5cm) node[below] {$b$};
    \draw[->] (Charlie.south) -- ++(0,-0.5cm) node[below] {$c$};

    \draw (Alice.east) -- (Bob.west) node[midway] (X1) {\large $\times$};
    \draw (Bob.east) -- (Charlie.west) node[midway] (X2) {\large $\times$};

    \node at ($(Alice.north east)!0.5!(Bob.north west)+(0,0.5cm)$) (*) {\Large $*$};
    \draw[dotted] (*) -- (Alice.north east);
    \draw[dotted] (*) -- (Bob.north west);

    \node at ($(Bob.north east)!0.5!(Charlie.north west)+(0,0.5cm)$) (*) {\Large $*$};
    \draw[dotted] (*) -- (Bob.north east);
    \draw[dotted] (*) -- (Charlie.north west);
\end{tikzpicture}
\end{center}
    \caption{The bilocal network involves two mutually independent sources and three parties. The fact that the sources are mutually independent affects the form of achievable global classical and quantum correlations.}
    \label{fig:bilocal-network}
\end{figure}
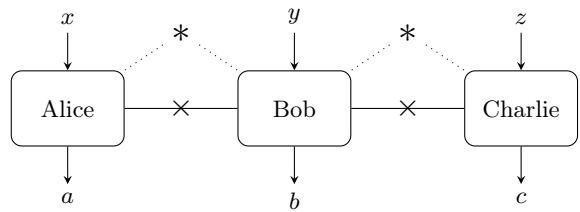

In the bilocality network, three quantum parties perform measurements $M_x^a$, $N_b^y$, and $P_c^z$, respectively. The global quantum state, represented as a tensor product $\varrho \otimes \varsigma$, describes the outputs of the two independent sources. The observed correlations among the parties follow the Born rule:
\begin{equation}
    p(a,b,c|x,y,z) = \tr\left[\left(M_x^a\otimes N_b^y\otimes P_c^z\right)\left(\varrho\otimes\varsigma\right)\right].
\end{equation}
A straightforward generalization of our map transforms the bilocality network into a symmetric prepare-and-measure scenario involving three parties. In this new framework, the two lateral parties prepare quantum states and send them to the central party, which performs measurements on the received states. Specifically, the map prescribes that both the input and output of the two lateral parties in the network scenario become inputs for the lateral parties in the prepare-and-measure scenario.

 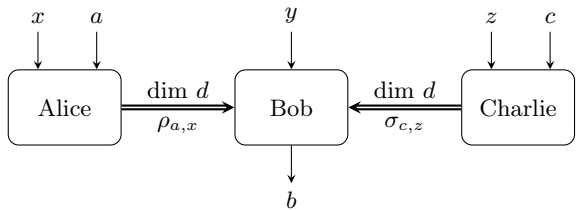
\begin{figure}[htbp]
    \centering
    \begin{tikzpicture}[
    every node/.style={align=center}, >=stealth
]
    \node[draw, rectangle, rounded corners, minimum width=1.5cm, minimum height=1cm] (Alice) {Alice};
    \node[draw, rectangle, rounded corners, minimum width=1.5cm, minimum height=1cm, right=1.5cm of Alice] (Bob) {Bob};
    \node[draw, rectangle, rounded corners, minimum width=1.5cm, minimum height=1cm, right=1.5cm of Bob] (Charlie) {Charlie};

    \draw[->] (Alice.125) ++(0,0.5cm) node[above] {$x$} -- (Alice.125);
    \draw[->] (Alice.50) ++(0,0.5cm) node[above] {$a$} -- (Alice.50);

    \draw[->] (Bob.north) ++(0,0.5cm) node[above] {$y$} -- (Bob.north);

    \draw[->] (Charlie.125) ++(0,0.5cm) node[above] {$z$} -- (Charlie.125);
    \draw[->] (Charlie.50) ++(0,0.5cm) node[above] {$c$} -- (Charlie.50);
    
    \draw[->] (Bob.south) -- ++(0,-0.5cm) node[below] {$b$};

  \draw[double,thick, ->] (Alice.east) -- (Bob.west) node[midway, above] {dim $d$}
  node[midway, below] {$\rho_{a,x}$};
    \draw[double, thick, <-] (Bob.east) -- (Charlie.west) node[midway, above] {dim $d$}
    node[midway, below] {$\sigma_{c,z}$};
\end{tikzpicture}
    \caption{Prepare and measure version of the bilocal network. The channels are dimensionally bounded.}
    \label{fig:PM-bilocal}
\end{figure}

Consequently, the lateral parties prepare states $\rho_{a,x}$ and $\sigma_{c,z}$, while the central party performs the measurement $N_b^y$, yielding the conditional probabilities:
\begin{equation}
    P(b|a,x,y,c,z) = \tr\left[N_b^y (\rho_{a,x}\otimes\sigma_{c,z})\right].
\end{equation}
If the pairs $(a,x)$ and $(c,z)$ are sampled uniformly at random, by using the Bayes rule we obtain:
\begin{equation}
    P(a,b,c|x,y,z) = \frac{1}{|\mathcal{A}\cdot\mathcal{C}|}\tr\left[N_b^y\left(\rho_{a,x}\otimes\sigma_{c,z}\right)\right],
\end{equation}
where $|\mathcal{A}|$ and $|\mathcal{C}|$ denote the cardinalities of the sets to which $a$ and $c$ belong. These prepare-and-measure correlations can be mapped back to their network counterparts using the product swap trick:
\begin{align}
    &\tr\left[B\cdot (A\otimes C)\right] = \\
    &d^2 \left(\bra{\phi^+_{AB_1}}\otimes\bra{\phi^+_{B_2C}}\right) 
    \left[ A^T \otimes B \otimes C^T \right]
    \left( \ket{\phi^+_{AB_1}}\otimes\ket{\phi^+_{B_2C}} \right)\nonumber
\end{align}
where $d$ is the dimension of the shared states, and $\ket{\phi^+}$ denotes a maximally entangled state.

Using this identity, we observe that if the bilocal correlations self-test a tensor product of two Bell states and the measurement operators of the three parties, the corresponding probabilities in the symmetric prepare-and-measure scenario also self-test the central party's measurement. Moreover, the prepared states of the lateral parties in this scenario correspond to the two remotely prepared states in the bilocal network. This mapping provides a novel approach for self-testing the Bell state measurement performed by the central party by leveraging existing bilocal self-testing results, such as those in~\cite{RenouPRL,vsupic2022genuine}.

In this example we chose to keep Bob as a measurement, and to introduce channels from Alice and Charlie to Bob.
Nothing prevents us from changing this; starting from the same Bell inequality and applying our protocol we could make Bob prepare 4 states and send them to Alice and Charlie, or make a sequential scenario in which Alice prepares a state, Bob transforms it and Charlie measures it.
In the future it might be interesting to study the interplay between all of these structures.

\subsection{Relaxing entanglement in network-device-independent protocols}

In this section, we briefly discuss how oup mapping could be applied to relax assumptions in network-device-independent protocols. Such protocols aim to transition from device-dependent setups to device-independent ones by embedding the device-dependent setup into a quantum network. The key assumption in these networks is the mutual independence of the sources, while the devices remain uncharacterised.

The process of converting a device-dependent protocol into a device-independent one can be summarized as follows. Consider without loss of generalisation a device-dependent setup where the involved parties perform characterized measurements to witness entanglement or to verify the specification of the source. To achieve device independence, one leverages the ability to perform a characterized measurement remotely by introducing an auxiliary party that shares a maximally entangled Bell pair with the original party. In this modified setup, the original party performs a Bell state measurement on their half of the Bell pair and their share of the uncharacterised state. Meanwhile, the auxiliary party performs the characterized measurement. Through the Bell state measurement, the uncharacterised state is teleported to the auxiliary party, where the characterized measurement is then performed directly on the teleported state. A graphical representation of this can be found in Fig. \ref{fig:network}.

\begin{figure*}
    \centering
    \begin{tikzpicture}[scale=0.8,  >=stealth]

  \def\L{2.5} 
  \def\h{0.7*\L} 
    
    \def\Lbig{7} 
  \def\hbig{0.7*\Lbig} 

      \def\Lm{4.8} 
  \def\hm{0.7*\Lm}
    
  \coordinate (A) at (-\L/2, \h/2);
  \coordinate (B) at (\L/2, \h/2);
  \coordinate (C) at (0, -\h);
  \coordinate (O) at (0,-\h/4);           

\coordinate (A1) at (-\Lbig/2, \hbig/2); 
  \coordinate (B1) at (\Lbig/2, \hbig/2);
  \coordinate (C1) at (0, -\hbig*5/6);

  \coordinate (Am) at (-\Lm/2, \hm/2); 
  \coordinate (Bm) at (\Lm/2, \hm/2);
  \coordinate (Cm) at (0, -\hm*6/7);

   \node [draw, rectangle, rounded corners, minimum width=0.8cm, minimum height=0.8cm, anchor=center] (Alice) at (A) {$A$};

    \node [draw, rectangle, rounded corners, minimum width=0.8cm, minimum height=0.8cm, anchor=center] (Bob) at (B) {$B$};
    
    \node [draw, rectangle, rounded corners, minimum width=0.8cm, minimum height=0.8cm, anchor=center] (Charlie) at (C) {$C$};

    \node [draw, rectangle, rounded corners, minimum width=0.8cm, minimum height=0.8cm, anchor=center] (Alice1) at (A1) {$A_1$};

    \node [draw, rectangle, rounded corners, minimum width=0.8cm, minimum height=0.8cm, anchor=center] (Bob1) at (B1) {$B_1$};
    
    \node [draw, rectangle, rounded corners, minimum width=0.8cm, minimum height=0.8cm, anchor=center] (Charlie1) at (C1) {$C_1$};

    \node[anchor=center] (Cen) at (O) {\Large{*}};

    \node[anchor=center] (Amm) at (Am) {\Large{*}};
    \node[anchor=center] (Bmm) at (Bm) {\Large{*}};
    \node[anchor=center] (Cmm) at (Cm) {\Large{*}};

    \draw[thick] (Charlie.south) -- (Cmm.north);
    \draw[thick] (Cmm.south) -- (Charlie1.north);
    \draw[thick] (Bob.40) -- (Bmm.200);
    \draw[thick] (Bmm.50) -- (Bob1.200);
    \draw[thick] (Alice.140) -- (Amm.340);
    \draw[thick] (Amm.130) -- (Alice1.340);
    
    \draw[thick] (Charlie.north) -- (Cen);
    \draw[thick] (Cen.30) -- (Bob.230);
    \draw[thick] (Cen.150) -- (Alice.310);
  
  \end{tikzpicture}
    \hspace{50pt}
    \begin{tikzpicture}[scale=0.8,  >=stealth]

  \def\L{2.5} 
  \def\h{0.7*\L} 
    
    \def\Lbig{7} 
  \def\hbig{0.7*\Lbig} 

      \def\Lm{4.8} 
  \def\hm{0.7*\Lm}
    
  \coordinate (A) at (-\L/2, \h/2);
  \coordinate (B) at (\L/2, \h/2);
  \coordinate (C) at (0, -\h);
  \coordinate (O) at (0,-\h/4);           

\coordinate (A1) at (-\Lbig/2, \hbig/2); 
  \coordinate (B1) at (\Lbig/2, \hbig/2);
  \coordinate (C1) at (0, -\hbig*5/6);

  \coordinate (Am) at (-\Lm/2, \hm/2); 
  \coordinate (Bm) at (\Lm/2, \hm/2);
  \coordinate (Cm) at (0, -\hm*6/7);

   \node [draw, rectangle, rounded corners, minimum width=0.8cm, minimum height=0.8cm, anchor=center] (Alice) at (A) {$A$};

    \node [draw, rectangle, rounded corners, minimum width=0.8cm, minimum height=0.8cm, anchor=center] (Bob) at (B) {$B$};
    
    \node [draw, rectangle, rounded corners, minimum width=0.8cm, minimum height=0.8cm, anchor=center] (Charlie) at (C) {$C$};

    \node [draw, rectangle, rounded corners, minimum width=0.8cm, minimum height=0.8cm, anchor=center] (Alice1) at (A1) {$A_1$};

    \node [draw, rectangle, rounded corners, minimum width=0.8cm, minimum height=0.8cm, anchor=center] (Bob1) at (B1) {$B_1$};
    
    \node [draw, rectangle, rounded corners, minimum width=0.8cm, minimum height=0.8cm, anchor=center] (Charlie1) at (C1) {$C_1$};

    \node[anchor=center] (Cen) at (O) {\Large{*}};

    \draw[double,thick, <-] (Charlie.south) -- (Charlie1.north) node[midway, right] {dim $d$};
    \draw[double,thick, <-] (Bob.40) -- (Bob1.200) node[midway, above=0.3cm] {dim $d$};
    \draw[double,thick, <-] (Alice.140)-- (Alice1.340) node[midway, above=0.3cm] {dim $d$};
    
    \draw[thick] (Charlie.north) -- (Cen);
    \draw[thick] (Cen.30) -- (Bob.230);
    \draw[thick] (Cen.150) -- (Alice.310);

  
  \end{tikzpicture}
    \caption{On the left, a graphical representation of the network-device-independent protocol described above. $A$, $B$ and $C$ are the original parties; $A_1$, $B_1$ and $C_1$ are the auxiliary parties.
    On the right, a simpler network-device-independent protocol obtained through our mapping.}
    \label{fig:network}
\end{figure*}
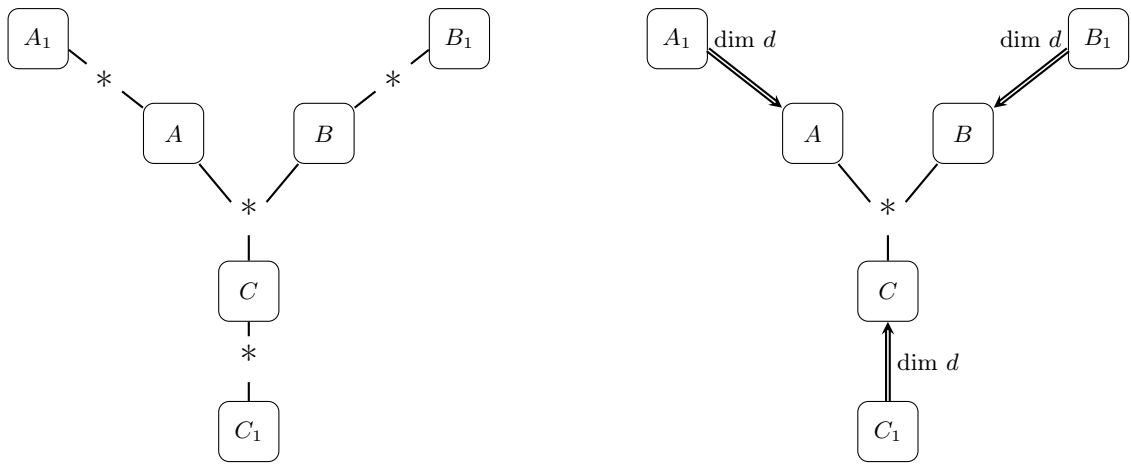

To complete the transition to a device-independent protocol, all measurements are made uncharacterised, and self-testing is employed to verify the Bell pair shared between the original and auxiliary parties. For instance, this can be achieved by demonstrating the maximal violation of the CHSH inequality. Moreover, using a chain of three CHSH inequalities, one can self-test a tomographically complete set of qubit measurements. This enables the self-testing of any arbitrary measurement performed by the auxiliary party. 

This procedure can be applied to every party involved in an $N$-partite device-dependent protocol, resulting in a $2N$-partite device-independent protocol. However, the method requires $N$ auxiliary parties, each sharing a Bell pair with one of the original parties. While effective, this approach necessitates significant resources, specifically $N$ Bell pairs. This method was first used to compose a protocol for universal device-independent entanglement certification~\cite{bowles2018device}, self-testing of all pure entangled states~\cite{vsupic2023quantum} and network-device-independent witnessing of indefinite causal order~\cite{dourdent2024network}.

Our mapping offers an alternative to this resource-intensive setup. Instead of requiring $N$ Bell pairs, the auxiliary parties can directly prepare and send the states that would have been remotely prepared for the original parties in the device-dependent protocol. This is possible under the assumption that the communication channels connecting the auxiliary parties to the original parties have a characterized dimension. By replacing the Bell-pair requirement with state preparation in a dimensionally characterized channel, our approach simplifies the requirements for implementing network-device-independent protocols while maintaining the desired security and functionality.

\section{Discussion and future directions}\label{sec:concl}
In this work, we have explored the connection between quantum correlations in device-independent non-local scenarios and semi-device-independent prepare-and-measure scenarios. Our approach establishes a systematic framework for mapping Bell inequalities natively formulated for the non-local scenario onto a prepare-and-measure structure, preserving essential quantum properties such as the quantum bound and self-testing features. 
The key tools are sum-of-squares decompositions and a property of maximally entangled states which we call the swap trick.

This translation offers alternative ways for understanding the fundamental limits of quantum communication tasks that do not rely on entanglement. By identifying the mathematical equivalences between these scenarios, we have shown that quantum violations of certain Bell inequalities naturally manifest as constraints on dimension-constrained communication in prepare-and-measure settings.
Interestingly, many of the translated inequalities that we found can be mapped by relabelling to known quantum prepare-and-measure games, for which self-testing proofs with different mathematical techniques were developed; our framework allows to unify these scattered proofs under a common formalism.

At a deeper conceptual level, our work contributes to the ongoing comparison between parallel or spatial scenarios, of which Bell scenario is a representative, and sequential or temporal scenarios, which in our case is the prepare-and-measure setting.
The characterization of quantum temporal correlations and their relation to spatial quantum correlations received significant attention~\cite{BTCV04,Fri10,FAB+11,CRG+18,CE24}. Notably, in \cite{Fri10} the author shows that \emph{although the set of joint probabilities realizable by spatial quantum correlations is strictly contained in the set of joint probabilities realizable by temporal quantum correlations, the set of realizable correlators is the same in the temporal case as in the spatial case}. This finding  closely parallels our findings: in section~\ref{sec:counterexample} we show that the set of prepare-and-measure quantum correlations is a strict superset of the set of quantum spatial correlations, however correlation Bell inequalities (those not containing marginal terms) are always maximally violated by maximally entangled states, implying that in our case as well quantum correlators realizable in the prepare-and-measure scenario are also realizable in the Bell scenario.
In the future we plan to explore the connection between our findings and those in~\cite{Fri10}.

Many other directions are also to be explored.
First of all, we showed that a sufficient condition for the quantum bound to be preserved in dimension $2$ is that the original Bell inequality is maximally violated by a maximally entangled state. A similar equivalence has been found in~\cite{EPQU25} in the context of characterising measurement incompatibility. Whether an alternative mapping can be found for other Bell inequalities remains an open question. 

Furthermore, it is relevant to investigate what happens when considering more than two parties: can we have an equivalent map from Bell multipartite to prepare-transform-measure scenarios?
This seems to be closely related to sequential quantum random access codes introduced in~\cite{Mohan_2019}. 

\section*{Acknowledgments}
MB acknowledges funding from QuantEdu France, a state aid managed by the French National Research Agency for France 2030 with the reference ANR-22-CMAS-0001.
I\v{S} and DM acknowledge funding from the PEPR integrated project EPiQ ANR-22-PETQ-0007 part of Plan France 2030.
ED acknowledges funding from the European Union’s Horizon Europe research and innovation program under the grant agreement No 101114043 (QSNP).
The authors thank Jef Pauwels, Marco Túlio Quintino, Máté Farkas and Victoria Wright for insightful discussions.

\newpage
\bibliographystyle{apsrev4-1}
\bibliography{biblio}

\begin{thebibliography}{69}%
\makeatletter
\providecommand \@ifxundefined [1]{%
 \@ifx{#1\undefined}
}%
\providecommand \@ifnum [1]{%
 \ifnum #1\expandafter \@firstoftwo
 \else \expandafter \@secondoftwo
 \fi
}%
\providecommand \@ifx [1]{%
 \ifx #1\expandafter \@firstoftwo
 \else \expandafter \@secondoftwo
 \fi
}%
\providecommand \natexlab [1]{#1}%
\providecommand \enquote  [1]{``#1''}%
\providecommand \bibnamefont  [1]{#1}%
\providecommand \bibfnamefont [1]{#1}%
\providecommand \citenamefont [1]{#1}%
\providecommand \href@noop [0]{\@secondoftwo}%
\providecommand \href [0]{\begingroup \@sanitize@url \@href}%
\providecommand \@href[1]{\@@startlink{#1}\@@href}%
\providecommand \@@href[1]{\endgroup#1\@@endlink}%
\providecommand \@sanitize@url [0]{\catcode `\\12\catcode `\$12\catcode
  `\&12\catcode `\#12\catcode `\^12\catcode `\_12\catcode `\%12\relax}%
\providecommand \@@startlink[1]{}%
\providecommand \@@endlink[0]{}%
\providecommand \url  [0]{\begingroup\@sanitize@url \@url }%
\providecommand \@url [1]{\endgroup\@href {#1}{\urlprefix }}%
\providecommand \urlprefix  [0]{URL }%
\providecommand \Eprint [0]{\href }%
\providecommand \doibase [0]{http://dx.doi.org/}%
\providecommand \selectlanguage [0]{\@gobble}%
\providecommand \bibinfo  [0]{\@secondoftwo}%
\providecommand \bibfield  [0]{\@secondoftwo}%
\providecommand \translation [1]{[#1]}%
\providecommand \BibitemOpen [0]{}%
\providecommand \bibitemStop [0]{}%
\providecommand \bibitemNoStop [0]{.\EOS\space}%
\providecommand \EOS [0]{\spacefactor3000\relax}%
\providecommand \BibitemShut  [1]{\csname bibitem#1\endcsname}%
\let\auto@bib@innerbib\@empty
\bibitem [{\citenamefont {Vazirani}\ and\ \citenamefont
  {Vidick}(2014)}]{Vazirani_2014}%
  \BibitemOpen
  \bibfield  {author} {\bibinfo {author} {\bibfnamefont {U.}~\bibnamefont
  {Vazirani}}\ and\ \bibinfo {author} {\bibfnamefont {T.}~\bibnamefont
  {Vidick}},\ }\href {\doibase 10.1103/physrevlett.113.140501} {\bibfield
  {journal} {\bibinfo  {journal} {Physical Review Letters}\ }\textbf {\bibinfo
  {volume} {113}} (\bibinfo {year} {2014}),\
  10.1103/physrevlett.113.140501}\BibitemShut {NoStop}%
\bibitem [{\citenamefont {Bennett}\ and\ \citenamefont
  {Brassard}(2014)}]{bb84-a}%
  \BibitemOpen
  \bibfield  {author} {\bibinfo {author} {\bibfnamefont {C.~H.}\ \bibnamefont
  {Bennett}}\ and\ \bibinfo {author} {\bibfnamefont {G.}~\bibnamefont
  {Brassard}},\ }\href {\doibase 10.1016/j.tcs.2014.05.025} {\bibfield
  {journal} {\bibinfo  {journal} {Theoretical Computer Science}\ }\textbf
  {\bibinfo {volume} {560}},\ \bibinfo {pages} {7–11} (\bibinfo {year}
  {2014})}\BibitemShut {NoStop}%
\bibitem [{\citenamefont {Ekert}(1991)}]{Ekert}%
  \BibitemOpen
  \bibfield  {author} {\bibinfo {author} {\bibfnamefont {A.~K.}\ \bibnamefont
  {Ekert}},\ }\href {\doibase 10.1103/PhysRevLett.67.661} {\bibfield  {journal}
  {\bibinfo  {journal} {Phys. Rev. Lett.}\ }\textbf {\bibinfo {volume} {67}},\
  \bibinfo {pages} {661} (\bibinfo {year} {1991})}\BibitemShut {NoStop}%
\bibitem [{\citenamefont {Bell}(1964)}]{Bell64}%
  \BibitemOpen
  \bibfield  {author} {\bibinfo {author} {\bibfnamefont {J.~S.}\ \bibnamefont
  {Bell}},\ }\href {\doibase 10.1103/PhysicsPhysiqueFizika.1.195} {\bibfield
  {journal} {\bibinfo  {journal} {Physics Physique Fizika}\ }\textbf {\bibinfo
  {volume} {1}},\ \bibinfo {pages} {195} (\bibinfo {year} {1964})}\BibitemShut
  {NoStop}%
\bibitem [{\citenamefont {Mayers}\ and\ \citenamefont {Yao}(1998)}]{MayersYao}%
  \BibitemOpen
  \bibfield  {author} {\bibinfo {author} {\bibfnamefont {D.}~\bibnamefont
  {Mayers}}\ and\ \bibinfo {author} {\bibfnamefont {A.}~\bibnamefont {Yao}},\
  }in\ \href {\doibase 10.1109/SFCS.1998.743501} {\emph {\bibinfo {booktitle}
  {Proceedings 39th Annual Symposium on Foundations of Computer Science (Cat.
  No.98CB36280)}}}\ (\bibinfo {year} {1998})\ pp.\ \bibinfo {pages}
  {503--509}\BibitemShut {NoStop}%
\bibitem [{\citenamefont {Colbeck}(2006)}]{colbeck}%
  \BibitemOpen
  \bibfield  {author} {\bibinfo {author} {\bibfnamefont {R.}~\bibnamefont
  {Colbeck}},\ }\emph {\bibinfo {title} {Quantum And Relativistic Protocols For
  Secure Multi-Party Computation}},\ \href {https://arxiv.org/abs/0911.3814}
  {Ph.D. thesis},\ \bibinfo  {school} {University of Cambridge} (\bibinfo
  {year} {2006})\BibitemShut {NoStop}%
\bibitem [{\citenamefont {Pironio}\ \emph {et~al.}(2010)\citenamefont
  {Pironio}, \citenamefont {Ac{\'\i}n}, \citenamefont {Massar}, \citenamefont
  {de~La~Giroday}, \citenamefont {Matsukevich}, \citenamefont {Maunz},
  \citenamefont {Olmschenk}, \citenamefont {Hayes}, \citenamefont {Luo},
  \citenamefont {Manning} \emph {et~al.}}]{pironio2010random}%
  \BibitemOpen
  \bibfield  {author} {\bibinfo {author} {\bibfnamefont {S.}~\bibnamefont
  {Pironio}}, \bibinfo {author} {\bibfnamefont {A.}~\bibnamefont {Ac{\'\i}n}},
  \bibinfo {author} {\bibfnamefont {S.}~\bibnamefont {Massar}}, \bibinfo
  {author} {\bibfnamefont {A.~B.}\ \bibnamefont {de~La~Giroday}}, \bibinfo
  {author} {\bibfnamefont {D.~N.}\ \bibnamefont {Matsukevich}}, \bibinfo
  {author} {\bibfnamefont {P.}~\bibnamefont {Maunz}}, \bibinfo {author}
  {\bibfnamefont {S.}~\bibnamefont {Olmschenk}}, \bibinfo {author}
  {\bibfnamefont {D.}~\bibnamefont {Hayes}}, \bibinfo {author} {\bibfnamefont
  {L.}~\bibnamefont {Luo}}, \bibinfo {author} {\bibfnamefont {T.~A.}\
  \bibnamefont {Manning}},  \emph {et~al.},\ }\href {\doibase
  10.1038/nature09008} {\bibfield  {journal} {\bibinfo  {journal} {Nature}\
  }\textbf {\bibinfo {volume} {464}},\ \bibinfo {pages} {1021} (\bibinfo {year}
  {2010})}\BibitemShut {NoStop}%
\bibitem [{\citenamefont {Grinbaum}(2017)}]{Grinbaum_2017}%
  \BibitemOpen
  \bibfield  {author} {\bibinfo {author} {\bibfnamefont {A.}~\bibnamefont
  {Grinbaum}},\ }\href {\doibase 10.1016/j.shpsb.2017.03.003} {\bibfield
  {journal} {\bibinfo  {journal} {Studies in History and Philosophy of Science
  Part B: Studies in History and Philosophy of Modern Physics}\ }\textbf
  {\bibinfo {volume} {58}},\ \bibinfo {pages} {22–30} (\bibinfo {year}
  {2017})}\BibitemShut {NoStop}%
\bibitem [{\citenamefont {Woodhead}\ and\ \citenamefont
  {Pironio}(2015)}]{WP15}%
  \BibitemOpen
  \bibfield  {author} {\bibinfo {author} {\bibfnamefont {E.}~\bibnamefont
  {Woodhead}}\ and\ \bibinfo {author} {\bibfnamefont {S.}~\bibnamefont
  {Pironio}},\ }\href {\doibase 10.1103/physrevlett.115.150501} {\bibfield
  {journal} {\bibinfo  {journal} {Physical Review Letters}\ }\textbf {\bibinfo
  {volume} {115}} (\bibinfo {year} {2015}),\
  10.1103/physrevlett.115.150501}\BibitemShut {NoStop}%
\bibitem [{\citenamefont {Rusca}\ \emph {et~al.}(2019)\citenamefont {Rusca},
  \citenamefont {van Himbeeck}, \citenamefont {Martin}, \citenamefont {Brask},
  \citenamefont {Pironio}, \citenamefont {Brunner},\ and\ \citenamefont
  {Zbinden}}]{rusca2019quantum}%
  \BibitemOpen
  \bibfield  {author} {\bibinfo {author} {\bibfnamefont {D.}~\bibnamefont
  {Rusca}}, \bibinfo {author} {\bibfnamefont {T.}~\bibnamefont {van Himbeeck}},
  \bibinfo {author} {\bibfnamefont {A.}~\bibnamefont {Martin}}, \bibinfo
  {author} {\bibfnamefont {J.~B.}\ \bibnamefont {Brask}}, \bibinfo {author}
  {\bibfnamefont {S.}~\bibnamefont {Pironio}}, \bibinfo {author} {\bibfnamefont
  {N.}~\bibnamefont {Brunner}}, \ and\ \bibinfo {author} {\bibfnamefont
  {H.}~\bibnamefont {Zbinden}},\ }in\ \href {\doibase 10.1364/QIM.2019.S1B.3}
  {\emph {\bibinfo {booktitle} {Quantum Information and Measurement (QIM) V:
  Quantum Technologies}}}\ (\bibinfo  {publisher} {Optica Publishing Group},\
  \bibinfo {year} {2019})\ p.\ \bibinfo {pages} {S1B.3}\BibitemShut {NoStop}%
\bibitem [{\citenamefont {Brask}\ \emph {et~al.}(2017)\citenamefont {Brask},
  \citenamefont {Martin}, \citenamefont {Esposito}, \citenamefont {Houlmann},
  \citenamefont {Bowles}, \citenamefont {Zbinden},\ and\ \citenamefont
  {Brunner}}]{brask2017megahertz}%
  \BibitemOpen
  \bibfield  {author} {\bibinfo {author} {\bibfnamefont {J.~B.}\ \bibnamefont
  {Brask}}, \bibinfo {author} {\bibfnamefont {A.}~\bibnamefont {Martin}},
  \bibinfo {author} {\bibfnamefont {W.}~\bibnamefont {Esposito}}, \bibinfo
  {author} {\bibfnamefont {R.}~\bibnamefont {Houlmann}}, \bibinfo {author}
  {\bibfnamefont {J.}~\bibnamefont {Bowles}}, \bibinfo {author} {\bibfnamefont
  {H.}~\bibnamefont {Zbinden}}, \ and\ \bibinfo {author} {\bibfnamefont
  {N.}~\bibnamefont {Brunner}},\ }\href {\doibase
  10.1103/PhysRevApplied.7.054018} {\bibfield  {journal} {\bibinfo  {journal}
  {Phys. Rev. Appl.}\ }\textbf {\bibinfo {volume} {7}},\ \bibinfo {pages}
  {054018} (\bibinfo {year} {2017})}\BibitemShut {NoStop}%
\bibitem [{\citenamefont {Gallego}\ \emph {et~al.}(2010)\citenamefont
  {Gallego}, \citenamefont {Brunner}, \citenamefont {Hadley},\ and\
  \citenamefont {Acín}}]{Gallego_2010}%
  \BibitemOpen
  \bibfield  {author} {\bibinfo {author} {\bibfnamefont {R.}~\bibnamefont
  {Gallego}}, \bibinfo {author} {\bibfnamefont {N.}~\bibnamefont {Brunner}},
  \bibinfo {author} {\bibfnamefont {C.}~\bibnamefont {Hadley}}, \ and\ \bibinfo
  {author} {\bibfnamefont {A.}~\bibnamefont {Acín}},\ }\href {\doibase
  10.1103/physrevlett.105.230501} {\bibfield  {journal} {\bibinfo  {journal}
  {Physical Review Letters}\ }\textbf {\bibinfo {volume} {105}} (\bibinfo
  {year} {2010}),\ 10.1103/physrevlett.105.230501}\BibitemShut {NoStop}%
\bibitem [{\citenamefont {Pawłowski}\ and\ \citenamefont
  {Brunner}(2011)}]{Paw_owski_2011}%
  \BibitemOpen
  \bibfield  {author} {\bibinfo {author} {\bibfnamefont {M.}~\bibnamefont
  {Pawłowski}}\ and\ \bibinfo {author} {\bibfnamefont {N.}~\bibnamefont
  {Brunner}},\ }\href {\doibase 10.1103/physreva.84.010302} {\bibfield
  {journal} {\bibinfo  {journal} {Physical Review A}\ }\textbf {\bibinfo
  {volume} {84}} (\bibinfo {year} {2011}),\
  10.1103/physreva.84.010302}\BibitemShut {NoStop}%
\bibitem [{\citenamefont {Paw\l{}owski}\ and\ \citenamefont
  {Brunner}(2011)}]{pawlowski2011semi}%
  \BibitemOpen
  \bibfield  {author} {\bibinfo {author} {\bibfnamefont {M.}~\bibnamefont
  {Paw\l{}owski}}\ and\ \bibinfo {author} {\bibfnamefont {N.}~\bibnamefont
  {Brunner}},\ }\href {\doibase 10.1103/PhysRevA.84.010302} {\bibfield
  {journal} {\bibinfo  {journal} {Phys. Rev. A}\ }\textbf {\bibinfo {volume}
  {84}},\ \bibinfo {pages} {010302} (\bibinfo {year} {2011})}\BibitemShut
  {NoStop}%
\bibitem [{\citenamefont {Howard}\ \emph {et~al.}(2014)\citenamefont {Howard},
  \citenamefont {Wallman}, \citenamefont {Veitch},\ and\ \citenamefont
  {Emerson}}]{Howard_2014}%
  \BibitemOpen
  \bibfield  {author} {\bibinfo {author} {\bibfnamefont {M.}~\bibnamefont
  {Howard}}, \bibinfo {author} {\bibfnamefont {J.}~\bibnamefont {Wallman}},
  \bibinfo {author} {\bibfnamefont {V.}~\bibnamefont {Veitch}}, \ and\ \bibinfo
  {author} {\bibfnamefont {J.}~\bibnamefont {Emerson}},\ }\href {\doibase
  10.1038/nature13460} {\bibfield  {journal} {\bibinfo  {journal} {Nature}\
  }\textbf {\bibinfo {volume} {510}},\ \bibinfo {pages} {351–355} (\bibinfo
  {year} {2014})}\BibitemShut {NoStop}%
\bibitem [{\citenamefont {Delfosse}\ \emph {et~al.}(2015)\citenamefont
  {Delfosse}, \citenamefont {Allard~Guerin}, \citenamefont {Bian},\ and\
  \citenamefont {Raussendorf}}]{Delfosse}%
  \BibitemOpen
  \bibfield  {author} {\bibinfo {author} {\bibfnamefont {N.}~\bibnamefont
  {Delfosse}}, \bibinfo {author} {\bibfnamefont {P.}~\bibnamefont
  {Allard~Guerin}}, \bibinfo {author} {\bibfnamefont {J.}~\bibnamefont {Bian}},
  \ and\ \bibinfo {author} {\bibfnamefont {R.}~\bibnamefont {Raussendorf}},\
  }\href {\doibase 10.1103/PhysRevX.5.021003} {\bibfield  {journal} {\bibinfo
  {journal} {Phys. Rev. X}\ }\textbf {\bibinfo {volume} {5}},\ \bibinfo {pages}
  {021003} (\bibinfo {year} {2015})}\BibitemShut {NoStop}%
\bibitem [{\citenamefont {Bermejo-Vega}\ \emph {et~al.}(2017)\citenamefont
  {Bermejo-Vega}, \citenamefont {Delfosse}, \citenamefont {Browne},
  \citenamefont {Okay},\ and\ \citenamefont {Raussendorf}}]{Juani}%
  \BibitemOpen
  \bibfield  {author} {\bibinfo {author} {\bibfnamefont {J.}~\bibnamefont
  {Bermejo-Vega}}, \bibinfo {author} {\bibfnamefont {N.}~\bibnamefont
  {Delfosse}}, \bibinfo {author} {\bibfnamefont {D.~E.}\ \bibnamefont
  {Browne}}, \bibinfo {author} {\bibfnamefont {C.}~\bibnamefont {Okay}}, \ and\
  \bibinfo {author} {\bibfnamefont {R.}~\bibnamefont {Raussendorf}},\ }\href
  {\doibase 10.1103/PhysRevLett.119.120505} {\bibfield  {journal} {\bibinfo
  {journal} {Phys. Rev. Lett.}\ }\textbf {\bibinfo {volume} {119}},\ \bibinfo
  {pages} {120505} (\bibinfo {year} {2017})}\BibitemShut {NoStop}%
\bibitem [{\citenamefont {Mansfield}\ and\ \citenamefont
  {Kashefi}(2018)}]{mansfield2018quantum}%
  \BibitemOpen
  \bibfield  {author} {\bibinfo {author} {\bibfnamefont {S.}~\bibnamefont
  {Mansfield}}\ and\ \bibinfo {author} {\bibfnamefont {E.}~\bibnamefont
  {Kashefi}},\ }\href {\doibase 10.1103/PhysRevLett.121.230401} {\bibfield
  {journal} {\bibinfo  {journal} {Phys. Rev. Lett.}\ }\textbf {\bibinfo
  {volume} {121}},\ \bibinfo {pages} {230401} (\bibinfo {year}
  {2018})}\BibitemShut {NoStop}%
\bibitem [{\citenamefont {Šupić}\ and\ \citenamefont
  {Bowles}(2020)}]{vsupic2020self}%
  \BibitemOpen
  \bibfield  {author} {\bibinfo {author} {\bibfnamefont {I.}~\bibnamefont
  {Šupić}}\ and\ \bibinfo {author} {\bibfnamefont {J.}~\bibnamefont
  {Bowles}},\ }\href {\doibase 10.22331/q-2020-09-30-337} {\bibfield  {journal}
  {\bibinfo  {journal} {Quantum}\ }\textbf {\bibinfo {volume} {4}},\ \bibinfo
  {pages} {337} (\bibinfo {year} {2020})}\BibitemShut {NoStop}%
\bibitem [{\citenamefont {Go\ifmmode~\check{c}\else \v{c}\fi{}anin}\ \emph
  {et~al.}(2022)\citenamefont {Go\ifmmode~\check{c}\else \v{c}\fi{}anin},
  \citenamefont {\ifmmode \check{S}\else \v{S}\fi{}upi\ifmmode~\acute{c}\else
  \'{c}\fi{}},\ and\ \citenamefont {Daki\ifmmode~\acute{c}\else
  \'{c}\fi{}}}]{govcanin2022sample}%
  \BibitemOpen
  \bibfield  {author} {\bibinfo {author} {\bibfnamefont {A.}~\bibnamefont
  {Go\ifmmode~\check{c}\else \v{c}\fi{}anin}}, \bibinfo {author} {\bibfnamefont
  {I.}~\bibnamefont {\ifmmode \check{S}\else
  \v{S}\fi{}upi\ifmmode~\acute{c}\else \'{c}\fi{}}}, \ and\ \bibinfo {author}
  {\bibfnamefont {B.}~\bibnamefont {Daki\ifmmode~\acute{c}\else \'{c}\fi{}}},\
  }\href {\doibase 10.1103/PRXQuantum.3.010317} {\bibfield  {journal} {\bibinfo
   {journal} {PRX Quantum}\ }\textbf {\bibinfo {volume} {3}},\ \bibinfo {pages}
  {010317} (\bibinfo {year} {2022})}\BibitemShut {NoStop}%
\bibitem [{\citenamefont {dos Santos~Martins}\ \emph
  {et~al.}(2024)\citenamefont {dos Santos~Martins}, \citenamefont
  {Laurent-Puig}, \citenamefont {{\v{S}}upi{\'c}}, \citenamefont {Markham},\
  and\ \citenamefont {Diamanti}}]{dos2024experimental}%
  \BibitemOpen
  \bibfield  {author} {\bibinfo {author} {\bibfnamefont {L.}~\bibnamefont {dos
  Santos~Martins}}, \bibinfo {author} {\bibfnamefont {N.}~\bibnamefont
  {Laurent-Puig}}, \bibinfo {author} {\bibfnamefont {I.}~\bibnamefont
  {{\v{S}}upi{\'c}}}, \bibinfo {author} {\bibfnamefont {D.}~\bibnamefont
  {Markham}}, \ and\ \bibinfo {author} {\bibfnamefont {E.}~\bibnamefont
  {Diamanti}},\ }\href {https://arxiv.org/abs/2407.13529} {\enquote {\bibinfo
  {title} {Experimental sample-efficient and device-independent ghz state
  certification},}\ } (\bibinfo {year} {2024})\BibitemShut {NoStop}%
\bibitem [{\citenamefont {Wiesner}(1983)}]{Wiesner}%
  \BibitemOpen
  \bibfield  {author} {\bibinfo {author} {\bibfnamefont {S.}~\bibnamefont
  {Wiesner}},\ }\href {\doibase 10.1145/1008908.1008920} {\bibfield  {journal}
  {\bibinfo  {journal} {SIGACT News}\ }\textbf {\bibinfo {volume} {15}},\
  \bibinfo {pages} {78–88} (\bibinfo {year} {1983})}\BibitemShut {NoStop}%
\bibitem [{\citenamefont {Ambainis}\ \emph {et~al.}(1999)\citenamefont
  {Ambainis}, \citenamefont {Nayak}, \citenamefont {Ta-Shma},\ and\
  \citenamefont {Vazirani}}]{Ambainis}%
  \BibitemOpen
  \bibfield  {author} {\bibinfo {author} {\bibfnamefont {A.}~\bibnamefont
  {Ambainis}}, \bibinfo {author} {\bibfnamefont {A.}~\bibnamefont {Nayak}},
  \bibinfo {author} {\bibfnamefont {A.}~\bibnamefont {Ta-Shma}}, \ and\
  \bibinfo {author} {\bibfnamefont {U.}~\bibnamefont {Vazirani}},\ }in\ \href
  {\doibase 10.1145/301250.301347} {\emph {\bibinfo {booktitle} {Proceedings of
  the Thirty-First Annual ACM Symposium on Theory of Computing}}},\ \bibinfo
  {series and number} {STOC '99}\ (\bibinfo  {publisher} {Association for
  Computing Machinery},\ \bibinfo {address} {New York, NY, USA},\ \bibinfo
  {year} {1999})\ p.\ \bibinfo {pages} {376–383}\BibitemShut {NoStop}%
\bibitem [{\citenamefont {Tavakoli}\ \emph {et~al.}(2018)\citenamefont
  {Tavakoli}, \citenamefont {Kaniewski}, \citenamefont {V{\'{e}}rtesi},
  \citenamefont {Rosset},\ and\ \citenamefont {Brunner}}]{TKV+18}%
  \BibitemOpen
  \bibfield  {author} {\bibinfo {author} {\bibfnamefont {A.}~\bibnamefont
  {Tavakoli}}, \bibinfo {author} {\bibfnamefont {J.}~\bibnamefont {Kaniewski}},
  \bibinfo {author} {\bibfnamefont {T.}~\bibnamefont {V{\'{e}}rtesi}}, \bibinfo
  {author} {\bibfnamefont {D.}~\bibnamefont {Rosset}}, \ and\ \bibinfo {author}
  {\bibfnamefont {N.}~\bibnamefont {Brunner}},\ }\href {\doibase
  10.1103/physreva.98.062307} {\bibfield  {journal} {\bibinfo  {journal}
  {Physical Review A}\ }\textbf {\bibinfo {volume} {98}} (\bibinfo {year}
  {2018}),\ 10.1103/physreva.98.062307}\BibitemShut {NoStop}%
\bibitem [{\citenamefont {Farkas}\ and\ \citenamefont
  {Kaniewski}(2019)}]{FK19}%
  \BibitemOpen
  \bibfield  {author} {\bibinfo {author} {\bibfnamefont {M.}~\bibnamefont
  {Farkas}}\ and\ \bibinfo {author} {\bibfnamefont {J.}~\bibnamefont
  {Kaniewski}},\ }\href {\doibase 10.1103/physreva.99.032316} {\bibfield
  {journal} {\bibinfo  {journal} {Physical Review A}\ }\textbf {\bibinfo
  {volume} {99}} (\bibinfo {year} {2019}),\
  10.1103/physreva.99.032316}\BibitemShut {NoStop}%
\bibitem [{\citenamefont {Tavakoli}\ \emph {et~al.}(2020)\citenamefont
  {Tavakoli}, \citenamefont {Smania}, \citenamefont {V{\'{e}}rtesi},
  \citenamefont {Brunner},\ and\ \citenamefont {Bourennane}}]{TSV+20}%
  \BibitemOpen
  \bibfield  {author} {\bibinfo {author} {\bibfnamefont {A.}~\bibnamefont
  {Tavakoli}}, \bibinfo {author} {\bibfnamefont {M.}~\bibnamefont {Smania}},
  \bibinfo {author} {\bibfnamefont {T.}~\bibnamefont {V{\'{e}}rtesi}}, \bibinfo
  {author} {\bibfnamefont {N.}~\bibnamefont {Brunner}}, \ and\ \bibinfo
  {author} {\bibfnamefont {M.}~\bibnamefont {Bourennane}},\ }\href {\doibase
  10.1126/sciadv.aaw6664} {\bibfield  {journal} {\bibinfo  {journal} {Science
  Advances}\ }\textbf {\bibinfo {volume} {6}} (\bibinfo {year} {2020}),\
  10.1126/sciadv.aaw6664}\BibitemShut {NoStop}%
\bibitem [{\citenamefont {Tavakoli}(2020)}]{Tav20}%
  \BibitemOpen
  \bibfield  {author} {\bibinfo {author} {\bibfnamefont {A.}~\bibnamefont
  {Tavakoli}},\ }\href {\doibase 10.1103/PhysRevLett.125.150503} {\bibfield
  {journal} {\bibinfo  {journal} {Phys. Rev. Lett.}\ }\textbf {\bibinfo
  {volume} {125}},\ \bibinfo {pages} {150503} (\bibinfo {year}
  {2020})}\BibitemShut {NoStop}%
\bibitem [{\citenamefont {Miklin}\ and\ \citenamefont
  {Oszmaniec}(2021)}]{MO21}%
  \BibitemOpen
  \bibfield  {author} {\bibinfo {author} {\bibfnamefont {N.}~\bibnamefont
  {Miklin}}\ and\ \bibinfo {author} {\bibfnamefont {M.}~\bibnamefont
  {Oszmaniec}},\ }\href {\doibase 10.22331/q-2021-04-06-424} {\bibfield
  {journal} {\bibinfo  {journal} {{Quantum}}\ }\textbf {\bibinfo {volume}
  {5}},\ \bibinfo {pages} {424} (\bibinfo {year} {2021})}\BibitemShut {NoStop}%
\bibitem [{\citenamefont {Dr\'otos}\ \emph {et~al.}(2024)\citenamefont
  {Dr\'otos}, \citenamefont {P\'al},\ and\ \citenamefont {V\'ertesi}}]{DPV24}%
  \BibitemOpen
  \bibfield  {author} {\bibinfo {author} {\bibfnamefont {G.}~\bibnamefont
  {Dr\'otos}}, \bibinfo {author} {\bibfnamefont {K.~F.}\ \bibnamefont {P\'al}},
  \ and\ \bibinfo {author} {\bibfnamefont {T.}~\bibnamefont {V\'ertesi}},\
  }\href {\doibase 10.1103/PhysRevA.110.032427} {\bibfield  {journal} {\bibinfo
   {journal} {Phys. Rev. A}\ }\textbf {\bibinfo {volume} {110}},\ \bibinfo
  {pages} {032427} (\bibinfo {year} {2024})}\BibitemShut {NoStop}%
\bibitem [{\citenamefont {Dr{\'o}tos}\ \emph {et~al.}(2024)\citenamefont
  {Dr{\'o}tos}, \citenamefont {P{\'a}l}, \citenamefont {Taoutioui},\ and\
  \citenamefont {V{\'e}rtesi}}]{DPTV24}%
  \BibitemOpen
  \bibfield  {author} {\bibinfo {author} {\bibfnamefont {G.}~\bibnamefont
  {Dr{\'o}tos}}, \bibinfo {author} {\bibfnamefont {K.~F.}\ \bibnamefont
  {P{\'a}l}}, \bibinfo {author} {\bibfnamefont {A.}~\bibnamefont {Taoutioui}},
  \ and\ \bibinfo {author} {\bibfnamefont {T.}~\bibnamefont {V{\'e}rtesi}},\
  }\href {\doibase 10.1088/1367-2630/ad4e5c} {\bibfield  {journal} {\bibinfo
  {journal} {New Journal of Physics}\ }\textbf {\bibinfo {volume} {26}},\
  \bibinfo {pages} {063012} (\bibinfo {year} {2024})}\BibitemShut {NoStop}%
\bibitem [{\citenamefont {Navascues}\ \emph {et~al.}(2023)\citenamefont
  {Navascues}, \citenamefont {Pál}, \citenamefont {Vértesi},\ and\
  \citenamefont {Araújo}}]{NPVA23}%
  \BibitemOpen
  \bibfield  {author} {\bibinfo {author} {\bibfnamefont {M.}~\bibnamefont
  {Navascues}}, \bibinfo {author} {\bibfnamefont {K.~F.}\ \bibnamefont {Pál}},
  \bibinfo {author} {\bibfnamefont {T.}~\bibnamefont {Vértesi}}, \ and\
  \bibinfo {author} {\bibfnamefont {M.}~\bibnamefont {Araújo}},\ }\href@noop
  {} {\enquote {\bibinfo {title} {Self-testing in prepare-and-measure scenarios
  and a robust version of wigner's theorem},}\ } (\bibinfo {year} {2023}),\
  \Eprint {http://arxiv.org/abs/2306.00730} {arXiv:2306.00730 [quant-ph]}
  \BibitemShut {NoStop}%
\bibitem [{\citenamefont {Van~Himbeeck}\ \emph {et~al.}(2017)\citenamefont
  {Van~Himbeeck}, \citenamefont {Woodhead}, \citenamefont {Cerf}, \citenamefont
  {Garc{\'{i}}a-Patr{\'{o}}n},\ and\ \citenamefont
  {Pironio}}]{VanHimbeeck2017semidevice}%
  \BibitemOpen
  \bibfield  {author} {\bibinfo {author} {\bibfnamefont {T.}~\bibnamefont
  {Van~Himbeeck}}, \bibinfo {author} {\bibfnamefont {E.}~\bibnamefont
  {Woodhead}}, \bibinfo {author} {\bibfnamefont {N.~J.}\ \bibnamefont {Cerf}},
  \bibinfo {author} {\bibfnamefont {R.}~\bibnamefont
  {Garc{\'{i}}a-Patr{\'{o}}n}}, \ and\ \bibinfo {author} {\bibfnamefont
  {S.}~\bibnamefont {Pironio}},\ }\href {\doibase 10.22331/q-2017-11-18-33}
  {\bibfield  {journal} {\bibinfo  {journal} {{Quantum}}\ }\textbf {\bibinfo
  {volume} {1}},\ \bibinfo {pages} {33} (\bibinfo {year} {2017})}\BibitemShut
  {NoStop}%
\bibitem [{\citenamefont {Pauwels}\ \emph {et~al.}(2025)\citenamefont
  {Pauwels}, \citenamefont {Pironio},\ and\ \citenamefont {Tavakoli}}]{PPT24}%
  \BibitemOpen
  \bibfield  {author} {\bibinfo {author} {\bibfnamefont {J.}~\bibnamefont
  {Pauwels}}, \bibinfo {author} {\bibfnamefont {S.}~\bibnamefont {Pironio}}, \
  and\ \bibinfo {author} {\bibfnamefont {A.}~\bibnamefont {Tavakoli}},\ }\href
  {\doibase 10.22331/q-2025-02-18-1637} {\bibfield  {journal} {\bibinfo
  {journal} {Quantum}\ }\textbf {\bibinfo {volume} {9}},\ \bibinfo {pages}
  {1637} (\bibinfo {year} {2025})}\BibitemShut {NoStop}%
\bibitem [{\citenamefont {Spekkens}(2005)}]{Spekkens}%
  \BibitemOpen
  \bibfield  {author} {\bibinfo {author} {\bibfnamefont {R.~W.}\ \bibnamefont
  {Spekkens}},\ }\href {\doibase 10.1103/PhysRevA.71.052108} {\bibfield
  {journal} {\bibinfo  {journal} {Phys. Rev. A}\ }\textbf {\bibinfo {volume}
  {71}},\ \bibinfo {pages} {052108} (\bibinfo {year} {2005})}\BibitemShut
  {NoStop}%
\bibitem [{\citenamefont {Wright}\ and\ \citenamefont {Farkas}(2023)}]{WF23}%
  \BibitemOpen
  \bibfield  {author} {\bibinfo {author} {\bibfnamefont {V.~J.}\ \bibnamefont
  {Wright}}\ and\ \bibinfo {author} {\bibfnamefont {M.}~\bibnamefont
  {Farkas}},\ }\href {\doibase 10.1103/physrevlett.131.220202} {\bibfield
  {journal} {\bibinfo  {journal} {Physical Review Letters}\ }\textbf {\bibinfo
  {volume} {131}} (\bibinfo {year} {2023}),\
  10.1103/physrevlett.131.220202}\BibitemShut {NoStop}%
\bibitem [{\citenamefont {Catani}\ \emph {et~al.}(2024)\citenamefont {Catani},
  \citenamefont {Faleiro}, \citenamefont {Emeriau}, \citenamefont {Mansfield},\
  and\ \citenamefont {Pappa}}]{CFE+22}%
  \BibitemOpen
  \bibfield  {author} {\bibinfo {author} {\bibfnamefont {L.}~\bibnamefont
  {Catani}}, \bibinfo {author} {\bibfnamefont {R.}~\bibnamefont {Faleiro}},
  \bibinfo {author} {\bibfnamefont {P.-E.}\ \bibnamefont {Emeriau}}, \bibinfo
  {author} {\bibfnamefont {S.}~\bibnamefont {Mansfield}}, \ and\ \bibinfo
  {author} {\bibfnamefont {A.}~\bibnamefont {Pappa}},\ }\href {\doibase
  10.1103/PhysRevA.109.012427} {\bibfield  {journal} {\bibinfo  {journal}
  {Phys. Rev. A}\ }\textbf {\bibinfo {volume} {109}},\ \bibinfo {pages}
  {012427} (\bibinfo {year} {2024})}\BibitemShut {NoStop}%
\bibitem [{\citenamefont {Cui}\ \emph {et~al.}(2024)\citenamefont {Cui},
  \citenamefont {Malavolta}, \citenamefont {Mehta}, \citenamefont {Natarajan},
  \citenamefont {Paddock}, \citenamefont {Schmidt}, \citenamefont {Walter},\
  and\ \citenamefont {Zhang}}]{cui2024computational}%
  \BibitemOpen
  \bibfield  {author} {\bibinfo {author} {\bibfnamefont {D.}~\bibnamefont
  {Cui}}, \bibinfo {author} {\bibfnamefont {G.}~\bibnamefont {Malavolta}},
  \bibinfo {author} {\bibfnamefont {A.}~\bibnamefont {Mehta}}, \bibinfo
  {author} {\bibfnamefont {A.}~\bibnamefont {Natarajan}}, \bibinfo {author}
  {\bibfnamefont {C.}~\bibnamefont {Paddock}}, \bibinfo {author} {\bibfnamefont
  {S.}~\bibnamefont {Schmidt}}, \bibinfo {author} {\bibfnamefont
  {M.}~\bibnamefont {Walter}}, \ and\ \bibinfo {author} {\bibfnamefont
  {T.}~\bibnamefont {Zhang}},\ }\href@noop {} {\enquote {\bibinfo {title} {A
  computational tsirelson's theorem for the value of compiled xor games},}\ }
  (\bibinfo {year} {2024}),\ \Eprint {http://arxiv.org/abs/2402.17301}
  {arXiv:2402.17301 [quant-ph]} \BibitemShut {NoStop}%
\bibitem [{\citenamefont {Baroni}\ \emph {et~al.}(2024)\citenamefont {Baroni},
  \citenamefont {Vu}, \citenamefont {Bourdoncle}, \citenamefont {Diamanti},
  \citenamefont {Markham},\ and\ \citenamefont {Šupić}}]{baroni2024quantum}%
  \BibitemOpen
  \bibfield  {author} {\bibinfo {author} {\bibfnamefont {M.}~\bibnamefont
  {Baroni}}, \bibinfo {author} {\bibfnamefont {Q.-H.}\ \bibnamefont {Vu}},
  \bibinfo {author} {\bibfnamefont {B.}~\bibnamefont {Bourdoncle}}, \bibinfo
  {author} {\bibfnamefont {E.}~\bibnamefont {Diamanti}}, \bibinfo {author}
  {\bibfnamefont {D.}~\bibnamefont {Markham}}, \ and\ \bibinfo {author}
  {\bibfnamefont {I.}~\bibnamefont {Šupić}},\ }\href@noop {} {\enquote
  {\bibinfo {title} {Quantum bounds for compiled xor games and $d$-outcome chsh
  games},}\ } (\bibinfo {year} {2024}),\ \Eprint
  {http://arxiv.org/abs/2403.05502} {arXiv:2403.05502 [quant-ph]} \BibitemShut
  {NoStop}%
\bibitem [{\citenamefont {Le}\ \emph {et~al.}(2023)\citenamefont {Le},
  \citenamefont {Meroni}, \citenamefont {Sturmfels}, \citenamefont {Werner},\
  and\ \citenamefont {Ziegler}}]{Le_2023}%
  \BibitemOpen
  \bibfield  {author} {\bibinfo {author} {\bibfnamefont {T.~P.}\ \bibnamefont
  {Le}}, \bibinfo {author} {\bibfnamefont {C.}~\bibnamefont {Meroni}}, \bibinfo
  {author} {\bibfnamefont {B.}~\bibnamefont {Sturmfels}}, \bibinfo {author}
  {\bibfnamefont {R.~F.}\ \bibnamefont {Werner}}, \ and\ \bibinfo {author}
  {\bibfnamefont {T.}~\bibnamefont {Ziegler}},\ }\href {\doibase
  10.22331/q-2023-03-16-947} {\bibfield  {journal} {\bibinfo  {journal}
  {Quantum}\ }\textbf {\bibinfo {volume} {7}},\ \bibinfo {pages} {947}
  (\bibinfo {year} {2023})}\BibitemShut {NoStop}%
\bibitem [{\citenamefont {Barizien}\ \emph {et~al.}(2023)\citenamefont
  {Barizien}, \citenamefont {Sekatski},\ and\ \citenamefont
  {Bancal}}]{barizien2023custom}%
  \BibitemOpen
  \bibfield  {author} {\bibinfo {author} {\bibfnamefont {V.}~\bibnamefont
  {Barizien}}, \bibinfo {author} {\bibfnamefont {P.}~\bibnamefont {Sekatski}},
  \ and\ \bibinfo {author} {\bibfnamefont {J.-D.}\ \bibnamefont {Bancal}},\
  }\href@noop {} {\enquote {\bibinfo {title} {Custom bell inequalities from
  formal sums of squares},}\ } (\bibinfo {year} {2023}),\ \Eprint
  {http://arxiv.org/abs/2308.08601} {arXiv:2308.08601 [quant-ph]} \BibitemShut
  {NoStop}%
\bibitem [{\citenamefont {Wooltorton}\ \emph {et~al.}(2023)\citenamefont
  {Wooltorton}, \citenamefont {Brown},\ and\ \citenamefont
  {Colbeck}}]{wooltorton2023deviceindependent}%
  \BibitemOpen
  \bibfield  {author} {\bibinfo {author} {\bibfnamefont {L.}~\bibnamefont
  {Wooltorton}}, \bibinfo {author} {\bibfnamefont {P.}~\bibnamefont {Brown}}, \
  and\ \bibinfo {author} {\bibfnamefont {R.}~\bibnamefont {Colbeck}},\
  }\href@noop {} {\enquote {\bibinfo {title} {Device-independent quantum key
  distribution with arbitrarily small nonlocality},}\ } (\bibinfo {year}
  {2023}),\ \Eprint {http://arxiv.org/abs/2309.09650} {arXiv:2309.09650
  [quant-ph]} \BibitemShut {NoStop}%
\bibitem [{\citenamefont {Gisin}(2007)}]{Gis07}%
  \BibitemOpen
  \bibfield  {author} {\bibinfo {author} {\bibfnamefont {N.}~\bibnamefont
  {Gisin}},\ }\href {https://arxiv.org/abs/quant-ph/0702021} {\enquote
  {\bibinfo {title} {Bell inequalities: many questions, a few answers},}\ }
  (\bibinfo {year} {2007})\BibitemShut {NoStop}%
\bibitem [{\citenamefont {Acín}\ \emph {et~al.}(2016)\citenamefont {Acín},
  \citenamefont {Pironio}, \citenamefont {Vértesi},\ and\ \citenamefont
  {Wittek}}]{Acin_2016}%
  \BibitemOpen
  \bibfield  {author} {\bibinfo {author} {\bibfnamefont {A.}~\bibnamefont
  {Acín}}, \bibinfo {author} {\bibfnamefont {S.}~\bibnamefont {Pironio}},
  \bibinfo {author} {\bibfnamefont {T.}~\bibnamefont {Vértesi}}, \ and\
  \bibinfo {author} {\bibfnamefont {P.}~\bibnamefont {Wittek}},\ }\href
  {\doibase 10.1103/physreva.93.040102} {\bibfield  {journal} {\bibinfo
  {journal} {Physical Review A}\ }\textbf {\bibinfo {volume} {93}} (\bibinfo
  {year} {2016}),\ 10.1103/physreva.93.040102}\BibitemShut {NoStop}%
\bibitem [{\citenamefont {Pearle}(1970)}]{Pearle}%
  \BibitemOpen
  \bibfield  {author} {\bibinfo {author} {\bibfnamefont {P.~M.}\ \bibnamefont
  {Pearle}},\ }\href {\doibase 10.1103/PhysRevD.2.1418} {\bibfield  {journal}
  {\bibinfo  {journal} {Phys. Rev. D}\ }\textbf {\bibinfo {volume} {2}},\
  \bibinfo {pages} {1418} (\bibinfo {year} {1970})}\BibitemShut {NoStop}%
\bibitem [{\citenamefont {{Braunstein}}\ and\ \citenamefont
  {{Caves}}(1990)}]{BC}%
  \BibitemOpen
  \bibfield  {author} {\bibinfo {author} {\bibfnamefont {S.~L.}\ \bibnamefont
  {{Braunstein}}}\ and\ \bibinfo {author} {\bibfnamefont {C.~M.}\ \bibnamefont
  {{Caves}}},\ }\href {\doibase 10.1016/0003-4916(90)90339-P} {\bibfield
  {journal} {\bibinfo  {journal} {Annals of Physics}\ }\textbf {\bibinfo
  {volume} {202}},\ \bibinfo {pages} {22} (\bibinfo {year} {1990})}\BibitemShut
  {NoStop}%
\bibitem [{\citenamefont {Wehner}(2006)}]{wehner}%
  \BibitemOpen
  \bibfield  {author} {\bibinfo {author} {\bibfnamefont {S.}~\bibnamefont
  {Wehner}},\ }\href {\doibase 10.1103/PhysRevA.73.022110} {\bibfield
  {journal} {\bibinfo  {journal} {Phys. Rev. A}\ }\textbf {\bibinfo {volume}
  {73}},\ \bibinfo {pages} {022110} (\bibinfo {year} {2006})}\BibitemShut
  {NoStop}%
\bibitem [{\citenamefont {{\v{S}}upi{\'{c}}}\ \emph {et~al.}(2016)\citenamefont
  {{\v{S}}upi{\'{c}}}, \citenamefont {Augusiak}, \citenamefont {Salavrakos},\
  and\ \citenamefont {Ac{\'{\i}}n}}]{SASA16}%
  \BibitemOpen
  \bibfield  {author} {\bibinfo {author} {\bibfnamefont {I.}~\bibnamefont
  {{\v{S}}upi{\'{c}}}}, \bibinfo {author} {\bibfnamefont {R.}~\bibnamefont
  {Augusiak}}, \bibinfo {author} {\bibfnamefont {A.}~\bibnamefont
  {Salavrakos}}, \ and\ \bibinfo {author} {\bibfnamefont {A.}~\bibnamefont
  {Ac{\'{\i}}n}},\ }\href {\doibase 10.1088/1367-2630/18/3/035013} {\bibfield
  {journal} {\bibinfo  {journal} {New Journal of Physics}\ }\textbf {\bibinfo
  {volume} {18}},\ \bibinfo {pages} {035013} (\bibinfo {year}
  {2016})}\BibitemShut {NoStop}%
\bibitem [{\citenamefont {Kaniewski}\ \emph {et~al.}(2019)\citenamefont
  {Kaniewski}, \citenamefont {{\v{S}}upi{\'{c}}}, \citenamefont {Tura},
  \citenamefont {Baccari}, \citenamefont {Salavrakos},\ and\ \citenamefont
  {Augusiak}}]{KST+19}%
  \BibitemOpen
  \bibfield  {author} {\bibinfo {author} {\bibfnamefont {J.}~\bibnamefont
  {Kaniewski}}, \bibinfo {author} {\bibfnamefont {I.}~\bibnamefont
  {{\v{S}}upi{\'{c}}}}, \bibinfo {author} {\bibfnamefont {J.}~\bibnamefont
  {Tura}}, \bibinfo {author} {\bibfnamefont {F.}~\bibnamefont {Baccari}},
  \bibinfo {author} {\bibfnamefont {A.}~\bibnamefont {Salavrakos}}, \ and\
  \bibinfo {author} {\bibfnamefont {R.}~\bibnamefont {Augusiak}},\ }\href
  {\doibase 10.22331/q-2019-10-24-198} {\bibfield  {journal} {\bibinfo
  {journal} {Quantum}\ }\textbf {\bibinfo {volume} {3}},\ \bibinfo {pages}
  {198} (\bibinfo {year} {2019})}\BibitemShut {NoStop}%
\bibitem [{\citenamefont {Buhrman}\ and\ \citenamefont
  {Massar}(2005)}]{BuhrmanMasar}%
  \BibitemOpen
  \bibfield  {author} {\bibinfo {author} {\bibfnamefont {H.}~\bibnamefont
  {Buhrman}}\ and\ \bibinfo {author} {\bibfnamefont {S.}~\bibnamefont
  {Massar}},\ }\href {\doibase 10.1103/PhysRevA.72.052103} {\bibfield
  {journal} {\bibinfo  {journal} {Phys. Rev. A}\ }\textbf {\bibinfo {volume}
  {72}},\ \bibinfo {pages} {052103} (\bibinfo {year} {2005})}\BibitemShut
  {NoStop}%
\bibitem [{\citenamefont {Salavrakos}\ \emph {et~al.}(2017)\citenamefont
  {Salavrakos}, \citenamefont {Augusiak}, \citenamefont {Tura}, \citenamefont
  {Wittek}, \citenamefont {Ac{\'{\i}}n},\ and\ \citenamefont
  {Pironio}}]{SAT+17}%
  \BibitemOpen
  \bibfield  {author} {\bibinfo {author} {\bibfnamefont {A.}~\bibnamefont
  {Salavrakos}}, \bibinfo {author} {\bibfnamefont {R.}~\bibnamefont
  {Augusiak}}, \bibinfo {author} {\bibfnamefont {J.}~\bibnamefont {Tura}},
  \bibinfo {author} {\bibfnamefont {P.}~\bibnamefont {Wittek}}, \bibinfo
  {author} {\bibfnamefont {A.}~\bibnamefont {Ac{\'{\i}}n}}, \ and\ \bibinfo
  {author} {\bibfnamefont {S.}~\bibnamefont {Pironio}},\ }\href {\doibase
  10.1103/physrevlett.119.040402} {\bibfield  {journal} {\bibinfo  {journal}
  {Physical Review Letters}\ }\textbf {\bibinfo {volume} {119}} (\bibinfo
  {year} {2017}),\ 10.1103/physrevlett.119.040402}\BibitemShut {NoStop}%
\bibitem [{\citenamefont {Ac\'{\i}n}\ \emph {et~al.}(2012)\citenamefont
  {Ac\'{\i}n}, \citenamefont {Massar},\ and\ \citenamefont {Pironio}}]{AMP12}%
  \BibitemOpen
  \bibfield  {author} {\bibinfo {author} {\bibfnamefont {A.}~\bibnamefont
  {Ac\'{\i}n}}, \bibinfo {author} {\bibfnamefont {S.}~\bibnamefont {Massar}}, \
  and\ \bibinfo {author} {\bibfnamefont {S.}~\bibnamefont {Pironio}},\ }\href
  {\doibase 10.1103/PhysRevLett.108.100402} {\bibfield  {journal} {\bibinfo
  {journal} {Phys. Rev. Lett.}\ }\textbf {\bibinfo {volume} {108}},\ \bibinfo
  {pages} {100402} (\bibinfo {year} {2012})}\BibitemShut {NoStop}%
\bibitem [{\citenamefont {Bamps}\ and\ \citenamefont {Pironio}(2015)}]{BP15}%
  \BibitemOpen
  \bibfield  {author} {\bibinfo {author} {\bibfnamefont {C.}~\bibnamefont
  {Bamps}}\ and\ \bibinfo {author} {\bibfnamefont {S.}~\bibnamefont
  {Pironio}},\ }\href {\doibase 10.1103/PhysRevA.91.052111} {\bibfield
  {journal} {\bibinfo  {journal} {Phys. Rev. A}\ }\textbf {\bibinfo {volume}
  {91}},\ \bibinfo {pages} {052111} (\bibinfo {year} {2015})}\BibitemShut
  {NoStop}%
\bibitem [{\citenamefont {Ambainis}\ \emph {et~al.}(2009)\citenamefont
  {Ambainis}, \citenamefont {Leung}, \citenamefont {Mancinska},\ and\
  \citenamefont {Ozols}}]{QRAC}%
  \BibitemOpen
  \bibfield  {author} {\bibinfo {author} {\bibfnamefont {A.}~\bibnamefont
  {Ambainis}}, \bibinfo {author} {\bibfnamefont {D.}~\bibnamefont {Leung}},
  \bibinfo {author} {\bibfnamefont {L.}~\bibnamefont {Mancinska}}, \ and\
  \bibinfo {author} {\bibfnamefont {M.}~\bibnamefont {Ozols}},\ }\href
  {https://arxiv.org/abs/0810.2937} {\enquote {\bibinfo {title} {Quantum random
  access codes with shared randomness},}\ } (\bibinfo {year}
  {2009})\BibitemShut {NoStop}%
\bibitem [{\citenamefont {Farkas}\ \emph {et~al.}(2025)\citenamefont {Farkas},
  \citenamefont {Miklin},\ and\ \citenamefont {Tavakoli}}]{farkas2023simple}%
  \BibitemOpen
  \bibfield  {author} {\bibinfo {author} {\bibfnamefont {M.}~\bibnamefont
  {Farkas}}, \bibinfo {author} {\bibfnamefont {N.}~\bibnamefont {Miklin}}, \
  and\ \bibinfo {author} {\bibfnamefont {A.}~\bibnamefont {Tavakoli}},\ }\href
  {\doibase 10.22331/q-2025-02-25-1643} {\bibfield  {journal} {\bibinfo
  {journal} {Quantum}\ }\textbf {\bibinfo {volume} {9}},\ \bibinfo {pages}
  {1643} (\bibinfo {year} {2025})}\BibitemShut {NoStop}%
\bibitem [{\citenamefont {Kerenidis}\ and\ \citenamefont
  {Vazirani}(2004)}]{kerenidis2004quantum}%
  \BibitemOpen
  \bibfield  {author} {\bibinfo {author} {\bibfnamefont {I.}~\bibnamefont
  {Kerenidis}}\ and\ \bibinfo {author} {\bibfnamefont {U.~V.}\ \bibnamefont
  {Vazirani}},\ }\emph {\bibinfo {title} {Quantum encodings and applications to
  locally decodable codes and communication complexity}},\ \href@noop {} {Ph.D.
  thesis},\ \bibinfo  {school} {University of California at Berkeley}, \bibinfo
  {address} {USA} (\bibinfo {year} {2004}),\ \bibinfo {note}
  {aAI3165439}\BibitemShut {NoStop}%
\bibitem [{\citenamefont {Cleve}\ \emph {et~al.}(2004)\citenamefont {Cleve},
  \citenamefont {Hoyer}, \citenamefont {Toner},\ and\ \citenamefont
  {Watrous}}]{CHTW04}%
  \BibitemOpen
  \bibfield  {author} {\bibinfo {author} {\bibfnamefont {R.}~\bibnamefont
  {Cleve}}, \bibinfo {author} {\bibfnamefont {P.}~\bibnamefont {Hoyer}},
  \bibinfo {author} {\bibfnamefont {B.}~\bibnamefont {Toner}}, \ and\ \bibinfo
  {author} {\bibfnamefont {J.}~\bibnamefont {Watrous}},\ }in\ \href {\doibase
  10.1109/CCC.2004.1313847} {\emph {\bibinfo {booktitle} {Proceedings. 19th
  IEEE Annual Conference on Computational Complexity, 2004.}}}\ (\bibinfo
  {year} {2004})\ pp.\ \bibinfo {pages} {236--249}\BibitemShut {NoStop}%
\bibitem [{\citenamefont {Tavakoli}\ \emph {et~al.}(2016)\citenamefont
  {Tavakoli}, \citenamefont {Marques}, \citenamefont {Paw\l{}owski},\ and\
  \citenamefont {Bourennane}}]{TMPB16}%
  \BibitemOpen
  \bibfield  {author} {\bibinfo {author} {\bibfnamefont {A.}~\bibnamefont
  {Tavakoli}}, \bibinfo {author} {\bibfnamefont {B.}~\bibnamefont {Marques}},
  \bibinfo {author} {\bibfnamefont {M.}~\bibnamefont {Paw\l{}owski}}, \ and\
  \bibinfo {author} {\bibfnamefont {M.}~\bibnamefont {Bourennane}},\ }\href
  {\doibase 10.1103/PhysRevA.93.032336} {\bibfield  {journal} {\bibinfo
  {journal} {Phys. Rev. A}\ }\textbf {\bibinfo {volume} {93}},\ \bibinfo
  {pages} {032336} (\bibinfo {year} {2016})}\BibitemShut {NoStop}%
\bibitem [{\citenamefont {Renou}\ \emph {et~al.}(2018)\citenamefont {Renou},
  \citenamefont {Kaniewski},\ and\ \citenamefont {Brunner}}]{RenouPRL}%
  \BibitemOpen
  \bibfield  {author} {\bibinfo {author} {\bibfnamefont {M.~O.}\ \bibnamefont
  {Renou}}, \bibinfo {author} {\bibfnamefont {J.~m.~k.}\ \bibnamefont
  {Kaniewski}}, \ and\ \bibinfo {author} {\bibfnamefont {N.}~\bibnamefont
  {Brunner}},\ }\href {\doibase 10.1103/PhysRevLett.121.250507} {\bibfield
  {journal} {\bibinfo  {journal} {Phys. Rev. Lett.}\ }\textbf {\bibinfo
  {volume} {121}},\ \bibinfo {pages} {250507} (\bibinfo {year}
  {2018})}\BibitemShut {NoStop}%
\bibitem [{\citenamefont {{\v{S}}upi{\'c}}\ \emph {et~al.}(2022)\citenamefont
  {{\v{S}}upi{\'c}}, \citenamefont {Bancal}, \citenamefont {Cai},\ and\
  \citenamefont {Brunner}}]{vsupic2022genuine}%
  \BibitemOpen
  \bibfield  {author} {\bibinfo {author} {\bibfnamefont {I.}~\bibnamefont
  {{\v{S}}upi{\'c}}}, \bibinfo {author} {\bibfnamefont {J.-D.}\ \bibnamefont
  {Bancal}}, \bibinfo {author} {\bibfnamefont {Y.}~\bibnamefont {Cai}}, \ and\
  \bibinfo {author} {\bibfnamefont {N.}~\bibnamefont {Brunner}},\ }\href
  {\doibase 10.1103/PhysRevA.105.022206} {\bibfield  {journal} {\bibinfo
  {journal} {Physical Review A}\ }\textbf {\bibinfo {volume} {105}},\ \bibinfo
  {pages} {022206} (\bibinfo {year} {2022})}\BibitemShut {NoStop}%
\bibitem [{\citenamefont {Bowles}\ \emph {et~al.}(2018)\citenamefont {Bowles},
  \citenamefont {{\v{S}}upi{\'c}}, \citenamefont {Cavalcanti},\ and\
  \citenamefont {Ac{\'\i}n}}]{bowles2018device}%
  \BibitemOpen
  \bibfield  {author} {\bibinfo {author} {\bibfnamefont {J.}~\bibnamefont
  {Bowles}}, \bibinfo {author} {\bibfnamefont {I.}~\bibnamefont
  {{\v{S}}upi{\'c}}}, \bibinfo {author} {\bibfnamefont {D.}~\bibnamefont
  {Cavalcanti}}, \ and\ \bibinfo {author} {\bibfnamefont {A.}~\bibnamefont
  {Ac{\'\i}n}},\ }\href {\doibase 10.1103/PhysRevLett.121.180503} {\bibfield
  {journal} {\bibinfo  {journal} {Physical review letters}\ }\textbf {\bibinfo
  {volume} {121}},\ \bibinfo {pages} {180503} (\bibinfo {year}
  {2018})}\BibitemShut {NoStop}%
\bibitem [{\citenamefont {{\v{S}}upi{\'c}}\ \emph {et~al.}(2023)\citenamefont
  {{\v{S}}upi{\'c}}, \citenamefont {Bowles}, \citenamefont {Renou},
  \citenamefont {Ac{\'\i}n},\ and\ \citenamefont {Hoban}}]{vsupic2023quantum}%
  \BibitemOpen
  \bibfield  {author} {\bibinfo {author} {\bibfnamefont {I.}~\bibnamefont
  {{\v{S}}upi{\'c}}}, \bibinfo {author} {\bibfnamefont {J.}~\bibnamefont
  {Bowles}}, \bibinfo {author} {\bibfnamefont {M.-O.}\ \bibnamefont {Renou}},
  \bibinfo {author} {\bibfnamefont {A.}~\bibnamefont {Ac{\'\i}n}}, \ and\
  \bibinfo {author} {\bibfnamefont {M.~J.}\ \bibnamefont {Hoban}},\ }\href
  {\doibase 10.1038/s41567-023-01945-4} {\bibfield  {journal} {\bibinfo
  {journal} {Nature Physics}\ }\textbf {\bibinfo {volume} {19}},\ \bibinfo
  {pages} {670} (\bibinfo {year} {2023})}\BibitemShut {NoStop}%
\bibitem [{\citenamefont {Dourdent}\ \emph {et~al.}(2024)\citenamefont
  {Dourdent}, \citenamefont {Abbott}, \citenamefont {{\v{S}}upi{\'c}},\ and\
  \citenamefont {Branciard}}]{dourdent2024network}%
  \BibitemOpen
  \bibfield  {author} {\bibinfo {author} {\bibfnamefont {H.}~\bibnamefont
  {Dourdent}}, \bibinfo {author} {\bibfnamefont {A.~A.}\ \bibnamefont
  {Abbott}}, \bibinfo {author} {\bibfnamefont {I.}~\bibnamefont
  {{\v{S}}upi{\'c}}}, \ and\ \bibinfo {author} {\bibfnamefont {C.}~\bibnamefont
  {Branciard}},\ }\href {\doibase 10.22331/q-2024-10-30-1514} {\bibfield
  {journal} {\bibinfo  {journal} {Quantum}\ }\textbf {\bibinfo {volume} {8}},\
  \bibinfo {pages} {1514} (\bibinfo {year} {2024})}\BibitemShut {NoStop}%
\bibitem [{\citenamefont {Brukner}\ \emph {et~al.}(2004)\citenamefont
  {Brukner}, \citenamefont {Taylor}, \citenamefont {Cheung},\ and\
  \citenamefont {Vedral}}]{BTCV04}%
  \BibitemOpen
  \bibfield  {author} {\bibinfo {author} {\bibfnamefont {C.}~\bibnamefont
  {Brukner}}, \bibinfo {author} {\bibfnamefont {S.}~\bibnamefont {Taylor}},
  \bibinfo {author} {\bibfnamefont {S.}~\bibnamefont {Cheung}}, \ and\ \bibinfo
  {author} {\bibfnamefont {V.}~\bibnamefont {Vedral}},\ }\href
  {https://arxiv.org/abs/quant-ph/0402127} {\enquote {\bibinfo {title} {Quantum
  entanglement in time},}\ } (\bibinfo {year} {2004})\BibitemShut {NoStop}%
\bibitem [{\citenamefont {Fritz}(2010)}]{Fri10}%
  \BibitemOpen
  \bibfield  {author} {\bibinfo {author} {\bibfnamefont {T.}~\bibnamefont
  {Fritz}},\ }\href {\doibase 10.1088/1367-2630/12/8/083055} {\bibfield
  {journal} {\bibinfo  {journal} {New Journal of Physics}\ }\textbf {\bibinfo
  {volume} {12}},\ \bibinfo {pages} {083055} (\bibinfo {year}
  {2010})}\BibitemShut {NoStop}%
\bibitem [{\citenamefont {Fedrizzi}\ \emph {et~al.}(2011)\citenamefont
  {Fedrizzi}, \citenamefont {Almeida}, \citenamefont {Broome}, \citenamefont
  {White},\ and\ \citenamefont {Barbieri}}]{FAB+11}%
  \BibitemOpen
  \bibfield  {author} {\bibinfo {author} {\bibfnamefont {A.}~\bibnamefont
  {Fedrizzi}}, \bibinfo {author} {\bibfnamefont {M.~P.}\ \bibnamefont
  {Almeida}}, \bibinfo {author} {\bibfnamefont {M.~A.}\ \bibnamefont {Broome}},
  \bibinfo {author} {\bibfnamefont {A.~G.}\ \bibnamefont {White}}, \ and\
  \bibinfo {author} {\bibfnamefont {M.}~\bibnamefont {Barbieri}},\ }\href
  {\doibase 10.1103/PhysRevLett.106.200402} {\bibfield  {journal} {\bibinfo
  {journal} {Phys. Rev. Lett.}\ }\textbf {\bibinfo {volume} {106}},\ \bibinfo
  {pages} {200402} (\bibinfo {year} {2011})}\BibitemShut {NoStop}%
\bibitem [{\citenamefont {Costa}\ \emph {et~al.}(2018)\citenamefont {Costa},
  \citenamefont {Ringbauer}, \citenamefont {Goggin}, \citenamefont {White},\
  and\ \citenamefont {Fedrizzi}}]{CRG+18}%
  \BibitemOpen
  \bibfield  {author} {\bibinfo {author} {\bibfnamefont {F.}~\bibnamefont
  {Costa}}, \bibinfo {author} {\bibfnamefont {M.}~\bibnamefont {Ringbauer}},
  \bibinfo {author} {\bibfnamefont {M.~E.}\ \bibnamefont {Goggin}}, \bibinfo
  {author} {\bibfnamefont {A.~G.}\ \bibnamefont {White}}, \ and\ \bibinfo
  {author} {\bibfnamefont {A.}~\bibnamefont {Fedrizzi}},\ }\href {\doibase
  10.1103/PhysRevA.98.012328} {\bibfield  {journal} {\bibinfo  {journal} {Phys.
  Rev. A}\ }\textbf {\bibinfo {volume} {98}},\ \bibinfo {pages} {012328}
  (\bibinfo {year} {2018})}\BibitemShut {NoStop}%
\bibitem [{\citenamefont {Chen}\ and\ \citenamefont {Eisert}(2024)}]{CE24}%
  \BibitemOpen
  \bibfield  {author} {\bibinfo {author} {\bibfnamefont {S.-L.}\ \bibnamefont
  {Chen}}\ and\ \bibinfo {author} {\bibfnamefont {J.}~\bibnamefont {Eisert}},\
  }\href {\doibase 10.1103/PhysRevLett.132.220201} {\bibfield  {journal}
  {\bibinfo  {journal} {Phys. Rev. Lett.}\ }\textbf {\bibinfo {volume} {132}},\
  \bibinfo {pages} {220201} (\bibinfo {year} {2024})}\BibitemShut {NoStop}%
\bibitem [{\citenamefont {Egelhaaf}\ \emph {et~al.}(2025)\citenamefont
  {Egelhaaf}, \citenamefont {Pauwels}, \citenamefont {Quintino},\ and\
  \citenamefont {Uola}}]{EPQU25}%
  \BibitemOpen
  \bibfield  {author} {\bibinfo {author} {\bibfnamefont {S.}~\bibnamefont
  {Egelhaaf}}, \bibinfo {author} {\bibfnamefont {J.}~\bibnamefont {Pauwels}},
  \bibinfo {author} {\bibfnamefont {M.~T.}\ \bibnamefont {Quintino}}, \ and\
  \bibinfo {author} {\bibfnamefont {R.}~\bibnamefont {Uola}},\ }\href
  {http://iopscience.iop.org/article/10.1088/1751-8121/adb3ff} {\bibfield
  {journal} {\bibinfo  {journal} {Journal of Physics A: Mathematical and
  Theoretical}\ } (\bibinfo {year} {2025})}\BibitemShut {NoStop}%
\bibitem [{\citenamefont {Mohan}\ \emph {et~al.}(2019)\citenamefont {Mohan},
  \citenamefont {Tavakoli},\ and\ \citenamefont {Brunner}}]{Mohan_2019}%
  \BibitemOpen
  \bibfield  {author} {\bibinfo {author} {\bibfnamefont {K.}~\bibnamefont
  {Mohan}}, \bibinfo {author} {\bibfnamefont {A.}~\bibnamefont {Tavakoli}}, \
  and\ \bibinfo {author} {\bibfnamefont {N.}~\bibnamefont {Brunner}},\ }\href
  {\doibase 10.1088/1367-2630/ab3773} {\bibfield  {journal} {\bibinfo
  {journal} {New Journal of Physics}\ }\textbf {\bibinfo {volume} {21}},\
  \bibinfo {pages} {083034} (\bibinfo {year} {2019})}\BibitemShut {NoStop}%
\end{thebibliography}%

\newpage
\onecolumngrid

\appendix

\section{Qubit strategies}

In this appendix we report the explicit SOS decompositions of the bipartite inequalities that are mentioned in the main text.
In particular, for the Elegant Bell Inequality we also show explicitly the robustness bounds.
Similar techniques can be extend to the other inequalities as well.

\subsection{A family of Bell inequalities with binary inputs
and outputs}\label{app:family_binary}
Consider the family of Bell inequalities with binary inputs
and outputs introduced in~\cite{Le_2023} and in~\cite{barizien2023custom}. Recall that the Bell operator is parametrized by $\alpha$, $\beta$ and $\gamma$ :
\begin{align*}
    \mathcal{B}_{\alpha\beta\gamma}^{\text{NL}}(A_x,B_y)
    &= \cos(\alpha+\beta)\cos(\alpha + \gamma)A_0\otimes (\cos(\gamma)B_0-\cos(\beta)B_1) \\
    &+ \cos(\beta)\cos(\gamma)A_1\otimes (-\cos(\alpha+\gamma)B_0 +\cos(\alpha+\beta)B_1).
\end{align*}
This family includes the CHSH inequality for $\alpha=\gamma=0$ and $\beta=\pi$,
and many more interesting cases.
In~\cite{wooltorton2023deviceindependent} they showed that, under some conditions, reaching the quantum bound
\begin{equation*}
    b_q = \pm \sin(\alpha)\sin(\gamma-\beta)\sin(\alpha+\beta+\gamma)
\end{equation*}
self-tests the maximally entangled pair of qubits. 
They show this providing an SOS decomposition of the shifted Bell operator, which is the following
\begin{align}\label{eq:SOS_family}
    b_q\idd - \mathcal{B}_{\alpha\beta\gamma}^{\text{NL}}(A_x,B_y) = c_1P_1^\dagger P_1 + c_2P_2^\dagger P_2,
\end{align}
with coefficients
\begin{equation*} 
c_1 = -\frac{\cos(\gamma)\cos(\alpha+\gamma)}{2\sin(\alpha)},\quad c_2 = -\frac{\cos(\beta)\cos(\alpha+\beta)}{2\sin(\alpha)},
\end{equation*}
and polynomials
\begin{align*}
    P_1 &= \sin(\alpha)B_0 + \cos(\alpha+\beta)A_0 -\cos(\beta)A_1,\\
    P_2 &= \sin(\alpha)B_1 + \cos(\alpha+\gamma)A_0 -\cos(\gamma)A_1.
\end{align*}

Then, it is possible to prove that the unique optimal quantum strategy is achieved with a bipartite maximally entangled state, and projective observables which are unitarily equivalent to 
\begin{align*}
    & A_0 = \sigma_x, &B_0= \sin(\beta) \sigma_x + \cos(\beta) \sigma_y,\\
    & A_1 = \cos(\alpha) \sigma_x + \sin(\alpha) \sigma_y, &B_1=\sin(\gamma) \sigma_x + \cos(\gamma) \sigma_y.
\end{align*}
The formal proof of this statement can be found in the Appendix A of \cite{wooltorton2023deviceindependent}.
\subsubsection{Prepare-and-measure inequality}
In the prepare-and-measure game Alice will prepare four states $\rho_{a,x}$, that we can group as two quasi-observables
\begin{align*}
    \rho_0 = \rho_{0,0} - \rho_{1,0}, \qquad \rho_1 = \rho_{0,1} - \rho_{1,1}.
\end{align*}
We showed in the main text that these objects satisfy the same properties of observables, they are hermitian and their square is bounded by identity.
The prepare-and-measure Bell-like operator assumes the following form
\begin{align*}
    \mathcal{B}_{\alpha\beta\gamma}^{\text{PM}}(\tilde{\rho}_x,B_y)  
    = \frac{1}{2} &\Big[\cos(\alpha+\beta)\cos(\alpha + \gamma)\tilde{\rho}_0(\cos(\gamma)B_0-\cos(\beta)B_1) \\
    &+ \cos(\beta)\cos(\gamma)\tilde{\rho}_1(-\cos(\alpha+\gamma)B_0 +\cos(\alpha+\beta)B_1)\Big].
\end{align*}
Using the swap trick we can relate the new operator to the original non-local Bell operator:
\begin{align*}
    \tr\big[\mathcal{B}_{\alpha\beta\gamma}^{\text{PM}}(\tilde{\rho}_x,B_y)\big] = \bra{\phi_+} \mathcal{B}_{\alpha\beta\gamma}^{\text{NL}}(\tilde{\rho}_x,B_y)\ket{\phi_+}
\end{align*}
and consequently we inherit the SOS decomposition for the prepare and measure shifted Bell operator
\begin{align*}
    \tr\big[b_q\idd-\mathcal{B}_{\alpha\beta\gamma}^{\text{PM}}(\tilde{\rho}_x,B_y)\big] = \bra{\phi_+}  b_q\idd-\mathcal{B}_{\alpha\beta\gamma}^{\text{NL}}(\tilde{\rho}_x,B_y)\ket{\phi_+}
    \leq \sum_{\lambda = 1}^2 c_\lambda \bra{\phi_+}\tilde{P}_\lambda^2 \ket{\phi_+}  
\end{align*}
where the coefficients are the same as before, and the polynomials are expressed in terms of the quasi-observables
\begin{align*}
    \tilde{P}_1 &= \sin(\alpha)B_0 + \cos(\alpha+\beta)\tilde{\rho}_0 -\cos(\beta)\tilde{\rho}_1,\\
    \tilde{P}_2 &= \sin(\alpha)B_1 + \cos(\alpha+\gamma)\tilde{\rho}_0 -\cos(\gamma)\tilde{\rho}_1.
\end{align*}
We can then apply exactly the same reasoning of \cite{wooltorton2023deviceindependent} to retrieve the self-testing statement.

\subsection{Elegant Bell inequality}\label{app:EBI}

We illustrate the protocol by applying it to the elegant Bell inequality introduced in \cite{Gis07}.
First, we reproduce the known proof of the bound and the self-testing of the optimal strategy in the non-local scenario, using an appropriate SOS decomposition. We then demonstrate the robustness of the statement, showing that a near-optimal violation implies a strategy close to the optimal one. Finally, we translate these results to the prepare-and-measure version of the inequality.

\subsubsection{SOS decomposition}

The Bell operator corresponding to the elegant Bell inequality is defined as
\begin{align*}
    \mathcal{B}_{el}^{\text{NL}}\left(A_x,B_y\right) &=
    A_1 \otimes (B_1 + B_2 - B_3 - B_4) 
    + A_2 \otimes (B_1 - B_2 + B_3 - B_4) 
    + A_3 \otimes (B_1 - B_2 - B_3 + B_4) \\
    &= (A_1 + A_2 + A_3) \otimes B_1 
    + (A_1 - A_2 - A_3) \otimes B_2 
    + (-A_1 + A_2 - A_3) \otimes B_3
    + (-A_1 - A_2 + A_3) \otimes B_4.
\end{align*}

Optimal classical strategy achieves a value of $\beta_L = 6$, while quantum bound takes value $\beta_Q = 4 \sqrt{3}$ with the following strategy
\begin{align*}
    \ket{\psi} &= \frac{\ket{00}+\ket{11}}{\sqrt{2}}\\       A_1 = \sigma_X, \qquad   A_2 &= \sigma_Y,\qquad  A_3 = \sigma_Z, \\ B_1 = \frac{\sigma_X - \sigma_Y +\sigma_Z}{\sqrt{3}}\qquad B_2 = \frac{\sigma_X + \sigma_Y -\sigma_Z}{\sqrt{3}} &\qquad B_3 = \frac{-\sigma_X - \sigma_Y -\sigma_Z}{\sqrt{3}} \qquad B_4 = \frac{-\sigma_X + \sigma_Y +\sigma_Z}{\sqrt{3}} 
\end{align*}

In fact, this optimal strategy is unique (up to local isometries and complex conjugations).
More precisely, it can be shown that achieving the maximal violation of the elegant Bell inequality fixes all the anti-commutators of the measurement observables.
Let us consider the standard SOS decomposition for the shifted elegant Bell operator from \cite{Acin_2016}
\begin{align}\label{eq:SOS_EBI}
    \sum_{\lambda = 1}^4 P_\lambda^2 \leq 4 \sqrt{3} \mathds{1} - \mathcal{B}_{el}^{\text{NL}}\left(A_x,B_y\right)
\end{align}
with the following definition of the polynomials
\begin{align*}
    P_1 = \frac{+A_1+A_2+A_3}{\sqrt{3}} \otimes \idd - \idd \otimes B_1, &\qquad P_2 = \frac{+A_1-A_2-A_3}{\sqrt{3}} \otimes \idd - \idd \otimes B_2\\
    P_3 = \frac{-A_1+A_2-A_3}{\sqrt{3}} \otimes \idd - \idd \otimes B_3, &\qquad P_4 = \frac{-A_1-A_2+A_3}{\sqrt{3}} \otimes \idd - \idd \otimes B_4\\
\end{align*}

To simplify the notation, we will omit to explicitly write the tensor product with identity, and label the following sums of Alice's observables
\begin{align*}
    &\mathscr{A}_1 = \frac{+A_1+A_2+A_3}{\sqrt{3}} \qquad
    &\mathscr{A}_3 = \frac{-A_1+A_2-A_3}{\sqrt{3}} \\
    &\mathscr{A}_2 = \frac{+A_1-A_2-A_3}{\sqrt{3}} \qquad
    &\mathscr{A}_4 = \frac{-A_1-A_2+A_3}{\sqrt{3}} 
\end{align*}
such that the polynomials can be written as
\begin{equation}\label{eq:SOS_polynomial}
    P_\lambda = \mathscr{A}_\lambda - B_\lambda
\end{equation}

Taking the expectation value of Eq. \ref{eq:SOS_EBI}, we get
\begin{align*}
    \bra{\psi}\mathcal{B}_{el}^{\text{NL}}\left(A_x,B_y\right) \ket{\psi} \leq 4 \sqrt{3} - \sum_{\lambda = 1}^4 \bra{\psi}P_\lambda^2 \ket{\psi} \leq 4\sqrt{3}
\end{align*}
and the bound is saturated when all measurements are projective and squares in every sum equal to zero:
\begin{equation*}
    \sum_{\lambda = 1}^4 \bra{\psi}P_\lambda^2 \ket{\psi} = 0 \implies \bra{\psi}P_\lambda^2 \ket{\psi} = 0 \implies \| P_\lambda \ket{\psi} \| = 0 \qquad \forall \lambda.
\end{equation*}

Consider the following algebraic identities :
\begin{align}
    &\frac{4}{3} \{A_2,A_3\} = B_1^2 + B_2^2 - B_3^2 - B_4^2 - (B_1^2 - \mathscr{A}_1^2)- (B_2^2 - \mathscr{A}_2^2)+ (B_3^2 - \mathscr{A}_3^2)+ (B_4^2 - \mathscr{A}_4^2) \label{eq:EBI_id1} \\ 
    &\frac{4}{3} \{A_1,A_2\} = B_1^2 - B_2^2 - B_3^2 + B_4^2 - (B_1^2 - \mathscr{A}_1^2)+ (B_2^2 - \mathscr{A}_2^2)+ (B_3^2 - \mathscr{A}_3^2)- (B_4^2 - \mathscr{A}_4^2) \label{eq:EBI_id2} \\ 
    &\frac{4}{3} \{A_3,A_1\} = B_1^2 - B_2^2 + B_3^2 - B_4^2 - (B_1^2 - \mathscr{A}_1^2)+ (B_2^2 - \mathscr{A}_2^2)- (B_3^2 - \mathscr{A}_3^2)+ (B_4^2 - \mathscr{A}_4^2) \label{eq:EBI_id3}
\end{align}

The first vanishes if we consider projective observables; the second part also vanishes when evaluated on the state $\ket{\psi}$, because we can reformulate it as
\begin{equation*}
    (B_\lambda^2 - \mathscr{A}_\lambda^2) = (B_\lambda + \mathscr{A}_\lambda)(B_\lambda - \mathscr{A}_\lambda) = - (B_\lambda + \mathscr{A}_\lambda) P_\lambda.
\end{equation*}

Hence, we proved that a strategy that achieves the maximal bound $\beta_Q = 4 \sqrt{3}$ necessarily satisfy the following anti-commutation relations
\begin{align*}
    \{A_i,A_j\} \ket{\psi} = 0 \qquad \forall i \neq j \in [3]
\end{align*}
With very similar arguments we can prove that 
\begin{align*}
    \{B_i,B_j\} \ket{\psi} = \frac{1}{3} \idd \ket{\psi} \qquad \forall i \neq j \in [4]
\end{align*}

\subsubsection{Robust self-testing}

If we are $\epsilon$-close to maximally violate the inequality, we would like to infer that the strategy is $O(\sqrt{\epsilon})$-close to the projective observables for Alice and Bob that we showed in the previous section. In particular, we want to bound the anti-commutators of the operators. Let us start by rewriting an explicit version of the SOS decomposition without assuming projective observables :
\begin{equation*}
 \sum_\lambda P_\lambda^\dagger P_\lambda = 8 \idd - \frac{2}{\sqrt{3}} \mathcal{B}_{el}^{\text{NL}} + \sum_{i=1}^4 (-\idd + B_i^2) +\frac{4}{3} \sum_{j=1}^3 (-\idd + A_j^2)
\end{equation*}

Taking the expectation value over the state $\ket{\psi}$, and assuming that $\bra{\psi} \mathcal{B}_{el}^{\text{NL}} \ket{\psi} = 4\sqrt{3} - \epsilon$, we get the following 
\begin{equation*}
 0 \leq \sum_\lambda \bra{\psi} P_\lambda^\dagger P_\lambda \ket{\psi} = \frac{2}{\sqrt{3}} \epsilon + \bra{\psi}\left( -8 \idd + \sum_{i=1}^4 B_i^2 +\frac{4}{3} \sum_{j=1}^3 A_j^2 \right) \ket{\psi}
\end{equation*}

Rearranging the terms, and remembering that $A_j^2,B_i^2 \leq \idd$, we obtain the following inequality
\begin{equation*}
  8 - \frac{2}{\sqrt{3}} \epsilon \leq \bra{\psi}\left( \sum_{i=1}^4 B_i^2 +\frac{4}{3} \sum_{j=1}^3 A_j^2 \right) \ket{\psi} \leq 8
\end{equation*}

Since all the terms $0 \leq \bra{\psi} B_i^2 \ket{\psi} \leq 1$ and $0 \leq \bra{\psi} A_j^2 \ket{\psi} \leq 1$ are positive and bounded by $1$, in order to satisfy the lower bound it must be
\begin{equation}\label{eq:epsilon-proj}
    1- \frac{2}{\sqrt{3}}\epsilon \leq \bra{\psi} B_i^2 \ket{\psi} \leq 1 \qquad \forall i, \qquad
    1- \frac{2}{\sqrt{3}}\epsilon \leq \bra{\psi} A_j^2 \ket{\psi} \leq 1 \qquad \forall j
\end{equation}
meaning that, when we are close to the maximal violation of the inequality, the observables are close to be projective.

Now, the norm of the polynomials of the SOS polynomials is bounded by a function of $\epsilon$
\begin{equation*}
   \| P_\lambda \ket{\psi} \| = \| \mathscr{A}_\lambda - B_\lambda \ket{\psi}\| \leq \sqrt{\frac{2 \epsilon}{\sqrt{3}}}= \sqrt{\varepsilon} \qquad \forall \lambda
\end{equation*}

Since $A_i^2 \leq \idd$ and $B_i^2 \leq \idd$, the following norms are also bounded
\begin{align*}
   \| A_i  (\mathscr{A}_\lambda - B_\lambda)\ket{\psi}\| \leq \|(\mathscr{A}_\lambda - B_\lambda)\ket{\psi}\| \leq \sqrt{\varepsilon} \\
    \| B_i (\mathscr{A}_\lambda - B_\lambda)\ket{\psi}\| \leq \|(\mathscr{A}_\lambda -B_\lambda)\ket{\psi}\| \leq \sqrt{\varepsilon}
\end{align*}

Let us consider the identities \ref{eq:EBI_id1}-\ref{eq:EBI_id3}, and their norm :
\begin{align*}
    &\| \frac{4}{3} \{A_2,A_3\} \ket{\psi}\|= \|B_1^2 + B_2^2 - B_3^2 - B_4^2 - (B_1^2 - \mathcal{A}_1^2)- (B_2^2 - \mathcal{A}_2^2)+ (B_3^2 - \mathcal{A}_3^2)+ (B_4^2 - \mathcal{A}_4^2) \ket{\psi}\| \\
    &\| \frac{4}{3} \{A_1,A_2\}\ket{\psi}\|= \|B_1^2 - B_2^2 - B_3^2 + B_4^2 - (B_1^2 - \mathcal{A}_1^2)+ (B_2^2 - \mathcal{A}_2^2)+ (B_3^2 - \mathcal{A}_3^2)- (B_4^2 - \mathcal{A}_4^2)\ket{\psi}\| \\
    &\| \frac{4}{3} \{A_3,A_1\} \ket{\psi}\|= \| B_1^2 - B_2^2 + B_3^2 - B_4^2 - (B_1^2 - \mathcal{A}_1^2)+ (B_2^2 - \mathcal{A}_2^2)- (B_3^2 - \mathcal{A}_3^2)+ (B_4^2 - \mathcal{A}_4^2) \ket{\psi}\|
\end{align*}

Using the triangle inequality, we can cut every sum in two part.
As for the first part, we know that Bob's observables are close to be projective; more precisely, using Eq. \ref{eq:epsilon-proj} we can deduce that that :
\begin{align*}
    |\bra{\psi} ( B_i^2 + B_j^2 - B_k^2 - B_l^2) \ket{\psi}| \leq 2 \varepsilon
\end{align*}
Using Cauchy–Bunyakovsky–Schwarz inequality we find the following bound for the norm of the first addend
\begin{align*}
    &\| ( B_i^2 + B_j^2 - B_k^2 - B_l^2) \ket{\psi} \|^2 = \bra{\psi} ( B_i^2 + B_j^2 - B_k^2 - B_l^2)^2 \ket{\psi} \leq \bra{\psi} ( B_i^2 + B_j^2 - B_k^2 - B_l^2) \ket{\psi}^2 \leq 4 \varepsilon^2
    \\
    & \implies \| ( B_i^2 + B_j^2 - B_k^2 - B_l^2) \ket{\psi} \| \leq 2 \varepsilon \qquad \forall i,j,k,l \in [4].
\end{align*}

Focusing on the second part, for every $\lambda \in [4]$ we can bound it using again the triangle inequality
\begin{align*}
 \| \pm (B_\lambda^2 - \mathcal{A}_\lambda^2)  \ket{\psi} \|
    \leq \| \pm B_\lambda (B_\lambda - \mathcal{A}_\lambda) \ket{\psi} \| + \|\pm \mathcal{A}_\lambda (B_\lambda - \mathcal{A}_\lambda)  \ket{\psi} \|
    \leq (1+\sqrt{3})\sqrt{\varepsilon}.
\end{align*}

Putting everything back together, we can bound the norm of the anti-commutators
\begin{align*}
    &\| \frac{4}{3} \{A_i,A_j\}  \ket{\psi} \| \leq 2\varepsilon + 4(1+\sqrt{3})\sqrt{\varepsilon} \qquad \forall i \neq j \in [3].
\end{align*}

In conclusion, we proved that the anti-commutators of Alice-observable are $\epsilon$-close to the optimal case :
\begin{align*}
    &\| \{A_i,A_j\} \ket{\psi}\| \leq \frac{3}{2} \varepsilon + 3 (1+\sqrt{3})\sqrt{\varepsilon}
    \qquad \forall i \neq j \in [3]
\end{align*}

\subsubsection{Prepare-and-measure elegant Bell inequality}

In the associated prepare-and-measure scenario, Alice has $|\mathcal{X}||\mathcal{A}|=6$ preparations, Bob has $|\mathcal{Y}|=3$ dichotomic measurements, and they can communicate with a channel of dimension $2$.
It is beneficial to think of Alice's set of qubits as three pairs ($a \in [2]$) labelled by $x \in [3]$, that can be fully characterised by 6 angles $\omega_i$ and $\psi_{ij}$ (up to global rotations) :
\begin{align}\label{eq:EBI_quasiobv}
    & \tilde{\rho}_1 = \rho_{01} - \rho_{11} = \Vec{v}_1 \cdot \Vec{\sigma} & \norm{\Vec{v}_1} = \cos(\omega_1)\\
    & \tilde{\rho}_2 = \rho_{02} - \rho_{12} = \Vec{v}_2 \cdot \Vec{\sigma} & \norm{\Vec{v}_2} = \cos(\omega_2)\\
    & \tilde{\rho}_3 = \rho_{03} - \rho_{13} = \Vec{v}_3 \cdot \Vec{\sigma} & \norm{\Vec{v}_3} = \cos(\omega_3)\label{eq:EBI_quasiobv_fine}
\end{align}
\begin{align*}
    &\frac{1}{2} \{\tilde{\rho}_1,\tilde{\rho}_2\} = \Vec{v}_1 \cdot \Vec{v}_2 \mathds{1} = \cos(\omega_1) \cos(\omega_2) \cos(\psi_{12}) \mathds{1}\\
    &\frac{1}{2} \{\tilde{\rho}_2,\tilde{\rho}_3\} = \Vec{v}_2 \cdot \Vec{v}_3 \mathds{1} = \cos(\omega_2) \cos(\omega_3) \cos(\psi_{23}) \mathds{1}\\
    &\frac{1}{2} \{\tilde{\rho}_3,\tilde{\rho}_1\} = \Vec{v}_3 \cdot \Vec{v}_3 \mathds{1} = \cos(\omega_3) \cos(\omega_1) \cos(\psi_{31}) \mathds{1}
\end{align*}

The elegant Bell inequality for the prepare and measure scenario is the following :
\begin{align*}
    \text{tr}\left[\mathcal{B}_{el}^{\text{PM}}\left(\tilde{\rho}_x,B_y\right)\right]
    &= \frac{1}{2}
    \text{tr}\left[\left( \tilde{\rho}_1 \left(B_1 + B_2 - B_3 - B_4\right) + \tilde{\rho}_2 \left(B_1 - B_2 + B_3 - B_4\right) + \tilde{\rho}_3 \left(B_1 - B_2 - B_3 + B_4\right)\right)\right]
\end{align*}

Using the swap trick, we can turn traces into bra-ket products, and retrieve the form of the non-local inequality
\begin{align*}
    \text{tr}\left[\mathcal{B}_{el}^{\text{PM}}\left(\tilde{\rho}_x,B_y\right)\right] = 
    \bra{\phi^+} \mathcal{B}_{el}^{\text{NL}}\left(\tilde{\rho}_x,B_y\right) \ket{\phi^+}
\end{align*}
and its associated SOS decomposition
\begin{align*}
     \frac{\sqrt{3}}{2} \sum_{\lambda = 1}^4 \bra{\phi^+} \Tilde{P}_\lambda^2\left(\tilde{\rho}_x,B_y\right) \ket{\phi^+} \leq 4 \sqrt{3} - \bra{\phi^+} \mathcal{B}_{el}^{\text{NL}}\left(\tilde{\rho}_x,B_y\right) \ket{\phi^+}
\end{align*}
where the shared state is fixed to be the maximally entangled bipartite state, and instead of Alice's observables we have the quasi-observables defined in Eqs.~\ref{eq:EBI_quasiobv}-\ref{eq:EBI_quasiobv_fine}. Notice that these quasi-observables $\Tilde{\rho}_x$ satisfy the essential properties of an observable that we need for the SOS decomposition
\begin{align*}
    \text{tr}(\Tilde{\rho}_x) = 0 \qquad \Tilde{\rho}_x^2 = \|\Vec{v}_x\|^2 \idd \leq \idd
\end{align*}
and the last inequality is tight if and only if the states are pure (equivalent to the projectivity condition for observables).

The maximal violation of $\mathcal{B}_{el}^{\text{NL}}\left(\tilde{\rho}_x,B_y\right)$ is achieved with a maximally entangled bipartite state, and imposes the following constraints on Alice's quasi-observables
\begin{equation*}
    \tilde{\rho}_i^2 = \idd \qquad \forall i \in [3], \qquad
    \{\tilde{\rho}_i,\tilde{\rho}_j\}\ket{\phi_+} = 0 \qquad \forall i \neq j \in [3]
\end{equation*}
which is equivalent to
\begin{equation*}
    \omega_i = \pi \qquad \forall i \in [3], \qquad
    \psi_{ij} = \frac{\pi}{2} \qquad \forall i \neq j \in [3].
\end{equation*}
Hence, three quasi-observables have exactly the properties of three mutually anti-commuting qubit observables. Since, states Alice prepares were characterized up to global rotation, this implies that there exists a unitary matrix $U$ such that
\begin{equation}
    U\tilde{\rho}_1U^\dagger = \sigma_z, \quad U\tilde{\rho}_2U^\dagger = \sigma_x, \quad U\tilde{\rho}_3U^\dagger = \pm\sigma_y
\end{equation}
The same reasoning for the self-testing in the non-local case can be applied to further characterize Bob's measurements. 

The robustness is directly inherited from the non-local proof, with the difference that we do not need to consider a deviation between the real shared state and the maximally entangled state, since in our case the latter is just a mathematical tool.
To be more precise, for a violation of $\text{tr}\left[\mathcal{B}_{el}^{\text{PM}}\left(\tilde{\rho}_x,B_y\right)\right] = 4\sqrt{3} - \epsilon$, we can infer
\begin{align*}
    & 1- \varepsilon \leq \bra{\phi^+} \Tilde{\rho}_i^2\ket{\phi^+} \leq 1 
    && 1- \varepsilon \leq \cos(\omega_i) \leq 1
    &&\forall i\in [3]\\
    & \| \{ \Tilde{\rho}_i, \Tilde{\rho}_j\}\ket{\phi^+} \| \leq \frac{3}{2}\varepsilon + 3(1+\sqrt{3})\sqrt{\varepsilon} 
    && \cos(\psi_{ij}) \leq \frac{3}{2} (1+\sqrt{3})\sqrt{\varepsilon} + o(\varepsilon)
    && \forall i \neq j \in [3] 
\end{align*}
which corresponds to almost pure and pairwise orthogonal states ($\omega_i \approx \pi$), and almost anti-commuting quasi-observables ($\psi_{ij} \approx \pi/2$).

\subsection{The chained Bell inequality}\label{app:chained}

The chained Bell inequalities are another generalisation of CHSH, where we still consider two parties and dichotomic observables, but a larger number of measurement per party:
\begin{align}\label{CBI}
    \mathcal{B}^{ch(n)} = \sum_{i=1}^n (A_i \otimes B_i + A_{i+1}\otimes B_i) \qquad \text{where } A_{n+1} \equiv - A_1
\end{align}
The local and Tsirelson bounds are
\begin{align*}
        \beta_L=\max_{\mathcal{L}} \langle \mathcal{B}^{ch,n} \rangle = 2n-2, \qquad \beta_Q=\max_{\mathcal{Q}} \langle \mathcal{B}^{ch,n} \rangle = 2n \cos(\pi/2n)
\end{align*}
and the latter is achieved by using  maximally entangled bipartite state, and the following observables:
\begin{align*}
    &A_i = \sin(\phi_i)\sigma_x + \cos(\phi_i)\sigma_z,
    &&\phi_i=[(i-1)\pi]/n,\\
    &B_i = \sin(\phi'_i)\sigma_x + \cos(\phi'_i)\sigma_z,
    &&\phi_i'=[(2i-1)\pi]/2n,
\end{align*}
which are $n$ measurements per party maximally spread on one single plane of the Bloch sphere.

\subsubsection{Prepare-and-measure chained inequality}
To characterize the $2 n$ states Alice prepares, we need several angles: firstly, $n$ angles describing the possible non-orthogonality of the pairs
\begin{align*}
    \rho_x= \rho_{0x} -\rho_{1x} = \Vec{x}_i\cdot \Vec{\sigma} \qquad \norm{\Vec{x}_i} = \cos(\omega_x) \qquad \forall i \in [1,n]
\end{align*}
and $n$ angles describing the angles between the different \textit{basis} containing the pairs
\begin{align*}
    \frac{1}{2}\{\rho_i,\rho_j\} = \Vec{x}_i \cdot \Vec{x}_j \mathds{1} = \cos(\omega_i)\cos(\omega_j) \cos(\psi_{ij}) \mathds{1}
\end{align*}

The prepare-and-measure version of the Bell functional is
\begin{align}\label{CBI_PM}
    \text{tr}[\mathcal{B}^{ch,PM}] &=
    \frac{1}{2} \text{tr} \sum_i \Big[
    \rho_i B^T_i + \rho_{i+1} B^T_i
    \Big] \qquad \text{where } \rho_{n+1} \equiv -\rho_{1}\notag\\
    &=\bra{\phi_+}
    \sum_i \Big[ \rho_i \otimes B^T_i + \rho_{i+1} \otimes B^T_i\Big]
    \ket{\phi_+}
\end{align}

In \cite{SASA16} the authors prove the self testing property of the chained Bell inequalities, using the SOS decomposition of first and second order (first alone is not enough).
Since everything is formally the same, we can blindly translate the SOS decomposition and the self-testing statement.
From the first order SOS decomposition in \cite{SASA16} it follows that the functional is going to be maximised if and only if $\rho_i^2 = B_j^2 = \mathds{1}$, which means
\begin{align*}
    &\rho_i^2 = \mathds{1}  &\implies & \omega_i= \pi \qquad \forall i,\\
    &B_j^2 = \mathds{1}  &\implies & \text{projective measurements}\qquad \forall j.
\end{align*}
Then, to fix the anti-commutators as well we will need the second order of the SOS decomposition.

To do so we use the following results, which are proven in Appendix B of \cite{SASA16}, and are valid when we reach the maximal quantum violation of the chained Bell inequality:
\begin{itemize}
    \item n even:
    \begin{align*}
    &\{A_1, A_{\frac{n}{2}+1}\}=0\\
    &A_i = \sin(\phi_i)  A_{\frac{n}{2}+1} +\cos(\phi_i) A_1\\
    &B_i = \sin(\phi_i')  B_{\frac{n}{2},\frac{n}{2}+1} + \cos(\phi_i') B_{1,-n}
    \end{align*}
    \item n odd:
    \begin{align*}
    &\{A_1, A_{\frac{n+1}{2}}+ A_{\frac{n+3}{2}}\}=0\\
    &A_i = \sin(\phi_i)  A_{\frac{n+1}{2},\frac{n+3}{2}} + \cos(\phi_i) A_1\\
    &B_i = \sin(\phi'_i)  B_{\frac{n+1}{2}} + \cos(\phi'_i) B_{1,-n}    
    \end{align*}    
\end{itemize}
with the following definitions:
\begin{align*}
B_{i-1,i} \equiv \frac{B_i+B_{i-1}}{2 \cos(\frac{\pi}{2n})} \qquad
    A_{i-1,i} \equiv \frac{A_i+A_{i-1}}{2 \cos(\frac{\pi}{2n})}
\end{align*}

From this we can automatically check that, for both n odd or even
\begin{align*}
    \{A_i,A_j\} = (s_i s_j + c_i c_j) 2 \mathds{1} = \cos\Big((i-j)\frac{\pi}{n}\Big) 2\mathds{1} \implies \psi_{ij} = (i-j)\frac{\pi}{n}
\end{align*}
and for the $B$'s is completely symmetric.

\section{Qudit strategies}\label{app:qudits}

\subsection{Weaker condition for qudits}\label{app:qudits-weakercondition}
In this subsection we want to proof this very useful bound for the quasi-observables:
\begin{align}\label{eq:weaker-condition}
    \text{tr} \left( \sum_{n=1}^{d-1} \Tilde{\rho}^{(n)} \Tilde{\rho}^{(-n)}\right) \leq d (d-1)
\end{align}

By simply expanding the definition of the quasi-observable we obtain the following:
\begin{align}
    \text{tr} \left( \sum_{n=1}^{d-1} \Tilde{\rho}^{(n)} \Tilde{\rho}^{(-n)}\right) 
    &= \sum_{n=1}^{d-1} \sum_{a=0}^{d-1} \sum_{b=0}^{d-1} w^{n(a-b)}  \text{tr}(\rho_a \rho_b) =\\
    &= (d-1)  \sum_{a=0}^{d-1} \sum_{b=0}^{d-1} \delta_{a,b} \text{tr}(\rho_a \rho_b) =\\
    & = (d-1) \sum_{a=0}^{d-1} \text{tr}(\rho_a^2) \leq d (d-1)
\end{align}
where in the second line we used the definition of the Kronecker delta as an integer sum
\begin{equation}
    \sum_{k=1}^D w^{k n} = D \delta_{n,0}
\end{equation}
and in the last line we used the simple fact that $\text{tr}(\rho^2)\leq 1$ for every quantum state, where the equality is satisfied only for pure states.
Hence, the only way to saturate this bound is for all of these states to be pure.

\subsection{Generalised CHSH inequality}\label{app:qudit-genCHSH}
Recall the functional presented in \cite{KST+19}:
\begin{equation*}
    \mathcal{B}_d^{\text{KST+}}= \frac{1}{d^3} \sum_n \lambda_n \sum_{x,y} w^{nxy} A_x^{(n)} \otimes  B_y^{(n)}
\end{equation*}

We define the trace of the PM functional as:
\begin{align*}
    \text{tr}(\mathcal{B}_d^{\text{KST+},PM})&=\frac{1}{d} \text{tr} \left( \frac{1}{d^3} \sum_n \lambda_n \sum_{x,y} w^{nxy} \rho_x^{(n)} \cdot  M_y^{(n)} \right)=\\
    &=\bra{\phi^+}\frac{1}{d^3} \sum_n \lambda_n \sum_{x,y} w^{nxy} \rho_x^{(n)} \otimes  M_y^{(n)} \ket{\phi^+}=\\
    &=\bra{\phi^+}\frac{1}{d^3} \sum_n \sum_j \rho_j^{(n)} \otimes  C_j^{(n)} \ket{\phi^+}
\end{align*}
where we used the same notation of \cite{KST+19}
\begin{align*}
    C_j^{(n)} = \frac{\lambda_n}{\sqrt{d}} \sum_k w^{njk} M_k^{(n)}
\end{align*}
From this point, we can reproduce the same manipulations presented in \cite{KST+19}, where instead of the observable $A_x^{(n)}$ we have the quasi-observable $\Tilde{\rho}_x^{(n)}$.
There are only two technical points to overcome, that we briefly discuss.
First of all, it is not true anymore that $\Tilde{\rho}_j^{(0)} = \mathds{1}$, but the following bound is easy to verify:
\begin{equation*}
    \frac{1}{d^2} \sum_j \bra{\phi^+} \rho_j^{(0)} \otimes \mathds{1} \ket{\phi^+} = \frac{1}{d^2} \frac{1}{d} \sum_j \text{tr}(\rho_j^{(0)}) \leq  \frac{1}{d}
\end{equation*}
Secondly, it is not true that $\Tilde{\rho}_j^{(n)} \Tilde{\rho}_j^{(-n)} \leq \mathds{1}$, but we can use the weaker conditions for qubits that we proved above (Eq. \ref{eq:weaker-condition}):
\begin{align*}
    \frac{1}{d^2\sqrt{d}}\sum_{n=1}^{\frac{d-1}{2}}
    \bra{\phi^+}&\left( \sum_j \rho_j^{(n)} \rho_j^{(-n)} \otimes \mathds{1} + \mathds{1}\otimes \sum_k M_k^{(n)} M_k^{(-n)} - \sum_j [L_j^{(n)}]^\dagger L_j^{(n)} \right)\ket{\phi^+}\leq\\
    &\leq \frac{d-1}{2 d \sqrt{d}} + \frac{1}{d^3\sqrt{d}} \sum_j\text{tr}
    \left( \sum_{n=1}^{\frac{d-1}{2}} \rho_j^{(n)} \rho_j^{(-n)} \right) \\
    &\leq \frac{d-1}{ d \sqrt{d}}
     \end{align*}

\subsection{Generalised Chained inequality}\label{app:qudit-genChained}
The trace of the PM version of the functional presented in \cite{SAT+17} is:
\begin{align}
    \text{tr}(\mathcal{B}_d^{\text{SAT+},PM}) &= \frac{1}{d} \text{tr} \left(\sum_{i=1}^m\sum_{l=1}^{d-1} \Tilde{\rho}_i^{(l)} \cdot (a_l M_i^{(d-l)} + a_l^* M_{i-1}^{(d-l)})\right)=\notag\\
    &=\bra{\phi^+} \sum_{i=1}^m\sum_{l=1}^{d-1} \Tilde{\rho}_i^{(l)} \otimes (a_l M_i^{(d-l)} + a_l^* M_{i-1}^{(d-l)}) \ket{\phi^+}=\\
    &=\bra{\phi^+} \sum_{i=1}^m\sum_{l=1}^{d-1} \Tilde{\rho}_i^{(l)} \otimes \overline{M_i^l} \ket{\phi^+}
\end{align}
which is formally equivalent to its corresponding non-local version, calculated on the maximally entangled state.
We directly adapt the SOS decomposition:
\begin{align*}
    \beta_Q \mathds{1} - \mathcal{B}_d^{\text{SAT+},PM} = \frac{1}{2} \sum_{i=1}^m \sum_{k=1}^{d-1} P_{ik}^\dagger P_{ik} + \frac{1}{2} \sum_{i=1}^{m-2} \sum_{k=1}^{d-1} T_{ik}^\dagger T_{ik}
\end{align*}
with
\begin{align*}
    P_{ik}=\mathds{1} \otimes \overline{M_i^k} - (\Tilde{\rho}_i^{(k)})^\dagger \otimes \mathds{1} \qquad \text{and} \qquad T_{ik}=f(M_2^{d-k}, M_{i+2}^{d-k}, M_{i+3}^{d-k})
\end{align*}
We need to check if the quasi-observables $\Tilde{\rho}_i^{(n)}$ are still allowing us to retrieve the same bound for the observables $A_i^n$.

Once again, we need to use the weaker condition (Eq. \ref{eq:weaker-condition}) to recover the bound:
\begin{align*}
    \bra{\phi^+}  \sum_{i=1}^{m} \sum_{k=1}^{d-1} \rho_i^{(k)}\rho_i^{(-k)} \otimes \mathds{1} \ket{\phi^+} =
    \frac{1}{d} \sum_{i=1}^{m}\text{tr}\left( \sum_{k=1}^{d-1} \rho_i^{(k)}\rho_i^{(-k)}\right) \leq \frac{1}{d}m d (d-1) = m (d-1)
\end{align*}

\section{Connection to games}\label{app:games}
Let us start by motivating more explicitly the equivalence between CHSH and QRAC $2^2\to1$.
The most intuitive way to see the correspondence is to explicitly write the truth table for the winning conditions of the two games, reported in Tab. \ref{tab:truthtables}.

\begin{table}[!ht]
\centering
\begin{minipage}{0.33\linewidth}
\centering
\begin{equation*}
    x \cdot y = a \oplus b
\end{equation*}
\begin{tabular}{cccc|c}
 $x$ & $a$ & & $y$ & $b$ \\
\toprule
0 & 0 & & 0 & 0\\
0 & 0 & & 1 & 0\\
\midrule
0 & 1 & & 0 & 1\\
0 & 1 & & 1 & 1\\
\midrule
1 & 0 & & 0 & 0\\
1 & 0 & & 1 & 1\\
\midrule
1 & 1 & & 0 & 1\\
1 & 1 & & 1 & 0\\
\end{tabular}
\end{minipage}%
\begin{minipage}{0.33\linewidth}
\centering
\begin{equation*}
    g = x_y
\end{equation*}
\begin{tabular}{cccc|c}
 $x_0$ & $x_1$ & & $y$ & $g$ \\
\toprule
0 & 0 & & 0 & 0\\
0 & 0 & & 1 & 0\\
\midrule
0 & 1 & & 0 & 0\\
0 & 1 & & 1 & 1\\
\midrule
1 & 0 & & 0 & 1\\
1 & 0 & & 1 & 0\\
\midrule
1 & 1 & & 0 & 1\\
1 & 1 & & 1 & 1\\
\end{tabular}
\end{minipage}
\begin{minipage}{0.33\linewidth}
\centering
\begin{equation*}
    x \cdot y = a \oplus b \qquad \wedge \qquad b = x'_y
\end{equation*}
\begin{tabular}{ccccccc|c}
 $x$ & $a$& & $x_0'$ & $x_1'$ & & $y$ & $b$ \\
\toprule
0 & 0 & & 0 & 0 & & 0 & 0\\
0 & 0 & & 0 & 0 & & 1 & 0\\
\midrule
0 & 1 & & 1 & 1 & & 0 & 1\\
0 & 1 & & 1 & 1 & & 1 & 1\\
\midrule
1 & 0 & & 0 & 1 & & 0 & 0\\
1 & 0 & & 0 & 1 & & 1 & 1\\
\midrule
1 & 1 & & 1 & 0 & & 0 & 1\\
1 & 1 & & 1 & 0 & & 1 & 0\\
\end{tabular}
\end{minipage}
\caption{The truth table of the winning conditions of CHSH (on the left) and QRAC $2^2\to 1$ (in the center). On the right, a relabelling of CHSH which is equivalent to QRAC $2^2\to 1$.}
\label{tab:truthtables}
\end{table}

In both tables we identified $4$ blocks, corresponding to the $4$ different preparations of Alice. We see that, except for the first block corresponding to $\rho_{ax}=\rho_{x_0 x_1}=\rho_{00}$, the winning condition differs in the two cases.
However, the two becomes the same if we allow for a change of variable for the labels of the prepared state $(x,a)$ or $(x_0,x_1)$ (see last table of Tab. \ref{tab:truthtables}).
We can apply exactly the same reasoning to the elegant Bell inequality and the QRAC $3^2 \to 1$.

\end{document}